\tikzset{every state/.style={minimum size=0pt}}
\tikzset{
    position/.style args={#1:#2 from #3}{
        at=(#3.#1), anchor=#1+180, shift=(#1:#2)
    }
}
\newproof{claim}{Claim}
\newtheorem{theorem}{Theorem}[section]
\newtheorem{lemma}[theorem]{Lemma}
\newtheorem{proposition}[theorem]{Proposition}
\newtheorem{remark}[theorem]{Remark}
\newproof{proof}{Proof}
\newcommand{\Nat}{{\mathbb{N}}}
\renewcommand{\setminus}{\smallsetminus}
\newcommand{\size}[1]{\mathopen{\mid}#1\mathclose{\mid}}
\newcommand{\step}[1]{\xrightarrow{\!\!#1\!\!}}
\newcommand{\stepR}[1]{\step{#1}_R}
\newcommand{\eqdef}                     
  {\stackrel{\scriptscriptstyle \mathrm{def}}{=}}
\newcommand{\equivdef}                  
  {\stackrel{\scriptscriptstyle \mathrm{def}}{\iff}}
\newcommand{\embeds}{\sqsubseteq}
\newcommand{\subword}{\embeds}
\newcommand{\upint}{{{\downtouparrow}}}
\newcommand{\downint}{{{\uptodownarrow}}}
\newcommand{\down}{{{\downarrow}}}
\newcommand{\up}{{{\uparrow}}}
\newcommand{\nDFA}{{n_{\textrm{D}}}}
\newcommand{\nUFA}{{n_{\textrm{U}}}}
\newcommand{\nNFA}{{n_{\textrm{N}}}}
\newcommand{\nAFA}{{n_{\textrm{A}}}}
\newcommand{\nNDFA}{{n_{\mathrm{N\&D}}}}
\newcommand{\init}{{\textrm{init}}}
\newcommand{\rank}{{\textit{rank}}}
\newcommand{\emptyword}{\varepsilon}
\DeclareRobustCommand{\rvdots}{%
  \vbox{
    \baselineskip4\p@\lineskiplimit\z@
    \kern-\p@
    \hbox{.}\hbox{.}\hbox{.}
  }}
\begin{document}

\title{On the state complexity of closures and interiors of regular languages with
subwords and superwords}
\author[cmi,lsv]{P.~Karandikar \fnref{tcs}}
\author[bay]{M. Niewerth \fnref{dfg}}
\author[lsv]{Ph. Schnoebelen \fnref{anr}}

\address[cmi]{Chennai Mathematical Institute}
\address[bay]{University of Bayreuth}
\address[lsv]{LSV, ENS Cachan, CNRS}

\fntext[tcs]{Partially funded by Tata Consultancy Services.}
\fntext[dfg]{Supported by Grant MA 4938/21 of the DFG.}
\fntext[anr]{Supported by Grant ANR-11-BS02-001.}

\begin{abstract}
The downward and upward closures of a regular language $L$
are obtained by collecting all the subwords and
superwords of its elements, respectively. The downward and upward interiors of $L$ are obtained dually by
collecting words having all their subwords and superwords in $L$, respectively.
We provide lower and upper bounds on the size of the smallest automata
recognizing these closures and interiors. We also consider the
computational complexity of decision problems for closures of regular
languages.
\end{abstract}



\begin{keyword}
Finite automata and regular languages;
Subwords and superwords;
State complexity;
Combined operations;
Closures and interiors of regular languages.
\end{keyword}

\maketitle



\section{Introduction}
\label{sec-intro}

State complexity is a standard measure of the descriptive complexity of
regular languages. The most common state complexity problems ask, given a
regularity-preserving operation $\operatorname{f}$ on languages, to bound the size of
an automaton recognizing $\operatorname{f}(L)$ when $L$ is recognized by an
$n$-state automaton. We refer to~\cite{holzer2003b,yu2005} for a survey of the main
known results in the area.

In this article, we consider language operations based on subwords.
Recall that a (scattered) subword of some word $x$ is a word obtained from $x$ by
removing any number of letters at arbitrary positions in $x$, see formal definitions in Section~\ref{sec-basics}. Symmetrically, a superword is
obtained by inserting letters at arbitrary positions. Subwords and
superwords occur in many areas of computer science, from searching in texts
and databases~\cite{baezayates91} to the theory of codes~\cite{ito2000},
computational linguistics~\cite{paun97}, and DNA computing~\cite{kari99}.

For a language $L\subseteq \Sigma^*$, we write $\down L$ for the set of all
its subwords and $\up L$ for the set of all its superwords (in $\Sigma^*$)
and call them the \emph{downward closure} and \emph{upward closure} of $L$, respectively. Dual
to closures are \emph{interiors}. The \emph{upward interior} and
\emph{downward interior} of $L$, denoted $\upint L$ and $\downint L$, are
the largest upward-closed and downward-closed sets included in
$L$.
It
has been known since~\cite{haines69} that $\down L$ and $\up L$ are regular
for any $L$. Then $\downint L$ and $\upint L$ are regular too by duality, as
expressed in the following equalities:
\begin{xalignat}{2}
\label{eq-interiors-are-duals}
\downint L&=\Sigma^*\setminus \up(\Sigma^*\setminus L)
\:,
&
\upint L&=\Sigma^*\setminus \down(\Sigma^*\setminus L)
\:.
\end{xalignat}

Computing closures and interiors has several applications in
computer-aided reasoning~\cite{KS-fosubw} and program verification.
Computing closures is an essential ingredient in
the verification of safety properties of channel systems
-- see~\cite{abdulla-forward-lcs,hss-lmcs} -- while computing interiors
is required for the verification of their game-theoretical
properties~\cite{BS-fmsd2013}. More generally, the
regularity of upward and downward closures make them good
overapproximations of more complex languages -- see~\cite{habermehl2010,bachmeier2015,zetzsche2015} -- and interiors can be
used as regular underapproximations.

Recently Gruber \textit{et al}.\  explicitly raised the issue
of the state complexity of downward and upward closures of regular
languages~\cite{gruber2007,gruber2009} (less explicit precursors exist,
for example,~\cite{birget93}). Given an $n$-state automaton $A$ that recognizes $L$,
 automata $A^\down$ and $A^\up$ that recognize $\down L$ and,
$\up L$ respectively can be obtained by simply adding extra transitions to
$A$. However, when $A$ is a deterministic automaton (a DFA), the resulting
$A^\down$ and $A^\up$ are in general not deterministic (are NFAs), and their
determinization may entail an exponential blowup. With $n$
denoting the number of states of $A$, Gruber \textit{et al}.\  proved a
$2^{\Omega(\sqrt{n}\log n)}$ lower bound on the number of states of any DFA
recognizing $\down L$ or $\up L$~\cite{gruber2009}, to be compared with the $2^n-1$ upper bound that
comes from the simple closure+determinization method.

Okhotin improved on these results by showing an improved
$2^{\frac{n}{2}-2}$ lower bound for $\down L$. He also established the
exact state complexity for $\up L$ by proving a $2^{n-2}+1$ upper bound
and showing that this is tight~\cite{okhotin2010}.

All the above lower bounds assume an unbounded alphabet, and Okhotin showed
that his $2^{n-2}+1$ state complexity for $\up L$ requires $n-2$ distinct
letters. He then considered the case of languages over a \emph{fixed
  alphabet} and, in the 3-letter case, he demonstrated
exponential $2^{\sqrt{2n+30}-6}$ and $\frac{1}{5} 4^{\sqrt{n/2}}
n^{-\frac{3}{4}}$ lower bounds for $\down L$ and $\up
L$ respectively~\cite{okhotin2010}. In the $2$-letter case, H\'eam had previously proved
an $\Omega(r^{\sqrt{n}})$ lower bound for $\up L$, here with
$r=(\frac{1+\sqrt 5}{2})^{\frac{1}{\sqrt 2}}$~\cite{heam2002}. Regarding
$\down L$, the question whether its state complexity is
exponential even when $\size{\Sigma}=2$ was left open (note that the one-letter case
is trivial).\\

The state complexity of interiors has not yet been considered in the
literature. When working with DFAs, complementation is essentially free so
that computing interiors reduces to computing closures, thanks to duality.
However, when working with NFAs, the simple complement+closure+complement
method comes with a quite large $2^{2^n}$ upper-bound on the number of
states of an NFA that recognizes $\upint L$ or $\downint L$ -- it actually yields
DFAs -- and one would like to improve on this, or to prove a matching lower
bound. As we explain in Section~\ref{ssec-afas}, this is related to the
state complexity of closures when working with alternating automata (AFAs),
a question recently raised in~\cite{holub2014}. \\

\noindent \emph{Our contribution.} Regarding closures with DFAs, we prove
in Section~\ref{sec-closure} a tight $2^{n-1}$ state complexity for
downward closure and show that its tightness requires unbounded alphabets.
In Section~\ref{sec-2-letter} we prove an exponential lower bound on both
$\down L$ and $\up L$ in the case of a two-letter alphabet, answering the
open question raised above.
Regarding interiors on NFAs, we show in Section~\ref{sec-interior}
doubly-exponential lower bounds for downward and upward interiors, assuming
an unbounded alphabet. We also provide improved upper bounds, lower than
the naive $2^{2^n}$ but still doubly exponential. Table~\ref{tab-summary}
shows a summary of the results.
Finally, Section~\ref{sec-ufa} proves lower bounds on unambiguous
automata for the witness languages used in Section~\ref{sec-closure}, and Section~\ref{sec-decision-pbs} considers the computational
complexity of some basic decision problems for sets of subwords or superwords
described by automata.
\\

\begin{table}[htb]
\centering
\caption[caption]{A summary of the results on state complexity for closures and interiors, where $\psi(n)$ (${\leq} 2^{2^n}$) is the $n$th Dedekind's number{\protect\footnotemark}.}
\begin{tabular}{@{}llc@{}}
\toprule
\multicolumn{1}{c}{Operation} & \multicolumn{1}{c}{Unbounded alphabet}  & \multicolumn{1}{c}{Fixed alphabet} \\
\midrule
$\up L$ (DFA to DFA) & ${=}2^{n{-}2}+1$ for $\size{\Sigma}{\geq} n{-}2$ & $2^{\Omega(n^{1/2})}$ for $\size{\Sigma}{=}2$ \\
$\down L$ (DFA to DFA) & ${=}2^{n{-}1}$ for $\size{\Sigma}{\geq} n{-}1$ & $2^{\Omega(n^{1/3})}$ for $\size{\Sigma}{=}2$ \\
\midrule
$\up L$ (AFA to AFA) & ${\geq} 2^{\left\lfloor\! \frac{n{-}3}{2}\!\right\rfloor}$ and ${<} 2^n$ for $\size{\Sigma}$ in $2^{\Omega(n)}$ & $\vdots$ \\
$\down L$ (AFA to AFA) & ${\geq} 2^{\left\lfloor\! \frac{n{-}4}{3}\!\right\rfloor}$ and ${\leq} 2^n$ for $\size{\Sigma}$ in $2^{\Omega(n)}$ & \textrm{(unknown)} \\
\midrule
$\upint L$ (NFA to NFA) & ${>} 2^{2^{\left\lfloor\! \frac{n{-}4}{3}\!\right\rfloor}}$ and ${\leq}\psi(n)$ for $\size{\Sigma}$ in $2^{\Omega(n)}\!\!\!\!\!\!$ & $\vdots$ \\
$\downint L$ (NFA to NFA) & ${\geq} 2^{2^{\left\lfloor\! \frac{n{-}3}{2}\!\right\rfloor}}$ and ${\leq}\psi(n)$ for $\size{\Sigma}$ in $2^{\Omega(n)}\!\!\!\!\!\!$ & $\vdots$ \\
\bottomrule
\end{tabular}
\label{tab-summary}
\end{table}
\footnotetext{Recall that the $n$th Dedekind number $\psi(n)$ is the number of
antichains in the lattice of subsets of an $n$-element set, ordered by
inclusion~\cite{kleitman69}. Kahn~\cite[Corollary~1.4]{kahn2002} shows
\[
\binom{n}{\lfloor n/2\rfloor}
\leq \log_2 \psi(n)
\leq
\left(1+\frac{2 \log(n+1)}{n}\right) \binom{n}{\lfloor n/2\rfloor}
\:.
\]
}

\noindent \emph{Related work.}
We already mentioned previous work on the closure of regular languages:
it is also possible to compute closures by subwords or superwords for
larger classes like context-free languages or Petri net languages,
see~\cite{habermehl2010,bachmeier2015,zetzsche2015} and the references
therein for applications and some results on descriptive complexity.

Interiors are duals of closures and should not be confused with the inverse
operations considered in~\cite{bianchi2012b}, or the shuffle residuals from
\cite{ito2000}. Duals of regularity-preserving operations have the form
``\textit{complement--operation--complement}'' and thus can be seen as
special cases of the \emph{combined operations} studied
in~\cite{salomaa2007} and following papers.  Dual operations occur
naturally in algorithmic or logical contexts but have not yet been
considered widely from a state-complexity perspective: we are only aware
of~\cite{birget96} studying the dual of $L\mapsto \Sigma^*\cdot L$.



\section{Basic notions and results}
\label{sec-basics}

\paragraph*{Subwords}

We assume familiarity with regular languages and the automata that
recognize them. We write $x,y,u,v,\ldots$ to denote words over a finite
alphabet $\Sigma=\{a,b,\ldots\}$, with $\size{x}$ denoting the
length of a word $x$. For $1\leq i\leq|x|$,  we let $x[i]$ denote the
$i$-th letter of $x$. The empty word is denoted
$\emptyword$ and concatenation is denoted multiplicatively.

We say that a word $x$ is a \emph{subword} of $y$, written $x\subword y$,
when $y$ can be written in the form $y=y_0\,x_1\,y_1\cdots
y_{m-1}\,x_m\,y_m$ for some factors such that $x=x_1\cdots x_m$.
For example, $\emptyword\subword a\,b\subword a\,c\,b\,a$.
Equivalently, $x\subword y$ when
there are positions $0 < p_1 < p_2 < \cdots < p_\ell \leq \size{y}$ such that
$x[i]=y[p_i]$ for all $1\leq i\leq \ell=\size{x}$.
When $x\subword y$ we also say that $x$ \emph{embeds} in $y$, or that $y$ is a
\emph{superword} of $x$.

\paragraph*{Closures}

For a language $L\subseteq\Sigma^*$, we define $\up L
\eqdef\{x\in\Sigma^*~|~\exists y\in L:y\subword x\}$ and $\down
L\eqdef\{x\in\Sigma^*~|~\exists y\in L:x\subword y\}$, and call them the upward
and downward closures of $L$ respectively.\footnote{Formally $\up L$ should more
  precisely be denoted $\up_\Sigma L$ since it depends on the underlying
  alphabet but in the rest of this article $\Sigma$ will always be clear from
  the context.}

The Kuratowski closure axioms are satisfied:
\[
\down \emptyset=\emptyset
\:,
\quad
L\subseteq\down L=\down\down L
\:,
\quad
\down\bigl(\bigcup_i L_i\bigr)=\bigcup_i\down L_i
\:,
\quad
\down \bigl(\bigcap_i \down L_i\bigr)=\bigcap_i\down L_i
\:,
\]
and similarly for upward closures.
We say that a language $L\subseteq\Sigma^*$ is \emph{downward-closed}
if $L=\down L$ and that a language is \emph{upward-closed} if $L=\up L$. Note
that $L$ is downward-closed if, and only if, its complement $\Sigma^*\setminus L$ is
upward-closed but the complement of $\down L$ is \emph{not} $\up L$.

\paragraph*{Regularity}

The upward-closure $\up x$ of a word $x=a_1\cdots a_\ell$ is a regular
language given by the regular expression $\Sigma^* a_1 \Sigma^* \cdots
a_\ell\Sigma^*$. Since, by Higman's Lemma~\cite{higman52}, any language $L$ only contains finitely many
 elements that are minimal for the subword ordering, one deduces that $\up{L}$ is
regular for any $L\subseteq\Sigma^*$, a result also known as Haines's Theorem~\cite{haines69}. Then $\down{L}$, being the complement
of an upward-closed language, is regular too. In fact, upward-closed
languages are simple star-free languages. They correspond exactly to the
level $\frac{1}{2}$ of Straubing's hierarchy~\cite{pin97}, and coincide with
the \emph{shuffle ideals}, that is, the languages that satisfy
$L=L\shuffle \Sigma^*$~\cite{brzozowski2013}. Downward-closed languages coincide with
\emph{strictly piecewise-testable} languages~\cite{rogers2010}.

Effective construction of a finite-state automaton recognizing $\down L$ or $\up L$
is easy when $L$ is regular (see Section~\ref{sec-closure}), is possible
when $L$ is context-free~\cite{leeuwen78,courcelle91}, and is not
possible in general since this would allow deciding the emptiness of $L$.

\paragraph*{Interiors}

The \emph{upward interior} of a language $L$ over $\Sigma$ is $\upint
L\eqdef\{x\in\Sigma^*~|~\up x\subseteq L\}$. Its \emph{downward interior}
is $\downint L\eqdef\{x\in\Sigma^*~|~\down x\subseteq L\}$. Alternative
characterizations are possible, for example, by noting that $\upint L$ or $\downint L$ are the largest upward-closed or downward closed languages, respectively, included in $L$, or by using
the duality equations from page~\pageref{eq-interiors-are-duals}.
These equations show that $\downint L$ and $\upint
L$ are regular for any $L$. They also show how, when $L$ is regular, one may
compute automata recognizing the interiors of $L$ by combining complementations and
closures.

\paragraph*{State complexity}
When considering a finite automaton $A=(\Sigma,Q,\delta,I,F)$, we usually
write $n$ for $\size{Q}$, $k$ for $\size{\Sigma}$, and $L(A)$ for the language recognized by $A$.
In the context of a fixed automaton $A$ we often write $q\step{a}q'$ to
mean $q'\in\delta(q,a)$.
We also write $q\step{w}q'$ where $w\in\Sigma^*$ to denote the existence of a $w$-labeled path from $q$ to $q'$
 in the graph of $A$.
In Section~\ref{sec-ufa} we consider unambiguous
automata (UFAs): recall that an NFA $A$ is \emph{unambiguous} if every word $w\in
L(A)$ has a single accepting run~\cite{colcombet2015}.

For a regular language $L$, $\nDFA(L)$, $\nNFA(L)$ and $\nUFA(L)$ denote the
minimum number of states of a DFA, an NFA, and a UFA, respectively, that accepts $L$. Note that NFAs are allowed to have multiple initial states, and DFAs need not be complete.
Since any DFA is unambiguous, one obviously has $\nNFA(L)\leq \nUFA(L)\leq
\nDFA(L)$ for any regular language. In cases where $\nNFA(L)=\nDFA(L)$ we
may use $\nNDFA(L)$ to denote the common value.

\paragraph*{An application of the fooling set technique}

The following lemma is a well-known tool for proving lower bounds on $\nNFA(L)$.

\begin{lemma}[Extended fooling set technique, \cite{gruber2006}]
\label{extfool}
Let $L$ be a regular language. Suppose that there exists a set of pairs of words
$S = \{(x_i, y_i)\}_{1 \leq i \leq n}$, called a \emph{fooling set}, such that $x_i\,y_i
\in L$ for all $i=1,\ldots,n$, and such that for all $j \neq i$, at least one of $x_i \, y_j$ and $x_j \, y_i$ is not in $L$.
Then $\nNFA(L)\geq n$.
\end{lemma}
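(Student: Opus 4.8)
The plan is the textbook pigeonhole argument on the states of a minimal NFA. Let $A=(\Sigma,Q,\delta,I,F)$ be an NFA recognizing $L$ with $\size{Q}=\nNFA(L)$. I will exhibit an injection from $\{1,\dots,n\}$ into $Q$, which immediately yields $\nNFA(L)\geq n$.

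For each $i$, since $x_i\,y_i\in L$, fix one accepting run of $A$ on $x_i\,y_i$. Such a run is a concrete sequence of states, so it visits a well-defined state $q_i$ exactly after the prefix $x_i$ has been consumed; in particular $p\step{x_i}q_i$ for some $p\in I$ and $q_i\step{y_i}f$ for some $f\in F$, where these paths are the two halves of the fixed run. Define the map $\varphi\colon i\mapsto q_i$.

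The core step is to show $\varphi$ is injective. Suppose $q_i=q_j$ for some $i\neq j$. Splicing the initial segment of the $i$-th run (labeled $x_i$, ending at $q_i$) with the final segment of the $j$-th run (labeled $y_j$, starting at $q_j=q_i$ and ending in $F$) produces an accepting run of $A$ on $x_i\,y_j$, hence $x_i\,y_j\in L$. By symmetry $x_j\,y_i\in L$ as well. This contradicts the fooling-set hypothesis that at least one of $x_i\,y_j$ and $x_j\,y_i$ is not in $L$. Therefore $\varphi$ is injective and $\size{Q}\geq n$.

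There is essentially no hard part here: the only point requiring care is that $q_i$ must be read off from a single fixed accepting run rather than from the transition relation, since the set $\delta(I,x_i)$ of states reachable on $x_i$ may be large and need not behave well; once one run per pair is fixed, $q_i$ is unambiguous and the splicing argument in the injectivity step is legitimate. The reasoning is unaffected by allowing NFAs to have several initial states.
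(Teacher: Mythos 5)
Your proof is correct and follows exactly the paper's argument: fix one accepting run per pair, read off the state $q_i$ reached after consuming $x_i$, and splice runs to show that $q_i=q_j$ for $i\neq j$ would force both $x_i\,y_j$ and $x_j\,y_i$ into $L$. No differences worth noting.
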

\begin{proof}
Let $A=(\Sigma,Q,\delta, I,F)$ be an NFA recognizing $L$. For each $i=1,\ldots,n$,
$x_i y_i \in L$, so $A$ has an accepting run of the form $s_i \step{x_i}
q_i \step{y_i} f_i$, starting at some initial state $s_i\in I$, ending at
some accepting state $f_i\in F$, and visiting some intermediary state
$q_i\in Q$. Observe that if $q_i = q_j$ for $i \neq j$ then $A$ has
accepting runs for both $x_i y_j$ and $x_j y_i$, which contradicts the
assumption. Hence the states $q_1, q_2, \ldots, q_n$ are all distinct and
$\size{Q}\geq n$.
\qed
\end{proof}

In preparation for Section~\ref{sec-closure}, let us apply the fooling set technique to the following languages, where $\Sigma$ is an
arbitrary finite alphabet:
\begin{xalignat*}{2}
U_\Sigma &\eqdef \{x\in\Sigma^*~|~\forall a\in\Sigma:\exists i:x[i]=a\} \:,
&
U_\Sigma' &\eqdef \Sigma \cdot U_\Sigma \:,
\\
V_\Sigma &\eqdef \{x\in\Sigma^*~|~\forall i\neq j:x[i]\neq x[j]\} \:.
\end{xalignat*}
  Note that $U_\Sigma$ consists of all words where every letter in $\Sigma$ appears
  at least once while $V_\Sigma$ consists of all words where no letter appears
  twice. A word in $U_\Sigma'$ consists of an arbitrary letter from $\Sigma$ followed by a word in
  $U_\Sigma$. Note that $U_\Sigma$ and $U'_\Sigma$ are upward-closed while
  $V_\Sigma$ is downward-closed.
\begin{lemma}
\label{lem-U-n-V}
$\nNDFA(U_\Sigma)=\nNDFA(V_\Sigma)=2^{\size{\Sigma}}$ and, if $\Sigma$ is
not empty,
$\nNDFA(U_\Sigma')= 2^{\size{\Sigma}}+1$.
\end{lemma}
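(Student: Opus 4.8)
The plan is to prove the three equalities by giving matching upper and lower bounds, where the upper bounds come from explicit DFA constructions and the lower bounds from Lemma~\ref{extfool}.

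\medskip

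\noindent\emph{Upper bounds.} For $U_\Sigma$, the natural DFA tracks which subset $S\subseteq\Sigma$ of letters has been seen so far, starting at $\emptyset$ and accepting at $\Sigma$; this gives $2^{\size\Sigma}$ states, and the automaton is complete and deterministic. For $V_\Sigma$, dually, track the set $S\subseteq\Sigma$ of letters seen so far, accepting at \emph{every} state, but with no outgoing transition on $a$ from a state $S$ with $a\in S$ (a repeated letter kills the run); again $2^{\size\Sigma}$ states. For $U'_\Sigma$, we need one more state: a fresh initial state $q_{\init}$ from which every letter $a$ leads to the singleton-subset state $\{a\}$ of the $U_\Sigma$-automaton, after which we behave as the $U_\Sigma$-DFA; this gives $2^{\size\Sigma}+1$ states. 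One should check the initial state is genuinely not mergeable with $\emptyset$ — it is, since $\emptyset$ already accepts short words that $q_{\init}$ rejects, e.g.\ for $\size\Sigma=1$ the state $\emptyset$ reached on input $a$ differs.

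\medskip

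\noindent\emph{Lower bounds.} Enumerate the subsets of $\Sigma$ as $S_1,\dots,S_{2^{\size\Sigma}}$. For $U_\Sigma$, take the fooling set $\{(x_S, y_S)\}_S$ where $x_S$ is any word whose letter-set is exactly $S$ and $y_S$ is any word whose letter-set is exactly $\Sigma\setminus S$; then $x_S y_S\in U_\Sigma$, while for $S\neq S'$ at least one of $x_S y_{S'}$, $x_{S'}y_S$ misses some letter (since $S\cup(\Sigma\setminus S')\neq\Sigma$ or $S'\cup(\Sigma\setminus S)\neq\Sigma$ — indeed if $S\neq S'$ then $S\subsetneq S'$ fails for at least one direction, so some letter is omitted). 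This yields $\nNFA(U_\Sigma)\geq 2^{\size\Sigma}$, hence $\nNDFA(U_\Sigma)=2^{\size\Sigma}$. For $V_\Sigma$, use $x_S=$ a word listing the letters of $S$ once each and $y_S=$ a word listing the letters of $\Sigma\setminus S$ once each; then $x_S y_S\in V_\Sigma$, and for $S\neq S'$ either $x_S y_{S'}$ or $x_{S'}y_S$ repeats a letter (some letter lies in both $S$ and $\Sigma\setminus S'$, or in both $S'$ and $\Sigma\setminus S$). For $U'_\Sigma$ we reuse the $U_\Sigma$ fooling set but prepend a letter to each $x_S$ — say $x'_S = a\, x_S$ for a fixed $a\in\Sigma$ — giving $2^{\size\Sigma}$ pairs in $U'_\Sigma$, and then adjoin one extra pair $(\emptyword, z)$ where $z\in U_\Sigma$ is a shortest word using every letter exactly once (length $\size\Sigma$), chosen so that $z\notin U'_\Sigma$: this holds because any word in $U'_\Sigma$ has length $\geq \size\Sigma+1$, so $z$, of length $\size\Sigma$, is not in $U'_\Sigma$, whereas $\emptyword\cdot z = z\in U_\Sigma$, wait — we need $\emptyword\cdot z\in U'_\Sigma$; instead take the extra pair $(a z, \emptyword)$ is already covered, so rather use $(\emptyword, z')$ with $z' = a\,z \in U'_\Sigma$ and check that pairing it against each $x'_{S}=a x_S$, $y_S$ yields a failure in one direction — here $\emptyword\cdot y_S = y_S$ which misses letters in $S$ (nonempty for $S\neq\emptyset$), handling all but $S=\emptyset$, and for $S=\emptyset$ note $x'_\emptyset\cdot \emptyword = a\,x_\emptyset = a \notin U'_\Sigma$ when $\size\Sigma\geq 2$ and equals $aa\notin U'_\Sigma$-style check for $\size\Sigma=1$. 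This gives $2^{\size\Sigma}+1$ and matches the upper bound.

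\medskip

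\noindent\emph{Main obstacle.} The routine part is the $U_\Sigma$ and $V_\Sigma$ cases; the delicate point is the extra ``$+1$'' for $U'_\Sigma$ — one must choose the auxiliary pair and verify the fooling-set condition against \emph{every} previously chosen pair in at least one direction, paying attention to the degenerate subsets $S=\emptyset$ and $S=\Sigma$ and to the smallest alphabet $\size\Sigma=1$, where $U'_\Sigma$ becomes $a a^* a a^* = a a a^*$ and the claimed value is $3$. I expect the cleanest route is to separately argue a lower bound of $2^{\size\Sigma}$ on $\nNFA(U'_\Sigma)$ via the shifted fooling set, then argue the matching DFA has a non-redundant initial state, rather than squeezing everything into a single fooling set of size $2^{\size\Sigma}+1$.
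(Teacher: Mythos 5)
Your proposal is correct and follows essentially the same route as the paper: the same subset-tracking DFAs for the upper bounds, the same fooling set $\{(x_S,y_S)\}_{S\subseteq\Sigma}$ (the paper's $x_\Gamma,x_{\neg\Gamma}$) for $U_\Sigma$ and $V_\Sigma$, and for $U'_\Sigma$ the same shifted fooling set augmented by one pair with first component $\emptyword$. The only slip is in your $S=\emptyset$ verification for $U'_\Sigma$, where you compute the product of the two \emph{first} components ($x'_\emptyset\cdot\emptyword$) instead of a cross product of the form $x_iy_j$; no case split is needed, since $\emptyword\cdot y_S=y_S$ has length at most $\size{\Sigma}$ while every word of $U'_\Sigma$ has length at least $\size{\Sigma}+1$, so $y_S\notin U'_\Sigma$ for \emph{every} $S$ including $S=\emptyset$ --- which is exactly the uniform observation the paper uses.
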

\begin{proof}
  Let us start with the lower bounds for $\nNFA(U_\Sigma)$ and
  $\nNFA(V_\Sigma)$: With any $\Gamma\subseteq\Sigma$, we associate two
  words $x_\Gamma$ and $x_{\neg\Gamma}$, where $x_\Gamma$  has exactly one
  occurrence of each letter from $\Gamma$, and where $x_{\neg\Gamma}$ has
  exactly one occurrence of each letter not in $\Gamma$. Then $x_\Gamma x_{\neg\Gamma}$
  belongs to $U_\Sigma$ and $V_\Sigma$, while for any $\Delta\neq\Gamma$
  one of $x_\Gamma x_{\neg\Delta}$ and $x_\Delta x_{\neg\Gamma}$ does not belong to
  $U_\Sigma$ and one does not belong to $V_\Sigma$. Thus for $U_\Sigma$
  or $V_\Sigma$ we may use the same fooling set
  $S=\{(x_\Gamma,x_{\neg\Gamma})\}_{\Gamma\subseteq\Sigma}$. By Lemma~\ref{extfool}, we conclude that
  $\nNFA(U_\Sigma)\geq2^{|\Sigma|}$ and $\nNFA(V_\Sigma)\geq2^{|\Sigma|}$.

  For $U_\Sigma'$ we pick an arbitrary letter $a\in\Sigma$ and let our
  fooling set be
  $S=\{(a\,x_\Gamma,x_{\neg\Gamma})\}_{\Gamma\subseteq\Sigma}\cup\{(\emptyword,a\,x_\Sigma)\}$.
  As above $a\,x_\Gamma x_{\neg\Gamma}$ belongs to $U_\Sigma'$ while, for any
  $\Delta\neq\Gamma$, one of $a\,x_\Gamma x_{\neg\Delta}$ and $a\,x_\Delta
  x_{\neg\Gamma}$ does not belong to $U_\Sigma'$. Furthermore $\emptyword\cdot
  a\,x_\Sigma$ belongs to $U_\Sigma'$, while $\emptyword\cdot x_{\neg\Gamma}$
  does not belong to $U_\Sigma'$ for any $\Gamma\subseteq\Sigma$. By
  Lemma~\ref{extfool}, we conclude that $\nNFA(U_\Sigma') \geq 2^{\size{\Sigma}}+1$.

  Proving the upper bounds is a well-known exercise in automata theory.
  One
  designs DFAs using the powerset $2^\Sigma=\{\Gamma,\Gamma',\ldots\}$ as the set of states, that is,
  automata with $2^{\size{\Sigma}}$ states. With rules of the form
  $\Gamma\step{a}\Gamma\cup\{a\}$, these states
  record the set of letters read so far, starting from
   $\emptyset$ as initial state. In the DFA for $U_\Sigma$,
  one accepts when all letters have been seen. In the DFA for $V_\Sigma$,
  all states are accepting but it is forbidden to read a letter that has
  already been seen: there are no transitions
  $\Gamma\step{a}\Gamma\cup\{a\}$ when $a\in\Gamma$. A DFA recognizing
  $U'_\Sigma$ is obtained from the DFA for $U_\Sigma$ by adding a
  new initial state from which one will read a first
  letter before continuing as for $U_\Sigma$.
  \qed
\end{proof}
In the following, we use $\Sigma_k\eqdef\{a_1,\ldots,a_k\}$ to denote a
$k$-letter alphabet, and  write $U_k$ and $V_k$ instead of $U_{\Sigma_k}$
and $V_{\Sigma_k}$.
  


\section{State complexity of closures}
\label{sec-closure}

Let $L\subseteq\Sigma^*$ be a regular language recognized by an NFA $A$.
One may obtain NFAs recognizing the upward and downward closures $\up L$ and $\down
L$ by simply adding transitions to $A$, without increasing its number of
states. More precisely, an NFA $A^\up$ recognizing $\up L$ is obtained from $A$ by
adding self-loops $q \step{a} q$ for every state $q$ of $A$ and every
letter $a\in\Sigma$. Similarly, an NFA $A^\down$ recognizing $\down L$ is obtained
from $A$ by adding a silent transition $p \step{\emptyword} q$, also called
an ``$\emptyword$-transition'', for every original transition $p \step{a} q$
in $A$.

If $L$ is recognized by a DFA or an NFA $A$ and we want a DFA recognizing $\up L$
or $\down L$, we can start with the NFA $A^\up$ or $A^\down$ defined
above and transform it into a DFA using the powerset construction. This
shows that if $L$ is recognized by an $n$-state DFA, then both its upward
and downward closures are recognized by DFAs with at most $2^n-1$ states.

It is possible to provide tighter upper bounds by taking
advantage of specific features of $A^\up$ and $A^\down$. The next two
propositions give tight upper bounds for upward and downward closure, respectively.

\begin{proposition}[State complexity of upward closure, after~\cite{okhotin2010}]
\label{prop-sc-upward-dfa}
1.\  If $L\subseteq\Sigma^*$ is a regular language with $\nNFA(L)=n$ then
$\nDFA(\up L)\leq 2^{n-2}+1$.

\noindent
2.\  Furthermore, for any $n>1$ there exists a regular language $L_n$ with
$\nNFA(L_n)=\nDFA(L_n)=n$ and $\nDFA(\up L_n)=\nUFA(\up L_n)=2^{n-2}+1$.
\end{proposition}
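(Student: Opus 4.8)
The plan is to prove the two parts separately, with part~2 built on part~1.

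For part~1, the plan is to start from an NFA $A=(\Sigma,Q,\delta,I,F)$ with $\nNFA(L)=\size Q=n$ and analyse the powerset determinisation of the NFA $A^\up$ obtained by adding all self-loops $q\step{a}q$, which recognises $\up L$. After disposing of the trivial cases ($L=\emptyset$, where $\up L=\emptyset$, and $\emptyword\in L$, where $\up L=\Sigma^*$ since $\up L$ is upward closed), we may assume $\emptyword\notin L\neq\emptyset$, so that $I$ and $F$ are nonempty and disjoint. The argument then hinges on two observations. First, whenever $w\in\up L$ one has $\up L/w=\Sigma^*$, because $\up L$ is upward closed; hence all residuals $\up L/w$ with $w\in\up L$ coincide and contribute exactly one state of the minimal DFA. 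Second, for $w\notin\up L$ the residual $\up L/w$ is determined by the subset of $Q$ reached by $w$ in $A^\up$, and such a subset (i) contains $I$, since the initial subset is $I$ and the self-loops make the powerset transition function monotone, and (ii) is disjoint from $F$, since otherwise $w$ would lie in $\up L$. There are at most $2^{\,n-\size I-\size F}\le 2^{n-2}$ such subsets, whence at most $2^{n-2}+1$ distinct residuals in all. This is Okhotin's argument, recast via residuals.

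For part~2, the plan is to take as witness the language $L_n\eqdef\{a_ia_i\mid 1\le i\le n-2\}$ over the alphabet $\Sigma_{n-2}$ (for the degenerate value $n=2$ take instead $L_2=a_1^+$ over a one-letter alphabet, which is already upward closed, so all three complexities equal $2=2^{0}+1$). One first checks $\nDFA(L_n)=n$ by writing down the obvious minimal DFA: an initial state, $n-2$ states $q_1,\dots,q_{n-2}$ entered by the single letters $a_i$, and one final state entered from each $q_i$ by a second $a_i$ — no dead state is needed since incompleteness is allowed. One then checks $\nNFA(L_n)=n$ by feeding the $n$-element fooling set $\{(a_i,a_i)\mid 1\le i\le n-2\}\cup\{(\emptyword,a_1a_1),(a_1a_1,\emptyword)\}$ to the extended fooling set technique (Lemma~\ref{extfool}). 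The crucial point is that $\up L_n$ is exactly the set of words over $\Sigma_{n-2}$ in which some letter occurs twice, i.e.\ the complement of $V_{n-2}$. Part~1 gives $\nDFA(\up L_n)\le 2^{n-2}+1$, and for the matching lower bound one exhibits $2^{n-2}+1$ pairwise Myhill--Nerode-inequivalent words: the words $x_\Gamma$ containing each letter of $\Gamma$ exactly once, for $\Gamma\subseteq\Sigma_{n-2}$ (any two of them are separated by a letter lying in the symmetric difference of the two sets), together with $a_1a_1$ (separated from every $x_\Gamma$ by $\emptyword$). This mirrors the treatment of $V_{n-2}$ in Lemma~\ref{lem-U-n-V} and yields $\nDFA(\up L_n)=2^{n-2}+1$, hence also $\nUFA(\up L_n)\le 2^{n-2}+1$.

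The remaining, and genuinely delicate, step is the lower bound $\nUFA(\up L_n)\ge 2^{n-2}+1$: it cannot be obtained from a fooling set, since $\up L_n$ in fact admits an NFA with $O(n)$ states. The plan is to invoke the rank method for unambiguous automata developed in Section~\ref{sec-ufa}: restricting the Hankel matrix of $\up L_n$ to the rows and columns indexed by $\{x_\Gamma\mid\Gamma\subseteq\Sigma_{n-2}\}\cup\{a_1a_1\}$ produces, up to relabelling, the block matrix $\left(\begin{smallmatrix} J-D & \mathbf 1\\ \mathbf 1^{\top} & 1\end{smallmatrix}\right)$, where $D$ is the (invertible) set-disjointness matrix and $J,\mathbf 1$ are all-ones; a short linear-algebra computation shows this matrix is nonsingular, so it has rank $2^{n-2}+1$ over $\mathbb Q$ and every UFA for $\up L_n$ has at least that many states. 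I expect this nonsingularity check — and, more broadly, the set-up of the rank method — to be the real obstacle; everything else is routine bookkeeping.
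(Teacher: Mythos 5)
Your proposal is correct and follows essentially the same route as the paper: the same powerset/residual analysis for the upper bound (reachable non-accepting subsets contain $I$ and avoid $F$, while all accepting subsets are equivalent to $\Sigma^*$), and the same witness $L_n=\{a_i\,a_i\}$ with $\up L_n$ the complement of $V_{n-2}$. The only noteworthy variation is in the UFA lower bound, where the paper (Proposition~\ref{prop-UFA-3}) pairs $a_1a_1$ with $\emptyword$, gets a rank-$2^{n-2}$ matrix via Lemma~\ref{lem-rank-subset-mat}, and recovers the $+1$ by observing that the initial state is not a witness state, whereas your pairing $(a_1a_1,a_1a_1)$ makes the bordered intersection matrix directly nonsingular of rank $2^{n-2}+1$ --- and your nonsingularity claim does hold, since the all-ones vector is not in the column space of the intersection matrix (every column vanishes in the row indexed by $\emptyset$) and its kernel is spanned by $e_\emptyset$.
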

\begin{proof}
1.\ Let $A=(\Sigma,Q,\delta,I,F)$ be an $n$-state NFA recognizing $L=L(A)$. We can assume
$I\cap F=\emptyset$ and $\size{I\cup F}\geq 2$ otherwise $L$ contains
$\emptyword$ or is empty, resulting in a trivial $\up L$ with $\nDFA(\up L)=1$.

Since $A^\up$ has loops on all its states and for any letter, applying the
powerset construction yields a DFA where $P\step{a}P'$ implies $P\subseteq
P'$, hence any state $P$ reachable from $I$ includes $I$. Furthermore, if
$P$ is accepting, that is, $P\cap F\neq\emptyset$, and $P\step{a}P'$, then
$P'$ is accepting too, hence all accepting states recognize exactly
$\Sigma^*$ and are equivalent. Then there can be at most
$2^{\size{Q\setminus (I\cup F)}}$ states in the powerset automaton that are
both reachable and not accepting. To this we add 1 for the accepting states
since they are all equivalent and will be merged in the minimal DFA.
Finally $\nDFA(\up L)\leq 2^{n-2}+1$ since $\size{I\cup F}$ is at least $2$. \\

\noindent
2.\  To show that $2^{n-2}+1$ states may be necessary, we first consider the
case where $n=2$: taking $L_2=\{a\}$ over a 1-letter alphabet witnesses
both $\nDFA(L_n)=n=2$ and $\nDFA(\up L_n)=2^{n-2}+1=2$. Further, $\nUFA(\up
L_2)=2$ since clearly $\nNFA(\up L_2)>1$.

In the general case where $n>2$ we define $L_n \eqdef E_{n-2}$ where
\begin{equation*}
        E_k \eqdef \{a \, a~|~a\in\Sigma_{k}\} = \{ a_1\,a_1, \ldots,
        a_k\,a_k\}
\:.
\end{equation*}
In other words, $L_n$ contains all words consisting of two identical
letters from $\Sigma=\Sigma_{n-2}$. The minimal DFA recognizing $L_n$ has $n$
states, see Figure~\ref{fig-A1}. Now $\up L_n =\bigcup_{a\in
  \Sigma}\Sigma^*\cdot a\cdot \Sigma^*\cdot a\cdot\Sigma^*$, that is, $\up
L_n$ contains all words in $\Sigma^*$ where \emph{some letter reappears}. Thus
$\up L_n$ is the complement of the language we called $V_{n-2}$ above.
\begin{figure}[htbp]
\centering
\scalebox{0.85}{
  \begin{tikzpicture}[->,>=stealth',shorten >=1pt,node distance=6em,thick,auto,bend angle=30]
\tikzstyle{every state}=[minimum size=3em]
\node [state,initial] (p0) {$q_0$};
\node [right=4em of p0,state] (pi) {$q_i$};
\node [right=4em of pi,state,accepting] (pf) {$q_{n-1}$};
\node [above=3em of pi,state] (p1) {$q_1$};
\node [below=3em of pi,state] (pk) {$q_{n-2}$};
\node [above=0.6em of pi] (dummy1) {$\vdots$};
\node [above=0.6em of pk] (dummy2) {$\vdots$};
\path (p0) edge [pos=0.75] node {$a_1$} (p1);
\path (p0) edge [pos=0.70] node {$a_i$} (pi);
\path (p0) edge [pos=0.75,swap] node {$a_{n-2}$} (pk);
\path (p1) edge [pos=0.25] node {$a_1$} (pf);
\path (pi) edge [pos=0.30] node {$a_i$} (pf);
\path (pk) edge [pos=0.25,swap] node {$a_{n-2}$} (pf);
  \end{tikzpicture}
}
\caption{$n$-state DFA recognizing $L_n=E_{n-2}=\{a_1\,a_1, a_2\,a_2, \ldots, a_{n-2}\,a_{n-2}\}$.}
\label{fig-A1}
\end{figure}
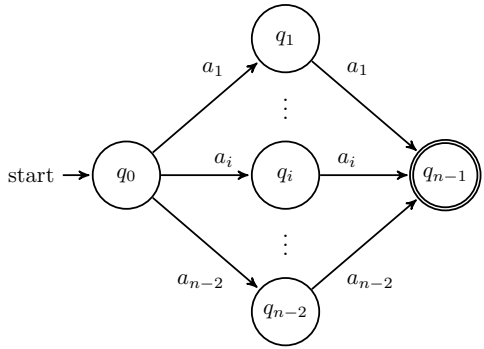

The simplest way to recognize $\up L_n$ is via a DFA that records, in its
states, the set of letters previously read and accepts when one reappears.
This will use $2^{|\Sigma|}+1 = 2^{n-2}+1$ states, one for each
subset of previously read letters, to which one adds a single accepting
state. This DFA is minimal: given any two words $x$ and $y$ that reach
distinct states, one finds a $z$ such that $x\,z\in\up L_n$ and
$y\,z\not\in\up L_n$ or vice versa. We conclude that $\nDFA(\up
L_n)=2^{n-2}+1$ and deduce $\nNFA(L_n) = n$ (that is, we rule out $\nNFA(L_n) < n$)
from the first part of the lemma.

We refer to Proposition~\ref{prop-UFA-3} in Section~\ref{sec-ufa} for a proof that
recognizing $\up L_n$ requires $2^{n-2}+1$ states \emph{even for UFAs}.
\qed
\end{proof}

\begin{proposition}[State complexity of downward closure]
\label{prop-sc-downward-dfa}
1.\  If $L\subseteq\Sigma^*$ is recognized by an $n$-state NFA with a single
initial state then $\nDFA(\down L)\leq 2^{n-1}$.

\noindent
2.\  Furthermore, for any $n>1$ there exists a language $L'_n$ with
$\nNFA(L'_n)=\nDFA(L'_n)= n$ and $\nDFA(\down L'_n)=\nUFA(\down L'_n)=2^{n-1}$.
\end{proposition}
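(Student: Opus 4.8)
The plan is to handle the upper bound of part~1 by the usual ``close, then determinize'' recipe, counting reachable deterministic states carefully, and then to exhibit an explicit witness for part~2 whose downward closure is analysed through the same determinization.

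For part~1, let $A=(\Sigma,Q,\delta,\{q_0\},F)$ be an $n$-state NFA with $L(A)=L$. First I dispose of $L=\emptyset$ (then $\nDFA(\down L)=1$) and assume all states of $A$ are reachable from $q_0$, discarding the others (this only lowers $n$, which strengthens the claim). I then form $A^\down$ by adding an $\emptyword$-transition $p\step{\emptyword}q$ for every $p\step{a}q$, and determinize by the powerset construction. The key observation is that $\emptyword$-reachability in $A^\down$ coincides with reachability in $A$, so the states of the determinized automaton are exactly the subsets of $Q$ closed under $A$-reachability. Since $q_0$ reaches every state, the only such subset containing $q_0$ is $Q$ itself; hence every reachable deterministic state is either $Q$ or a reachability-closed subset of $Q\setminus\{q_0\}$, and there are at most $1+2^{n-1}$ of these. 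One of them is the dead state $\emptyset$; as DFAs need not be complete we discard it, leaving at most $2^{n-1}$ states, i.e.\ $\nDFA(\down L)\le 2^{n-1}$.

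For part~2 I take $\Sigma=\Sigma_{n-1}$ and $L'_n\eqdef\bigcup_{i=1}^{n-1}a_i\,(\Sigma\setminus\{a_i\})^*$, the set of non-empty words whose first letter does not occur again (so $L'_2=\{a_1\}$). The DFA with initial state $q_0$, accepting states $q_1,\ldots,q_{n-1}$, transitions $q_0\step{a_i}q_i$ and self-loops $q_j\step{a_i}q_j$ for $j\ne i$ recognizes $L'_n$, giving $\nDFA(L'_n)\le n$; for the matching lower bound I apply the extended fooling set lemma (Lemma~\ref{extfool}) to $S=\{(\emptyword,a_1\cdots a_{n-1})\}\cup\{(a_i,\,a_1\cdots a_{i-1}a_{i+1}\cdots a_{n-1})\,:\,1\le i\le n-1\}$, the verification being a short case check (reinjecting a word $a_iu$ with a block already containing $a_i$ ejects it from $L'_n$). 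Thus $\nNFA(L'_n)=\nDFA(L'_n)=n$. For $\nDFA(\down L'_n)$ the upper bound $2^{n-1}$ is part~1; for the lower bound I determinize $A^\down$ for the DFA $A$ above: the initial deterministic state is $Q$, reading any letter from $Q$ yields $\{q_1,\ldots,q_{n-1}\}$, and reading $a_i$ from any $T\subseteq\{q_1,\ldots,q_{n-1}\}$ yields $T\setminus\{q_i\}$; so every subset of $\{q_1,\ldots,q_{n-1}\}$ is reachable, as is $Q$. Pairwise inequivalence of the $2^{n-1}$ sets $\{Q\}\cup\{T:\emptyset\ne T\subseteq\{q_1,\ldots,q_{n-1}\}\}$ follows by separating distinct non-empty $T,T'$ (pick $q_k\in T\setminus T'$; the word listing $\{a_j:q_j\in T'\}$ empties $T'$ but leaves $q_k$ in $T$), and $Q$ from each $T$ similarly; $\emptyset$ is the discarded dead state. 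Hence $\nDFA(\down L'_n)=2^{n-1}$, and $\nUFA(\down L'_n)=2^{n-1}$ then follows from $\nUFA(\down L'_n)\le\nDFA(\down L'_n)$ together with a UFA lower bound that I would postpone to Section~\ref{sec-ufa} (the downward analogue of Proposition~\ref{prop-UFA-3}), exactly as for upward closure.

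The part of the argument needing the most care is not any single inequality but the bookkeeping that makes the bound \emph{exactly} $2^{n-1}$: in part~1 one must see that the unavoidable deterministic state $Q$ is offset by the removable dead state $\emptyset$, and in part~2 one must check that $Q$ is genuinely inequivalent to $\{q_1,\ldots,q_{n-1}\}$ (one first-letter step reduces this to the generic non-empty-subset case). The fooling-set verification and the deferred UFA lower bound are the remaining pieces of real work.
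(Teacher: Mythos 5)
Your proof is correct and follows essentially the same route as the paper: the same powerset argument for the $2^{n-1}$ upper bound (you discard $\emptyset$ and observe only $Q$ can contain $q_0$ among reachability-closed states, where the paper instead merges all states containing $q_{\init}$ into one equivalence class — the same count either way), the same witness $L'_n=\bigcup_i a_i(\Sigma\setminus\{a_i\})^*$ with the same fooling set, and the same deferral of the UFA lower bound to Proposition~\ref{prop-UFA-2}.
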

\begin{proof}
1.\  Assume that $L$ is recognized by $A=(\Sigma,Q,\delta,\{q_\init\},F)$,
an NFA where all states are reachable from $q_\init$, the single initial
state. From $A$ one derives an NFA $A^\down$ recognizing $\down L$ by adding
 $\emptyword$-transitions $q\step{\emptyword}q'$ for all pairs of states $q,q'$ such
that $q'$ is reachable from $q$. In particular, $A^\down$ contains
transitions $q_\init\step{\emptyword}q$ for all states $q\in Q$, and the
language accepted from $q$ is a subset of the language
accepted from $q_\init$. Hence, in the deterministic powerset automaton
obtained from $A^\down$, all states $P\subseteq Q$ that contain $q_\init$
are equivalent. This powerset automaton also
has up to $2^{n-1}-1$ nonempty states that do not contain $q_\init$. Thus
$1+2^{n-1}-1$ bounds the number of non-equivalent nonempty states in the
powerset automaton obtained from $A^\down$, showing $\nDFA(\down L)\leq 2^{n-1}$. \\

\noindent
2.\ To show that $2^{n-1}$ states are sometimes necessary, we assume $n>1$
and
let
 $L'_n \eqdef D_{n-1}$ where
\begin{equation*}
D_k \eqdef
\{x\in\Sigma_k^+~|~\forall i>1: x[i]\neq x[1]\} =
\bigcup_{a\in\Sigma_k}a\cdot\bigl(\Sigma_k\setminus a\bigr)^*
\:.
\end{equation*}
Thus $L'_n$ contains all words in $\Sigma_{n-1}^+$ where
\emph{the first letter does not reappear}.
The minimal DFA recognizing $L'_n$ has $n$ states, see Figure~\ref{fig-A2a}.
Every NFA for $L'_n$ has at least $n$ states, as shown by considering the following fooling
set:
\[
S = \left\{\begin{array}{l}
        (\emptyword, a_1 a_2 a_3 \cdots a_{n-1}),\:
        (a_1 , a_2 a_3 a_4 \cdots a_{n-1}),\:
        (a_2 , a_1 a_3 a_4 \cdots a_{n-1}),\:
        \\
        (a_3 , a_1 a_2 a_4 \cdots a_{n-1}),\:
        \ldots,\:
        (a_{n-1} , a_1 a_2 a_3 \cdots a_{n-2})
    \end{array}\right\}
\:.
\]

We now turn to
$\down L'_n = \{x~|~\exists a\in\Sigma_{n-1}:\forall i > 1: x[i]\neq a\}$.
That is, $\down L'_n$ contains all words $x$ such that the first suffix
$x[2..]$ does not use all letters. Equivalently $x\in\down L'_n$ if, and only if $x\in
L'_n$ or $x$ does not use all letters, that is, $\down L'_n$ is the union of
$L'_n$ and the complement of the language we called $U_{n-1}$ above.
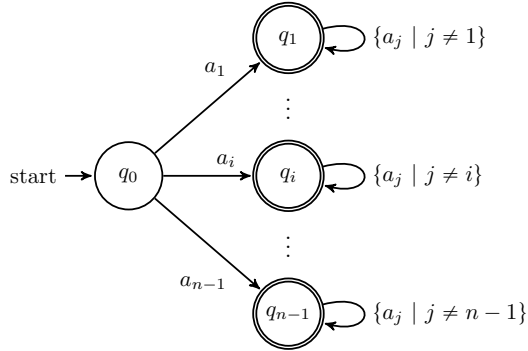
\begin{figure}[htbp]
\centering
\scalebox{0.85}{
  \begin{tikzpicture}[->,>=stealth',shorten >=1pt,node distance=6em,thick,auto,bend angle=30]
\tikzstyle{every state}=[minimum size=3em]
\node [state,initial] (p0) {$q_0$};
\node [right=4em of p0,state,accepting] (pi) {$q_i$};
\node [above=3em of pi,state,accepting] (p1) {$q_1$};
\node [below=3em of pi,state,accepting] (pk) {$q_{n-1}$};
\node [above=0.6em of pi] (dum1) {$\vdots$};
\node [above=0.6em of pk] (dum2) {$\vdots$};
\path (p0) edge [pos=0.75] node {$a_1$} (p1);
\path (p0) edge [pos=0.70] node {$a_i$} (pi);
\path (p0) edge [pos=0.75,swap] node {$a_{n-1}$} (pk);

\path (p1) edge[loop right] node {$\{a_j~|~j\neq 1\}$} (p1);
\path (pi) edge[loop right] node {$\{a_j~|~j\neq i\}$} (pi);
\path (pk) edge[loop right] node {$\{a_j~|~j\neq n-1\}$} (pk);

  \end{tikzpicture}
}
\caption{$n$-state DFA recognizing $L'_n=D_{n-1}=\bigcup_{a\in\Sigma}a\cdot(\Sigma-\{a\})^*$ with $\size{\Sigma}=n-1$.}
\label{fig-A2a}
\end{figure}

To show that $\nDFA(L'_n) = 2^{n-1}$, we start with a DFA $A$ that reads a
first letter and then starts recording which letters have been encountered
after the first one, in a manner similar to the construction of a DFA for
$U_\Sigma'$ in the proof of Lemma~\ref{lem-U-n-V}. All the states of $A$
are accepting but, in states of the form $\Sigma\setminus a$, the DFA
has no $a$-labelled transitions, hence $\Sigma$ is not a reachable state.
Finally $A$ has $1+2^{\size{\Sigma}}-1=2^{n-1}$ states: the initial state
reading the first letter, and one state for each strict subset of $\Sigma$.
This DFA is minimal: as in the previous proof, one checks that no two
states in $A$ are equivalent. Alternatively, one can refer to
Section~\ref{sec-ufa} where we prove -- see Proposition~\ref{prop-UFA-2} --
that recognizing $\down L'_n$ requires $2^{n-1}$ states \emph{even for
  UFAs}.
\qed
\end{proof}

\begin{remark}
\label{rem-down-NFA}
The condition of a single initial state in Proposition~\ref{prop-sc-downward-dfa}
cannot be lifted. It is possible to have $\nDFA(\down L)=\nUFA(\down
L)=2^n-1$ when $\nNFA(L)=n$. For example, the downward-closed language
$L=\Sigma_n^*\setminus U_n$ of all words that do not use all letters is
recognized by an $n$-state NFA (see Figure~\ref{fig-A2b}) but its minimal DFA
has $2^n-1$ states. In fact, any UFA recognizing $L$ has at least $2^n-1$
states (see Proposition~\ref{prop-UFA-1} in Section~\ref{sec-ufa}).
\end{remark}
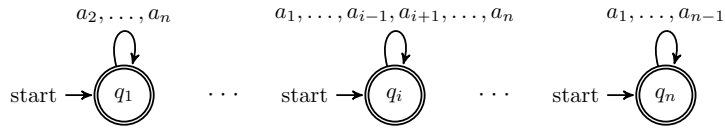
\begin{figure}[htbp]
\centering
\scalebox{0.85}{
  \begin{tikzpicture}[->,>=stealth',shorten >=1pt,node distance=6em,thick,auto,bend angle=30]
\tikzstyle{every state}=[minimum size=2.5em]

\node [state,initial,accepting] (q1) {$q_1$};
\node [right=2em of q1] (dum1) {\large $\bm{\cdots}$};
\node [right=5em of dum1,state,initial,accepting] (qi) {$q_i$};
\node [right=2em of qi] (dum2) {\large $\bm{\cdots}$};
\node [right=5em of dum2,state,initial,accepting] (qk) {$q_{n}$};

\path (q1) edge[loop above] node {$a_2,\ldots,a_n$} (q1);
\path (qi) edge[loop above] node {$a_1,\ldots,a_{i-1},a_{i+1},\ldots,a_n$} (qi);
\path (qk) edge[loop above] node {$a_1,\ldots,a_{n-1}$} (qk);

  \end{tikzpicture}
}
\caption{$n$-state NFA recognizing $\Sigma_n^*\setminus U_n$.}
\label{fig-A2b}
\end{figure}

The language families $(L_n)_{n\in\Nat}$ and $(L'_n)_{n\in\Nat}$ used to prove that the upper bounds given in
Propositions~\ref{prop-sc-upward-dfa} and~\ref{prop-sc-downward-dfa} are
tight use alphabets with a size linear in $n$.

It is known that the size of the alphabets matter for the state
complexity of closure operations. The automata witnessing tightness
in Figures~\ref{fig-A1} and~\ref{fig-A2a} use the smallest possible
alphabets. Okhotin showed that the $2^{n-2}+1$ state
complexity for $\up L$ cannot be achieved with an alphabet of a size smaller
than $n-2$, see~\cite[Lemma~4.4]{okhotin2010}.
We now prove a similar result for downward closures:
\begin{lemma}
  \label{lem-dc-alphabet-size}
For $n>2$, let $L \subseteq \Sigma^*$ be a regular language accepted by an $n$-state NFA with
a single initial state. If $\size{\Sigma}<n-1$ then $\nDFA(\down
L)<2^{n-1}$.
\end{lemma}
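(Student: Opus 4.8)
\emph{Proof plan.} I will prove the contrapositive: if $\nDFA(\down L)=2^{n-1}$ then $\size{\Sigma}\geq n-1$. The plan reuses the machinery from the proof of Proposition~\ref{prop-sc-downward-dfa}(1). Let $A=(\Sigma,Q,\delta,\{q_\init\},F)$ be the given $n$-state NFA with all states reachable from $q_\init$, let $A^\down$ be $A$ with the $\emptyword$-transitions added, and for $w\in\Sigma^*$ let $R_w\subseteq Q$ be the state of the determinisation of $A^\down$ reached on $w$; concretely $R_w=\{q~|~q_\init\step{u}q\text{ in }A\text{ for some }u\supword w\}$. Three immediate facts: $R_\emptyword=Q$; if $w\subword w'$ then $R_{w'}\subseteq R_w$ (monotonicity, since $\down L$ is downward closed); and $q_\init\in R_w$ forces $R_w=Q$. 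Hence every reachable determinisation state lies in $\{Q\}\cup 2^{Q\setminus\{q_\init\}}$, a family of $2^{n-1}+1$ sets, and, the case $\down L=\Sigma^*$ being trivial, recalling that a minimal partial DFA has one state per nonempty residual, equality $\nDFA(\down L)=2^{n-1}$ forces \emph{every} member of $\{Q\}\cup 2^{Q\setminus\{q_\init\}}$ to be a reachable determinisation state (and all of them pairwise inequivalent). From here on I assume this; in particular every subset of $Q\setminus\{q_\init\}$ is some $R_w$.

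I first pin down the shape of $A$. For a non-initial state $q$, choose $w$ with $R_w=\{q\}$; on reading one more letter $a$ the determinisation moves from $\{q\}$ to the set of states reachable in $A$ from $\delta(q,a)$, and by monotonicity this is contained in $R_w=\{q\}$, so $\delta(q,a)\subseteq\{q\}$. Ranging over $a$, state $q$ reaches only itself, i.e.\ every non-initial state carries only self-loops. Write the non-initial states as $q_1,\dots,q_{n-1}$, let $U\subseteq\Sigma$ be the set of self-loop letters at $q_\init$, and put $S_a=\{q_i~|~q_i\step{a}q_i\}$ and $S_J=\bigcap_{a\in J}S_a$ (so $S_\emptyset=Q\setminus\{q_\init\}$). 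A direct calculation then gives: $R_a=Q$ if $a\in U$ and $R_a=\delta(q_\init,a)\cup S_a$ otherwise; more generally $R_w=Q$ for $w\in U^*$, while if $c$ is the first letter of $w$ lying outside $U$ and $J$ is the set of letters occurring after that $c$, then $R_w=(\delta(q_\init,c)\cup S_c)\cap S_J$. Thus the reachable determinisation states are exactly $Q$ together with the sets $(\delta(q_\init,c)\cup S_c)\cap S_J$ for $c\in\Sigma\setminus U$ and $J\subseteq\Sigma$.

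The heart of the argument is a counting step on the $n-1$ coatoms $C_x=(Q\setminus\{q_\init\})\setminus\{q_x\}$, $x=1,\dots,n-1$, each of size $n-2$. Fix $x$. Since $C_x$ is reachable, $C_x=(\delta(q_\init,c)\cup S_c)\cap S_J$ for some $c,J$; both intersected sets are subsets of the $(n-1)$-element set $Q\setminus\{q_\init\}$ containing the $(n-2)$-element set $C_x$, hence each equals $C_x$ or $Q\setminus\{q_\init\}$, and they are not both $Q\setminus\{q_\init\}$. If $S_J=C_x$: a coatom of the Boolean lattice $2^{Q\setminus\{q_\init\}}$ is meet-irreducible, hence meet-irreducible in the meet-subsemilattice $\{S_J~|~J\subseteq\Sigma\}$ generated by $\{S_a~|~a\in\Sigma\}$; since the meet-irreducibles of a finitely generated meet-semilattice all lie among its generators together with the top, and $C_x$ is not the top, $C_x=S_a$ for some $a\in\Sigma$. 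Otherwise $\delta(q_\init,c)\cup S_c=C_x$, i.e.\ $R_c=C_x$. In either case we obtain a letter $a^{(x)}\in\Sigma$ with $S_{a^{(x)}}=C_x$ or $R_{a^{(x)}}=C_x$. Finally $x\mapsto a^{(x)}$ is injective: suppose $a^{(x)}=a^{(y)}=a$. If $S_a=C_x=C_y$ or $R_a=C_x=C_y$ we are done; in a mixed case, say $S_a=C_x$ and $R_a=C_y$, note $R_a\neq Q$ (it has fewer than $n$ elements), so $a\notin U$ and $R_a=\delta(q_\init,a)\cup S_a\supseteq S_a$, whence $C_x=S_a\subseteq R_a=C_y$, and equal cardinalities give $C_x=C_y$; symmetrically in the other mixed case. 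So $x=y$, and therefore $\size{\Sigma}\geq n-1$, contradicting $\size{\Sigma}<n-1$; hence $\nDFA(\down L)<2^{n-1}$.

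I expect the counting step in the third paragraph to be the main obstacle. The crude estimate that a reachable determinisation state is determined by a first letter together with a set of later letters yields only about $\size{\Sigma}\cdot 2^{\size{\Sigma}}$ such states, which is far too weak to contradict $\size{\Sigma}<n-1$; what makes the argument work is the meet-irreducibility observation, which forces each of the $n-1$ coatoms onto its own generator $S_a$ or its own single-letter image $R_a$. A secondary point requiring care is the reduction in the first two paragraphs — deducing from $\nDFA(\down L)=2^{n-1}$ alone that all $2^{n-1}$ subsets of $Q\setminus\{q_\init\}$ are reachable and that every non-initial state carries only self-loops — which rests on the partial-DFA bookkeeping for residuals and on $\down L$ being prefix-closed.
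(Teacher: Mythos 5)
Your proof is correct, and its combinatorial core is the same as the paper's: both arguments attach a distinct letter of $\Sigma$ to each of the $n-1$ coatoms $C_i=Q\setminus\{q_\init,q_i\}$ and use monotonicity of the powerset transitions to get pairwise distinctness, whence $\size{\Sigma}\geq n-1$. The routes differ, though. The paper works directly in the powerset automaton of $A^\down$: since every edge $P\step{a}P'$ there satisfies $P\supseteq P'$, each coatom is reached from $Q$ by a word of length at most two, and a short case analysis on the intermediate state extracts the letter. You instead first prove a structural fact the paper never needs --- every non-initial state carries only self-loops, because each singleton $\{q_i\}$ must be a reachable (hence reachability-closed) determinization state --- and derive from it the closed form $R_w=(\delta(q_\init,c)\cup S_c)\cap S_J$ for every reachable state other than $Q$; the letter for $C_x$ then falls out because a set of size $n-2$ sitting below the $(n-1)$-element top must equal one of the two intersected factors. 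Your detour is longer but buys extra information: it pins down the shape of extremal witnesses (essentially the automaton of Figure~\ref{fig-A2a}) and shows that the paper's third case (intermediate state $Q\setminus\{q_i\}$) is in fact vacuous, since a reachability-closed set containing $q_\init$ must be all of $Q$. Two harmless blemishes: the forced-reachability claim should read ``every \emph{nonempty} subset of $Q\setminus\{q_\init\}$'' (there are exactly $2^{n-1}$ nonempty candidate states, and you only ever apply the claim to nonempty sets); and the meet-irreducibility machinery in your Case A is unnecessary --- from $\bigcap_{a\in J}S_a=C_x$ with each $S_a\supseteq C_x$ and each $S_a\subseteq Q\setminus\{q_\init\}$ one gets $S_a=C_x$ for some $a\in J$ directly.
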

\begin{proof}
  We assume that $L$ is accepted by
 $A=(\Sigma,Q,\delta,\{q_\init\},F)$ with $\size{Q}=n$, that
  $\nDFA(\down L)= 2^{n-1}$ and deduce that
$\size{\Sigma}\geq n-1$.

We write $Q=\{q_\init,q_1,\ldots,q_{n-1}\}$ to denote the states of $A$. As we saw
in the proof of the first part of Proposition~\ref{prop-sc-downward-dfa},
the powerset automaton built from $A^\down$ can only have $2^{n-1}$ non-equivalent
reachable states if all non-empty subsets of $Q\setminus q_\init$ are reachable. Since $A^\down$ has
$\emptyword$-transitions doubling all transitions from $A$, it is possible to
construct the powerset automaton  with $Q$ as its initial
state. Then all edges $P\step{a}P'$ in the powerset automaton satisfy
$P\supseteq P'$. As a consequence, if $P\step{x}P'$ for some $x\in\Sigma^*$
then in particular one can pick $x$ with $\size{x}\leq \size{P\setminus P'}$.

Since every non-empty subset of $Q\setminus q_\init$ is reachable from $Q$ there is,
for every $i=1,\ldots,n-1$, some $x_i$ of length $1$ or $2$
such that $Q\step{x_i}Q\setminus q_\init,q_i$ (here $Q\setminus q,q'$ is shorthand for $Q\setminus \{q,q'\}$).
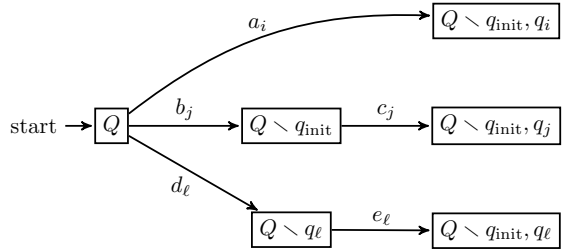
\begin{figure}[ht]
\centering
\scalebox{0.85}{
  \begin{tikzpicture}[->,>=stealth',shorten >=1pt,node distance=6em,thick,auto,bend angle=30]
\node [initial,rectangle,draw] (Q) {$Q$};
\node [right=5em of Q,rectangle,draw] (Qb) {$Q\setminus q_\init$};
\node [right=4em of Qb,rectangle,draw] (Qbc) {$Q\setminus q_\init, q_j$};
\node [above=3em of Qbc,rectangle,draw] (Qa) {$Q\setminus q_\init, q_i$};
\node [below=3em of Qb,rectangle,draw] (Qd)  {$Q\setminus q_\ell$};
\node [below=3em of Qbc,rectangle,draw] (Qde) {$Q\setminus q_\init, q_\ell$};
\path (Q) edge [bend left=20,pos=0.5] node [inner sep=1pt] {$a_i$} (Qa);
\path (Q) edge [pos=0.5] node [inner sep=2pt] {$b_j$} (Qb);
\path (Q) edge [pos=0.5,swap] node [inner sep=1pt] {$d_\ell$} (Qd);
\path (Qb) edge [pos=0.5] node [inner sep=2pt] {$c_j$} (Qbc);
\path (Qd) edge [pos=0.5] node [inner sep=2pt] {$e_\ell$} (Qde);
  \end{tikzpicture}
}
\caption{A part of the powerset automaton of $A^\down$}
\label{fig-proof-powerset}
\end{figure}
For a given $i$, there are three possible cases (see Figure~\ref{fig-proof-powerset}): $x_i=a_i$ is a
single letter (type 1), or $x_i$ is some $b_i\,c_i$ with
$Q\step{b_i}Q\setminus q_\init\step{c_i}Q\setminus q_\init, q_i$ (type 2), or $x_i$ is some
$d_i\,e_i$ with $Q\step{d_i}Q\setminus q_i\step{e_i}Q\setminus
q_\init, q_i$.

We now claim that the $a_i$'s for type-1 states, the $c_i$'s for type-2
states and the $d_i$'s for type-3 states are all distinct, hence
$\size{\Sigma}\geq n-1$.

Clearly the $a_i$'s and the $d_i$'s are pairwise distinct since they take $Q$ to
different states in the deterministic powerset automaton.
Similarly, the $c_i$'s are pairwise distinct, taking $Q\setminus q_\init$
to different states.

Assume now that $a_i=c_j$ for a type-1 $q_i$ and a type-2 $q_j$. Then
$Q\setminus q_\init\step{c_j}Q\setminus q_\init, q_j$ and
$Q\step{a_i(=c_j)}Q\setminus q_\init, q_i$, implying $q_i=q_j$ by
monotonicity of $\delta$ (the fact that $P_1\subseteq P_2$ implies
$\delta(P_1,a)\subseteq\delta(P_2,a)$ for any $a\in\Sigma$).

Similarly, assuming $d_\ell=c_j$ leads to $Q\setminus
q_\init\step{c_j}Q\setminus q_\init, q_j$ and
$Q\step{c_j(=d_\ell)}Q\setminus q_\ell$, implying $q_\ell=q_j$ by
monotonicity of $\delta$. Thus we can associate a distinct letter with each
state $q_1,\ldots,q_{n-1}$, which concludes the proof.
\qed
\end{proof}

In view of the above results, the main question is whether, \emph{in the case of a fixed alphabet}, exponential lower bounds
still apply for the (deterministic) state complexity of upward and downward closures. The 1-letter case is
degenerate since, when $\size{\Sigma}=1$, both $\nDFA(\up L)$ and $\nDFA(\down L)$ are  at most
$\nDFA(L)$. In the 3-letter case, exponential lower bounds for upward and
downward closures were shown by Okhotin~\cite{okhotin2010}.

In the critical 2-letter case, say $\Sigma=\{a,b\}$, an exponential lower bound for upward closure was
shown by H\'eam with the following witness: For $n>0$, let $L''_n = \{ a^i b \, a^{2j}
b \, a^i ~|~ i+j+1=n \}$. Then $\nDFA(L''_n) = (n+1)^2$, while $\nDFA(\up L''_n)
\geq \frac{1}{7}(\frac{1 + \sqrt{5}}{2})^n$ when $n\geq
4$~\cite[Proposition~5.11]{heam2002}.
Regarding downward closures for languages over a 2-letter alphabet, the
question was left open and we answer it in the next section.

\section{Exponential state complexity of closures in the 2-letter case}
\label{sec-2-letter}

In this section we show an exponential lower bound for the state complexity
of downward closure in the case of
a two-letter alphabet. Interestingly,
the same lower bound for upward closure can be
proved using the same witnesses, but H\'eam already gave a stronger lower bound for upward closure~\cite{heam2002}.
\begin{theorem}[State complexity of closures with $\size{\Sigma}=2$]
\label{thm-ndfa-k=2}
The deterministic state complexity of downward closure for languages over the binary
alphabet $\Sigma=\{a,b\}$ is in
$2^{\Omega(n^{1/3})}$. The same result holds for upward closure.
\end{theorem}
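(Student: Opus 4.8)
The plan is to transfer the exponential lower bounds of Propositions~\ref{prop-sc-upward-dfa} and~\ref{prop-sc-downward-dfa}, which use a growing alphabet $\Sigma_k$, down to the binary alphabet by means of a \emph{block encoding} $h\colon\Sigma_k^*\to\{a,b\}^*$, paying only a polynomial (in fact cubic) price in the number of states. Fix codewords $c_1,\dots,c_k\in\{a,b\}^*$ of length $\Theta(k)$ — a first candidate is $c_i=a^i\,b\,a^{k-i}\,b$ — and let $h(a_i)=c_i$. As base language I would take a \emph{small} $k$-letter witness whose \emph{downward and upward} closures are \emph{both} of size $2^{\Omega(k)}$: the language $F_k\eqdef\bigcup_{i=1}^{k} a_i\,a_i\,(\Sigma_k\setminus a_i)^*$ works, since $\up F_k=\Sigma_k^*\setminus V_k$ and hence $\nDFA(\up F_k)=2^{\Omega(k)}$ by Lemma~\ref{lem-U-n-V}, while $\down F_k$ contains $\Sigma_k^*\setminus U_k$ and satisfies $\nDFA(\down F_k)=2^{\Omega(k)}$ by a Myhill--Nerode argument analogous to the one behind Proposition~\ref{prop-sc-downward-dfa}, and yet $\nDFA(F_k)=O(k)$. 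The binary witness for Theorem~\ref{thm-ndfa-k=2} is then $L_m\eqdef h(F_k)$, and the point is that \emph{the same} $L_m$ serves for both closures, precisely because both closures of $F_k$ carry a $2^{\Omega(k)}$-hard structure.

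The easy halves are the upper bound on $\nDFA(L_m)$ and one direction of the transfer. Running the $O(k)$-state DFA for $F_k$ and expanding every transition into a path reading the corresponding length-$\Theta(k)$ codeword — keeping track of the base state, of which codeword is being read, and of the position inside it — is a routine construction giving $\nDFA(L_m)=O(k^3)$, hence $k=\Theta(m^{1/3})$. Moreover $h$ is monotone for $\subword$, so $u\subword w$ implies $h(u)\subword h(w)$ and thus $h(\down F_k)\subseteq\down L_m$ and $h(\up F_k)\subseteq\up L_m$. What remains, and what drives the proof, is to show that closing $L_m$ does not produce \emph{too many} new words, i.e.\ that the $2^{\Omega(k)}$ Myhill--Nerode classes visible at the $\Sigma_k$-level survive the encoding.

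The main obstacle is exactly that a block encoding does \emph{not} commute with closures over a two-letter alphabet: for instance $c_k=a^k b b$ is a subword of $c_1 c_1=a\,b\,a^{k-1}\,b\,a\,b\,a^{k-1}\,b$, although $a_k\not\subword a_1 a_1$, so $\down L_m\supsetneq h(\down F_k)$ in general. To circumvent this I would (i) choose the codewords — and, if needed, restrict the subsets $\Gamma$ used in the fooling set to a sub-range of indices, or prepend a fixed ``frame'' word — so that any embedding $h(u)\subword h(w)$ with $w\in F_k$ and $u$ of the special shape occurring in the fooling set is forced to be \emph{block-aligned}, hence implies $u\subword w$; or alternatively (ii) intersect $\down L_m$ with the regular ``well-formed'' language $R_k\eqdef(c_1+\dots+c_k)^*$, whose DFA is polynomial, and use $\nDFA(\down L_m)\geq\nDFA(\down L_m\cap R_k)/\nDFA(R_k)$ to reduce to the block-aligned situation. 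Granting block-alignment, membership of a well-formed word $h(u)$ in $\down L_m$ is governed by membership of $u$ in $\down F_k$, and one replays verbatim the $2^{\Omega(k)}$-class argument: the words $h(x_\Gamma)$, with $x_\Gamma$ using exactly the letters of $\Gamma$ sufficiently often, are pairwise separated by continuations $h(z_{\Gamma,\Delta})$ that land inside $R_k$. This yields $\nDFA(\down L_m)\geq 2^{\Omega(k)}=2^{\Omega(m^{1/3})}$.

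For the upward closure the argument is symmetric: one checks that an embedding $h(w)\subword v$ with $v$ well-formed forces $v$ to contain block-disjoint copies of the $c_{w[t]}$, so that on well-formed words $\up L_m$ is controlled by $\up F_k=\Sigma_k^*\setminus V_k$, and the $\overline{V_k}$-flavoured separating family of Lemma~\ref{lem-U-n-V} transports through $h$ exactly as above, giving $\nDFA(\up L_m)=2^{\Omega(m^{1/3})}$ with the very same witness. The delicate step throughout is this block-alignment lemma, together with the verification that the overhead of the encoding (and, in variant~(ii), of $R_k$) is genuinely polynomial; everything else is an application of the Myhill--Nerode theorem (or of Lemma~\ref{extfool}).
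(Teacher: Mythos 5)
Your overall strategy---encode a hard large-alphabet witness into $\{a,b\}^*$ by length-$\Theta(k)$ codewords, pay a cubic price in states, and get the exponent $m^{1/3}$---has the right shape, and you have correctly located the crux: a block encoding does not commute with $\down$ or $\up$. But that is exactly where the proof has a genuine gap. The ``block-alignment lemma'' is the entire difficulty, it is left unproven, and neither proposed fix works as stated. For (i): with polynomial-length binary codewords one essentially cannot prevent $c_j\subword c_i\,c_i$ for $j\neq i$ (your own example already shows this), and your base language $F_k=\bigcup_i a_i a_i(\Sigma_k\setminus a_i)^*$ makes matters worse, since every $h(w)$ with $w\in F_k$ begins with a doubled codeword $c_ic_i$ into which spurious codewords embed. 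For (ii): the inequality $\nDFA(\down L_m)\geq \nDFA(\down L_m\cap R_k)/\nDFA(R_k)$ is valid, but $\down L_m\cap R_k$ is \emph{not} $h(\down F_k)$---it also contains every well-formed word that embeds into some $h(w)$ in a non-aligned way---so intersecting with $R_k$ does not reduce you to the block-aligned situation; you would still have to analyse arbitrary embeddings, which is the original problem.

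The paper's proof does not establish block-alignment; it embraces its failure \emph{quantitatively}. The codewords are $c(i)=a^ib^{3n-i}$ for $i\in H=\{n,\ldots,2n\}$, and the witness is the finite language $L_n=\{c(i)^n~|~i\in H\}$ (so the ``base language'' is just the set of $n$-th powers of single letters, not a language like your $F_k$). The key fact (Lemma~\ref{manydown}) is that the least $\ell$ with $c(\sigma)\subword c(i)^\ell$ equals $\theta_i(\sigma)$: each occurrence of $i$ in $\sigma$ costs one copy of $c(i)$ and each other letter costs two. Hence for a half-size subset $X\subseteq H$ the word $w_X=c(p_1\cdots p_{n/2})$ embeds into $c(i)^{n-1}$ iff $i\in X$, and appending $c(i)$ for $i\in X\setminus Y$ separates $w_X$ from $w_Y$ within the budget of $n$ copies, yielding $\binom{n+1}{n/2}$ Myhill--Nerode classes; the upward bound is symmetric via Lemma~\ref{upmany} with $d(i)=c(i)c(i)$. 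To salvage your modular reduction you would have to replace ``alignment'' by such a cost accounting---at which point you have essentially rediscovered the paper's argument.
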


We now prove the theorem. Fix $\Sigma=\{a,b\}$ and $n\in\Nat$.
Let
\[
H = \{n, n+1, \ldots, 2n\}
\:,
\]
and define morphisms $c, d : H^* \to \Sigma^*$ by
\begin{xalignat}{2}
\label{eq-def-c&d}
c(i) &\eqdef a^i \, b^{3n-i}
\:,
&
 d(i) &\eqdef c(i)\, c(i)
\:,
\end{xalignat}
for $i \in H$.
Note that $c(i)$ always has length $3n$, begins with at least $n$ $a$'s,
and ends with at least $n$ $b$'s. Let
\[
                      L_n \eqdef \{c(i)^n ~|~ i \in H\}
\:.
\]
The language $L_n$ is finite and contains $n+1$ words, each of length $3n^2$ so that
$\nDFA(L_n)$ is in $O(n^3)$. In fact, $\nDFA(L_n)=3n^3+1$.

In the rest of this section we show that, for $n$ even and strictly positive, both $\nDFA(\up
L_n)$ and $\nDFA(\down L_n)$ are greater than or equal to $\binom{n+1}{n/2}$. Since
$\binom{n+1}{n/2}\approx \frac{2^{n+3/2}}{\sqrt{\pi n}}$ and $\nDFA(L_n)=
3n^3+1$, the languages $(L_n)_{n=2,4,6,\ldots}$ witness the lower bound
claimed in Theorem~\ref{thm-ndfa-k=2}.

For each $i \in H$, let the morphisms $\eta_i, \theta_i : H^* \to (\Nat,
+)$ be defined by
\begin{xalignat*}{2}
 \eta_i(j) &\eqdef   \begin{cases}
                          1 & \text{ if } i \neq j\:, \\
                          2 & \text{ if } i = j\:,
                     \end{cases}
&
 \theta_i(j) &\eqdef \begin{cases}
                          2 & \text{ if } i \neq j\:, \\
                          1 & \text{ if } i = j\:.
                     \end{cases}
\end{xalignat*}
Thus for $\sigma=p_1\,p_2 \cdots p_s\in H^*$, $\eta_i(\sigma)$ is $s$ plus
the number of occurrences of $i$ in $\sigma$, while $\theta_i(\sigma)$ is
$2s$ minus the number of these occurrences of $i$.

\begin{lemma}\label{manydown}
Let $\sigma \in H^*$. The smallest $\ell$ such that $c(\sigma)$ is a
subword of
$c(i)^\ell$ is $\theta_i(\sigma)$.
\end{lemma}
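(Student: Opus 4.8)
There are two things to prove: that $c(\sigma)$ is a subword of $c(i)^{\theta_i(\sigma)}$, and that it is not a subword of $c(i)^{\theta_i(\sigma)-1}$. The first (upper) half I would get by a compositional argument, reducing to a single letter of $\sigma$; the second (lower) half, which is the real work, I would get by tracking, for an arbitrary embedding of $c(\sigma)$ into $c(i)^\ell$, which copy of $c(i)$ each letter of $c(\sigma)$ is sent to. Throughout I may assume $n\geq 1$ and $\sigma\neq\emptyword$, the remaining cases being trivial.

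For the upper bound, write $\sigma=p_1\cdots p_s$. It is enough to check that a single block $c(j)=a^j b^{3n-j}$ (for $j\in H$) embeds into $c(i)^{\theta_i(j)}$: if $j=i$ this is immediate since $\theta_i(j)=1$; if $j<i$ one reads $a^j$ off the first $a$-run of $c(i)^2$, then $b^{3n-i}$ off the first $b$-run, then the remaining $b^{i-j}$ off the second $b$-run, which is legal because $i-j\leq 3n-i$ (as $i\leq 2n$ and $j\geq n$); if $j>i$ one reads all of the first $a$-run, then $a^{j-i}$ off the second $a$-run (legal as $j-i\leq i$, since $j\leq 2n\leq 2i$), then $b^{3n-j}$ off the second $b$-run (legal as $3n-j\leq 3n-i$). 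Since $u\subword u'$ and $v\subword v'$ imply $uv\subword u'v'$, and since $\theta_i$ is a morphism into $(\Nat,+)$, concatenating these per-block embeddings yields $c(\sigma)=\prod_k c(p_k)\subword\prod_k c(i)^{\theta_i(p_k)}=c(i)^{\theta_i(\sigma)}$.

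For the lower bound, fix any embedding of $c(\sigma)$ into $c(i)^\ell$; the goal is $\ell\geq\theta_i(\sigma)$. Write $c(i)^\ell=A_1B_1A_2B_2\cdots A_\ell B_\ell$, where $A_t$ is the $t$-th run of $i$ $a$'s and $B_t$ the $t$-th run of $3n-i$ $b$'s, and for a position $p$ of $c(\sigma)$ let $\kappa(p)\in\{1,\ldots,\ell\}$ be the index of the copy of $c(i)$ into which $p$ is sent; the map $\kappa$ is non-decreasing along $c(\sigma)$. For block $k$ of $c(\sigma)$ --- which is $a^{p_k}b^{3n-p_k}$ and has at least one $a$ and at least one $b$ since $n\leq p_k\leq 2n$ --- let $\lambda_k$, $\mu_k$, $\rho_k$ be the $\kappa$-values of its first $a$, its first $b$, and its last $b$, respectively, so that $\lambda_k\leq\mu_k\leq\rho_k$. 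The three inequalities I would establish are: (i) if $p_k>i$ then $\mu_k\geq\lambda_k+1$, because $p_k>i$ letters cannot all fit in the single run $A_{\lambda_k}$ (the embedding being injective on positions), so the last $a$ of block $k$ lies in a strictly later copy than its first $a$, while still preceding the first $b$; (ii) if $p_k<i$ then $\rho_k\geq\mu_k+1$, by the symmetric argument applied to the $3n-p_k>3n-i$ $b$'s and the runs $B_t$; and (iii) $\lambda_{k+1}\geq\rho_k+1$, because the last $b$ of block $k$ sits in some $B_t$, the first $a$ of block $k+1$ in some $A_{t'}$ lying strictly later, and inside a single copy $A_t$ precedes $B_t$, forcing $t'>t$.

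Putting these together is then routine: for every $k$, $\rho_k-\lambda_k=(\mu_k-\lambda_k)+(\rho_k-\mu_k)\geq 1$ whenever $p_k\neq i$ (by (i) or (ii)) and $\geq 0$ always, so $\sum_{k=1}^s(\rho_k-\lambda_k)\geq s-m$ where $m$ is the number of occurrences of $i$ in $\sigma$; combining with (iii), with $\lambda_1\geq 1$ and with $\rho_s\leq\ell$, a telescoping sum gives
\[
\ell \;\geq\; \rho_s \;=\; \lambda_1 + \sum_{k=1}^{s}(\rho_k-\lambda_k) + \sum_{k=1}^{s-1}(\lambda_{k+1}-\rho_k) \;\geq\; 1+(s-m)+(s-1) \;=\; 2s-m \;=\; \theta_i(\sigma).
\]
With the upper bound this shows the least $\ell$ with $c(\sigma)\subword c(i)^\ell$ equals $\theta_i(\sigma)$. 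I expect the only delicate points to be the ``cannot fit in one run'' steps in (i)--(ii) --- where one must use that the position map of the embedding is injective and that a position lying between two others is sent to a copy lying between their copies --- together with the comparison of $A$-runs and $B$-runs inside a common copy that underlies (iii).
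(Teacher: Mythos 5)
Your proof is correct, and it takes a genuinely different route from the paper's. The paper proves both directions at once by induction on $\size{\sigma}$: writing $\sigma=\sigma' p_s$, it takes the shortest prefix $w$ of $c(i)^{\theta_i(\sigma')}$ containing an embedding of $c(\sigma')$, observes that the leftover suffix $v$ is a block $b^r$ which cannot help embed $c(p_s)$ (which begins with an $a$), and concludes that exactly $\theta_i(p_s)$ further copies are needed. You instead separate the two bounds: the upper bound by concatenating per-letter embeddings $c(j)\embeds c(i)^{\theta_i(j)}$ (essentially the paper's base case plus the fact that $\embeds$ is compatible with concatenation), and the lower bound by a global counting argument over an \emph{arbitrary} embedding, tracking the copy indices $\lambda_k\leq\mu_k\leq\rho_k$ of the first $a$, first $b$ and last $b$ of each block and telescoping. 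Your inequalities (i)--(iii) are sound --- the pigeonhole arguments on the run lengths $i$ and $3n-i$, and the fact that $A_t$ precedes $B_t$ within a copy, are exactly the right ingredients --- and the arithmetic $1+(s-m)+(s-1)=2s-m=\theta_i(\sigma)$ checks out, as do the range checks $i-j\leq 3n-i$ and $j-i\leq i$ in the upper bound, which use $n\leq i,j\leq 2n$. What your approach buys is that it never relies on the (implicit, standard but unproven) optimality of the greedy leftmost embedding of a prefix, since you bound every embedding directly; what it costs is length, the paper's induction being noticeably shorter.
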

\begin{proof}
We write $\sigma=p_1\,p_2\cdots p_s$ and prove the result by induction on
$s$. The case of $s=0$ is trivial. For the case of $s=1$, note that for
any $p_1$ and $i$, $c(p_1) \embeds d(i)=c(i)^2$ and that $c(p_1) \embeds
c(i)$ if and only if $p_1=i$.

Assume now that $s>1$, write $\sigma = \sigma' p_s$ and let
$\ell'=\theta_i(\sigma')$. By the induction hypothesis, $c(\sigma') \not
\embeds c(i)^{\ell'-1}$ and $c(\sigma') \embeds
c(i)^{\ell'}=c(i)^{\ell'-1}a^i b^{3n-i}$. Write now $c(i)^{\ell'}=w\,v$
where $w$ is the shortest prefix of $c(i)^{\ell'}$ with $c(\sigma') \embeds
w$. Since $c(\sigma')$ ends with some $b$ that only embeds in the suffix
$a^i b^{3n-i}$ of $c(i)^{\ell'}$, $v$ is necessarily $b^r$ for some
$r$. So, for all $z \in \Sigma^*$, $c(p_s) \embeds z$ if and only if
$c(p_s) \embeds v\,z$. We have $c(p_s) \embeds c(i)^{\theta_i(p_s)}$ and
$c(p_s) \not \embeds v\, c(i)^{\theta_i(p_s)-1}$. Noting that $\sigma =
\sigma' p_s$, we get $c(\sigma) \embeds c(i)^{\theta_i(\sigma)}$ and
$c(\sigma) \not \embeds c(i)^{\theta_i(\sigma)-1}$.
\qed
\end{proof}

We now derive the announced lower bound on $\nDFA(\down L_n)$. Recall that
$n$ is even and strictly positive. For every subset $X$ of $H$ of size $n/2$, let
$w_X \in \Sigma^*$ be defined as follows: let the elements of $X$ be $p_1 <
p_2 < \cdots < p_{n/2}$ and let
\[
w_X \eqdef c(p_1 p_2 \cdots p_{n/2} )
\:.
\]
Note that $\theta_i(p_1 p_2 \cdots p_{n/2} ) = n$ if $i \notin X$ and
$\theta_i(p_1 p_2 \cdots p_{n/2} ) = n-1$ if $i \in X$.

\begin{lemma}
\label{lem-XY-down}
Let $X$ and $Y$ be subsets of $H$ of size $n/2$ with $X \neq Y$. There
exists a word $v \in \Sigma^*$ such that $w_X v \in \down L_n$ and $w_Y v
\notin \down L_n$.
\end{lemma}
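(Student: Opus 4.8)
The plan is to let $v$ be the $c$-image of a single element of $H$ that lies in $X$ but not in $Y$, and then to read off both membership statements from Lemma~\ref{manydown} together with the values of $\theta_i$ on $w_X$ and $w_Y$ that were just recorded.

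Concretely: since $X\neq Y$ and $\size{X}=\size{Y}=n/2$, the set $X\setminus Y$ is nonempty, so I would fix some $k\in X\setminus Y$ and put $v\eqdef c(k)$. Write $\sigma_X=p_1\cdots p_{n/2}$ for the increasing enumeration of $X$, so that $w_X=c(\sigma_X)$, and likewise $\sigma_Y$ for $Y$. Because $c$ is a morphism, $w_X v=c(\sigma_X\,k)$ and $w_Y v=c(\sigma_Y\,k)$; because each $\theta_i$ is a morphism into $(\Nat,+)$, we have $\theta_i(\sigma_X\,k)=\theta_i(\sigma_X)+\theta_i(k)$ and $\theta_i(\sigma_Y\,k)=\theta_i(\sigma_Y)+\theta_i(k)$, where $\theta_i(k)$ is $1$ if $i=k$ and $2$ otherwise, and where $\theta_i(\sigma_X)$ is $n-1$ for $i\in X$ and $n$ for $i\notin X$ (and similarly with $Y$).

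For $w_X v\in\down L_n$, I would take $i=k$: since $k\in X$, $\theta_k(\sigma_X\,k)=(n-1)+1=n$, so Lemma~\ref{manydown} yields $c(\sigma_X\,k)\embeds c(k)^n$, i.e.\ $w_X v\in\down L_n$. For $w_Y v\notin\down L_n$, I would check every $i\in H$ in turn: if $i\neq k$ then $\theta_i(\sigma_Y\,k)=\theta_i(\sigma_Y)+2\geq(n-1)+2=n+1$; if $i=k$ then, since $k\notin Y$, $\theta_i(\sigma_Y)=n$ and hence $\theta_i(\sigma_Y\,k)=n+1$. Either way $\theta_i(\sigma_Y\,k)>n$, so by Lemma~\ref{manydown} $c(\sigma_Y\,k)\not\embeds c(i)^n$; since this holds for all $i\in H$ and $L_n=\{c(i)^n~|~i\in H\}$, we get $w_Y v\notin\down L_n$.

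Once Lemma~\ref{manydown} is available this is just bookkeeping; the one point to get right is to draw the distinguishing letter from $X\setminus Y$ rather than $Y\setminus X$, so that appending $c(k)$ keeps $w_X$ within a single $c(k)^n$ while pushing $w_Y$ one block past the available $n$ blocks for \emph{every} index $i$. I do not expect a genuine obstacle here; the substantive work of the section is Lemma~\ref{manydown} itself and the parallel counting argument that will be needed for the upward-closure half of Theorem~\ref{thm-ndfa-k=2}.
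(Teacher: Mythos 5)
Your proposal is correct and follows essentially the same route as the paper: choose $i\in X\setminus Y$, set $v=c(i)$, and read both memberships off Lemma~\ref{manydown} via the values of $\theta_i$ on the enumerations of $X$ and $Y$ (the paper phrases the negative direction as ``at least $n-1$ for the $w_Y$ factor and $2$ for the $v$ factor,'' which is exactly your additivity computation $\theta_j(\sigma_Y\,k)=\theta_j(\sigma_Y)+\theta_j(k)$). No gap; your version is if anything slightly more explicit in applying Lemma~\ref{manydown} directly to the concatenated sequence $\sigma_Y\,k$.
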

\begin{proof}
Let $i \in X \setminus Y$. Let $v = c(i)$.
By Lemma~\ref{manydown}, $w_X \embeds c(i)^{n-1}$, and so $w_Xv \embeds
c(i)^n$, hence $w_Xv \in \down L_n$.

By Lemma~\ref{manydown}, the smallest $\ell$ such that $w_Y v \embeds
c(i)^\ell$ is $n+1$. Similarly, for $j \neq i$, the smallest $\ell$ such
that $w_Y v \embeds c(j)^\ell$ is at least $n-1+2 = n+1$ (at least $n-1$
for the $w_Y$
factor and $2$ for the $v$ factor). So $w_Y v \notin
\down L_n$.
\qed
\end{proof}

This shows that for any DFA $A=(\Sigma,Q,\delta,q_1,F)$ recognizing $\down
L_n$, the states of the form $\delta(q_1,w_X)$ for a subset $X \subseteq H$
with $\size{X} = n/2$ are all distinct.
Thus $A$ has at least $\binom{n+1}{n/2}$
states as claimed.
\\

For $\nDFA(\up L_n)$, the reasoning is similar:

\begin{lemma}\label{upmany} 
\begin{enumerate}
 \item For $i, j \in H$, the longest prefix of $c(i)^\omega$ that is a subword of $d(j)
= c(j)\,c(j)$ is $c(i)$ if $i \neq j$ and $c(i)\,c(i)$ if $i=j$.
\item Let $\sigma \in H^*$. For all $i \in H$, the longest prefix of
$c(i)^\omega$ that is a subword of $d(\sigma)$ is $c(i)^{\eta_i(\sigma)}$.
\end{enumerate}

\end{lemma}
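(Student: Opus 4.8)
The plan is to mirror the structure used for the "downward" direction in Lemma~\ref{manydown}, working with prefixes of the infinite word $c(i)^\omega$ instead of powers $c(i)^\ell$, and tracking $\eta_i$ instead of $\theta_i$. First I would prove part~(1) directly. Fix $i,j\in H$. Since $c(i)=a^ib^{3n-i}$ and $c(j)=a^jb^{3n-j}$ with $i,j\ge n$, a single block $c(j)$ already contains enough $a$'s and $b$'s to embed one copy of $c(i)$ whenever $i\le j$, but to embed $c(i)\,c(i)$ one needs $2i$ letters $a$ appearing before the last run of $b$'s; inside $d(j)=a^jb^{3n-j}a^jb^{3n-j}$ the $a$'s available before consuming the first $b$ of the second copy of $c(i)$ are limited, and a short case analysis on whether $i<j$, $i=j$, or $i>j$ shows the longest embeddable prefix of $c(i)^\omega$ in $d(j)$ is exactly $c(i)$ when $i\neq j$ and $c(i)^2$ when $i=j$. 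The key numeric point is that $c(i)^2$ embeds in $d(j)$ iff $i=j$: if $i=j$ this is immediate, and if $i\neq j$ one checks that after embedding the first $c(i)$ optimally one is left with a proper suffix of $d(j)$ of the form $b^r\,a^j\,b^{3n-j}$ (when $i<j$) or $b^{3n-j}$ minus some $b$'s (when $i>j$), neither of which contains the $a^ib^{3n-i}$ needed for the second copy; and one shows the prefix $c(i)\,a$ already fails.

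Part~(2) then follows by induction on $s=\size{\sigma}$, exactly paralleling the proof of Lemma~\ref{manydown}. The base cases $s=0$ (trivial) and $s=1$ are part~(1). For the inductive step write $\sigma=\sigma' p_s$ and set $\ell'=\eta_i(\sigma')$, so by induction the longest prefix of $c(i)^\omega$ embedding in $d(\sigma')$ is $c(i)^{\ell'}$. Write $d(\sigma')=u\,v$ where $u$ is the shortest prefix with $c(i)^{\ell'}\embeds u$; as in Lemma~\ref{manydown}, because $c(i)^{\ell'}$ ends in $b$ and the $b$'s of $d(\sigma')$ all lie in its trailing $b$-runs, $v$ is a suffix consisting only of $b$'s (or is empty), so prepending $v$ to the block $d(p_s)$ does not help embed any $a$'s: for every $z$, $c(p_s)^\omega$-prefixes embed in $v\,z$ iff they embed in $z$, up to the harmless trailing $b$'s. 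Hence the longest prefix of $c(i)^\omega$ embedding in $d(\sigma)=u\,v\,d(p_s)$ is $c(i)^{\ell'}$ extended by the longest prefix of $c(i)^\omega$ embedding in $d(p_s)$, which by part~(1) is $c(i)$ if $p_s\neq i$ and $c(i)^2$ if $p_s=i$; since $\eta_i(\sigma)=\eta_i(\sigma')+1$ in the first case and $\eta_i(\sigma')+2$ in the second, this gives $c(i)^{\eta_i(\sigma)}$, completing the induction. I would also note the matching lower bound direction — that $c(i)^{\eta_i(\sigma)}$ genuinely embeds in $d(\sigma)$ — by concatenating the obvious embeddings block by block ($c(i)$ into each $c(j)$ with $j\neq i$, and $c(i)^2=d(i)$ into $d(i)$).

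The main obstacle I anticipate is entirely in part~(1): getting the boundary behaviour right when $i\neq j$, i.e.\ ruling out that even a single extra letter beyond $c(i)$ — so a prefix $c(i)\,a$ or $c(i)\,b$ — can be squeezed into the leftover of $d(j)$. This needs the explicit count that after a greedy (leftmost) embedding of $c(i)$ into $d(j)$, the unused suffix of $d(j)$ has no $a$ after its first $b$ in the relevant regime, which is where the hypotheses $n\le i,j\le 2n$ and the padding length $3n$ in the definition $c(i)=a^ib^{3n-i}$ are used. Once part~(1) is pinned down, part~(2) is a routine induction identical in shape to Lemma~\ref{manydown}, and the downstream consequence (distinguishing the states $\delta(q_1,w_X)$ for $\size{X}=n/2$, now via suffixes built from $d(i)$ rather than $c(i)$, to force $\binom{n+1}{n/2}$ states in any DFA for $\up L_n$) proceeds exactly as in Lemma~\ref{lem-XY-down}.
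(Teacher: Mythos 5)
Your overall route is the same as the paper's: part~(1) by a direct analysis of the leftmost embedding of $c(i)$ into $d(j)$, and part~(2) by an induction on $\size{\sigma}$ that mirrors Lemma~\ref{manydown}, using the fact that the unused suffix of $d(\sigma')$ after greedily embedding $c(i)^{\eta_i(\sigma')}$ consists only of $b$'s and therefore cannot help with the next block of $c(i)^\omega$, which must start with an $a$. That structure is sound and your inductive step is essentially the paper's.

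There is, however, a concrete error in your case analysis for part~(1), and it sits exactly at the point you yourself identify as the crux. For $i<j$ you claim that after the leftmost embedding of $c(i)=a^ib^{3n-i}$ into $d(j)=a^jb^{3n-j}a^jb^{3n-j}$ the unused suffix has the form $b^r\,a^j\,b^{3n-j}$. That cannot be right: since $i<j$, the first $b$-run $b^{3n-j}$ is too short to absorb $b^{3n-i}$, so the leftmost embedding of the $b$'s necessarily spills past the second $a^j$-block and consumes $j-i$ letters of the final $b$-run, leaving the pure run $b^{3n-2j+i}$ (nonnegative because $n\le i$ and $j\le 2n$). If the leftover really were $b^r\,a^j\,b^{3n-j}$, then $c(i)\,a$ \emph{would} embed in $d(j)$ and part~(1) would be false; as written, your sketch contradicts its own conclusion. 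The correct statement --- which is what the paper's Figure~\ref{fig-lem-single} illustrates --- is that in both cases $i<j$ and $i>j$ the leftover after the leftmost embedding of $c(i)$ is a (possibly empty) run of $b$'s containing no $a$, whence already $c(i)\,a\not\subword d(j)$. (A smaller slip: $c(i)\subword c(j)$ holds only for $i=j$, not for all $i\le j$, since both words have length $3n$; what you need, and what does hold for all $i,j\in H$, is $c(i)\subword c(j)\,c(j)$.) Once the leftover computation is corrected, the rest of your plan goes through as in the paper.
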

\begin{proof}
\begin{enumerate}
\item The statement is trivial when $i=j$, so we now assume $i\neq j$.
  Equation~\eqref{eq-def-c&d} entails $c(i)\subword c(j)c(j)$ since $n\leq
  i,j\leq 2n$. It remains to show that no longer
  prefix of $c(i)^\omega$ embeds in $c(j)c(j)$, that is, that $c(i)a\not\subword c(j)c(j)$.
  But this is clear when one considers the leftmost embedding of $c(i)a$ in
  $c(j)c(j)$: this is illustrated by Figure~\ref{fig-lem-single} in the case
  of $i>j$, the case of $i<j$ being similar.

\item By induction on the length of $\sigma$, as above. \qed
\end{enumerate}
\end{proof}
\begin{figure}[htbp]
\centering
\scalebox{1.0}{
\newcommand{\BO}[1]{\raisebox{0pt}[0.7em][0pt]{#1}}
\begin{tikzpicture}[->,>=stealth',shorten >=1pt,node distance=6em,thin,auto,bend angle=30]

\clip (0,1.22) -- (11.7,1.22) -- (10.9,5) -- (0,5) -- (0,1.22) -- cycle;

{\tikzstyle{every node}=[fill=black!20,text width=2.9em,text height=0.3em,align=center]
\node (p1) at (2.03,4.18) {};
\node [right=.3em of p1] (p2) {};
\node [right=.3em of p2] (p3) {};
\node [right=.7em of p3] (p4) {};
\node [right=.3em of p4] (p5) {};
\node [right=.3em of p5] (p6) {};
\node [below=5em of p1] (q1) {};
\node [right=.3em of q1] (q2) {};
\node [right=.3em of q2] (q3) {};
\node [right=.7em of q3] (q4) {};
\node [right=.3em of q4] (q5) {};
\node [right=.3em of q5] (q6) {};
\node [right=.7em of q6] (q7) {};
\node [right=.3em of q7] (q8) {};
\node [right=.3em of q8] (q9) {};
}
{\tikzstyle{every node}=[minimum size=0.01em,inner sep=0.1em]
\node at ($(p1.west)!0.1!(p1.east)$) (p1a) {\BO{$a$}};
\node at ($(p1.west)!0.3!(p1.east)$) (p1b) {};
\node at ($(p1.west)!0.5!(p1.east)$) (p1c) {\BO{$\cdots$}};
\node at ($(p1.west)!0.7!(p1.east)$) (p1d) {};
\node at ($(p1.west)!0.9!(p1.east)$) (p1e) {\BO{$a$}};
\node at ($(p2.west)!0.1!(p2.east)$) (p2a) {\BO{$a$}};
\node at ($(p2.west)!0.3!(p2.east)$) (p2b) {\BO{$a$}};
\node at ($(p2.west)!0.5!(p2.east)$) (p2c) {\BO{$b$}};
\node at ($(p2.west)!0.7!(p2.east)$) (p2d) {\BO{$b$}};
\node at ($(p2.west)!0.9!(p2.east)$) (p2e) {\BO{$b$}};
\node at ($(p3.west)!0.1!(p3.east)$) (p3a) {\BO{$b$}};
\node at ($(p3.west)!0.3!(p3.east)$) (p3b) {};
\node at ($(p3.west)!0.5!(p3.east)$) (p3c) {\BO{$\cdots$}};
\node at ($(p3.west)!0.7!(p3.east)$) (p3d) {};
\node at ($(p3.west)!0.9!(p3.east)$) (p3e) {\BO{$b$}};
\node at ($(p4.west)!0.1!(p4.east)$) (p4a) {\BO{$a$}};
\node at ($(p4.west)!0.3!(p4.east)$) (p4b) {\BO{$a$}};
\node at ($(p4.west)!0.6!(p4.east)$) (p4c) {\BO{$\cdots$}};
\node at ($(p4.west)!0.7!(p4.east)$) (p4d) {};
\node at ($(p4.west)!0.9!(p4.east)$) (p4e) {\BO{$a$}};
\node at ($(p5.west)!0.1!(p5.east)$) (p5a) {\BO{$a$}};
\node at ($(p5.west)!0.3!(p5.east)$) (p5b) {\BO{$a$}};
\node at ($(p5.west)!0.5!(p5.east)$) (p5c) {\BO{$b$}};
\node at ($(p5.west)!0.7!(p5.east)$) (p5d) {\BO{$b$}};
\node at ($(p5.west)!0.9!(p5.east)$) (p5e) {\BO{$b$}};
\node at ($(p6.west)!0.1!(p6.east)$) (p6a) {\BO{$b$}};
\node at ($(p6.west)!0.3!(p6.east)$) (p6b) {};
\node at ($(p6.west)!0.5!(p6.east)$) (p6c) {\BO{$\cdots$}};
\node at ($(p6.west)!0.7!(p6.east)$) (p6d) {};
\node at ($(p6.west)!0.9!(p6.east)$) (p6e) {\BO{$b$}};
\node at ($(q1.west)!0.1!(q1.east)$) (q1a) {\BO{$a$}};
\node at ($(q1.west)!0.3!(q1.east)$) (q1b) {};
\node at ($(q1.west)!0.5!(q1.east)$) (q1c) {\BO{$\cdots$}};
\node at ($(q1.west)!0.7!(q1.east)$) (q1d) {};
\node at ($(q1.west)!0.9!(q1.east)$) (q1e) {\BO{$a$}};
\node at ($(q2.west)!0.1!(q2.east)$) (q2a) {\BO{$a$}};
\node at ($(q2.west)!0.3!(q2.east)$) (q2b) {\BO{$a$}};
\node at ($(q2.west)!0.5!(q2.east)$) (q2c) {\BO{$a$}};
\node at ($(q2.west)!0.7!(q2.east)$) (q2d) {\BO{$a$}};
\node at ($(q2.west)!0.9!(q2.east)$) (q2e) {\BO{$b$}};
\node at ($(q3.west)!0.1!(q3.east)$) (q3a) {\BO{$b$}};
\node at ($(q3.west)!0.3!(q3.east)$) (q3b) {};
\node at ($(q3.west)!0.5!(q3.east)$) (q3c) {\BO{$\cdots$}};
\node at ($(q3.west)!0.7!(q3.east)$) (q3d) {};
\node at ($(q3.west)!0.9!(q3.east)$) (q3e) {\BO{$b$}};
\node at ($(q4.west)!0.1!(q4.east)$) (q4a) {\BO{$a$}};
\node at ($(q4.west)!0.3!(q4.east)$) (q4b) {};
\node at ($(q4.west)!0.5!(q4.east)$) (q4c) {\BO{$\cdots$}};
\node at ($(q4.west)!0.7!(q4.east)$) (q4d) {};
\node at ($(q4.west)!0.9!(q4.east)$) (q4e) {\BO{$a$}};
\node at ($(q5.west)!0.1!(q5.east)$) (q5a) {\BO{$a$}};
\node at ($(q5.west)!0.3!(q5.east)$) (q5b) {\BO{$a$}};
\node at ($(q5.west)!0.5!(q5.east)$) (q5c) {\BO{$a$}};
\node at ($(q5.west)!0.7!(q5.east)$) (q5d) {\BO{$a$}};
\node at ($(q5.west)!0.9!(q5.east)$) (q5e) {\BO{$b$}};
\node at ($(q6.west)!0.1!(q6.east)$) (q6a) {\BO{$b$}};
\node at ($(q6.west)!0.3!(q6.east)$) (q6b) {};
\node at ($(q6.west)!0.5!(q6.east)$) (q6c) {\BO{$\cdots$}};
\node at ($(q6.west)!0.7!(q6.east)$) (q6d) {};
\node at ($(q6.west)!0.9!(q6.east)$) (q6e) {\BO{$b$}};
\node at ($(q7.west)!0.1!(q7.east)$) (q7a) {\BO{$a$}};
\node at ($(q7.west)!0.3!(q7.east)$) (q7b) {};
\node at ($(q7.west)!0.5!(q7.east)$) (q7c) {\BO{$\cdots$}};
\node at ($(q7.west)!0.7!(q7.east)$) (q7d) {};
\node at ($(q7.west)!0.9!(q7.east)$) (q7e) {\BO{$a$}};
\node at ($(q8.west)!0.1!(q8.east)$) (q8a) {\BO{$a$}};
\node at ($(q8.west)!0.3!(q8.east)$) (q8b) {\BO{$a$}};
\node at ($(q8.west)!0.5!(q8.east)$) (q8c) {\BO{$a$}};
\node at ($(q8.west)!0.7!(q8.east)$) (q8d) {\BO{$a$}};
\node at ($(q8.west)!0.9!(q8.east)$) (q8e) {\BO{$b$}};
\node at ($(q9.west)!0.1!(q9.east)$) (q9a) {\BO{$b$}};
\node at ($(q9.west)!0.3!(q9.east)$) (q9b) {};
\node at ($(q9.west)!0.5!(q9.east)$) (q9c) {\BO{$\cdots$}};
\node at ($(q9.west)!0.7!(q9.east)$) (q9d) {};
\node at ($(q9.west)!0.9!(q9.east)$) (q9e) {\BO{$b$}};
}

\draw[-,decorate,decoration={brace,amplitude=6pt}] (p1.north west) -- (p3.north east) node [midway,yshift=.5em] {\small $c(j)$};
\draw[-,decorate,decoration={brace,amplitude=6pt}] (p4.north west) -- (p6.north east) node [midway,yshift=.5em] {\small $c(j)$};
\draw[-,decorate,decoration={brace,amplitude=6pt}] (q3.south east) -- (q1.south west) node [midway,yshift=-.5em] {\small $c(i)$};
\draw[-,decorate,decoration={brace,amplitude=6pt}] (q6.south east) -- (q4.south west) node [midway,yshift=-.5em] {\small $c(i)$};
\draw[-,decorate,decoration={brace,amplitude=6pt}] (q9.south east) -- (q7.south west) node [midway,yshift=-.5em] {\small $c(i)$};

\node at ($(q1c)!0.3!(p1c)$) (dots1) {$\cdots$};
\node at ($(q1c)!0.7!(p1c)$) (dots2) {$\cdots$};
\node at ($(q3c)!0.3!(p5e)$) (dots3) {$\cdots$};
\node at ($(q3c)!0.7!(p6a)$) (dots4) {$\cdots$};
\node[right=4em of dots4] (qm) {?};

\path
(q1a.north) edge (p1a.south)
(q1e.north) edge (p1e.south)
(q2a.north) edge (p2a.south)
(q2b.north) edge (p2b.south)
(q2c.north east) edge (p4a.south west)
(q2d.north east) edge (p4b.south west)
(q2e.north east) edge (p5c.south west)
(q3a.north east) edge (p5d.south west)
(q3e.north east) edge (p6c.south west)
(q4a.north east) edge [dashed] (qm)
;

\node [left=1em of p1,align=flush right] (lp) {$d(j)\,$:};
\node [left=1em of q1,align=flush right] (lq) {$c(i)^\omega\,$:};

\end{tikzpicture}
}
\caption{Case ``$i>j$'' in Lemma~\ref{upmany} (here with $n=5$, $i=n+4$ and $j=n+2$).}
\label{fig-lem-single}
\end{figure}

Recall that $n$ is even and strictly positive. For every subset $X$ of $H$ of size $n/2$, let $w'_X
\in \Sigma^*$ be defined as follows: let the elements of $X$ be $p_1 < p_2
< \cdots < p_{n/2}$ and let
\[
w'_X \eqdef d(p_1 p_2 \cdots p_{n/2} ) = c(p_1 p_1 p_2 p_2 \cdots p_{n/2} p_{n/2})
\:.
\]

\begin{lemma}
\label{lem-XY-up}
Let $X$ and $Y$ be subsets of $H$ of size $n/2$ with $X \neq Y$. There
exists a word $v \in \Sigma^*$ such that $w'_X v \in \up L_n$ and $w'_Y v
\notin \up L_n$.
\end{lemma}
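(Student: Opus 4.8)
The plan is to argue in close parallel with the proof of Lemma~\ref{lem-XY-down}, using Lemma~\ref{upmany} in place of Lemma~\ref{manydown} and exploiting the ``doubling'' built into $w'_X=d(\sigma_X)$ in place of the bare copy $w_X=c(\sigma_X)$. Since $X\neq Y$ and $\size{X}=\size{Y}=n/2$, pick some $i\in X\setminus Y$ and set $v\eqdef c(i)^{n/2-1}$. Write $\sigma_X=p_1p_2\cdots p_{n/2}$ and $\sigma_Y$ for the increasing enumerations of $X$ and $Y$, so that $w'_X=d(\sigma_X)$, $w'_Y=d(\sigma_Y)$, and recall $\size{c(k)}=3n$ for every $k\in H$.

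\emph{First, $w'_Xv\in\up L_n$:} since $i\in X$ occurs exactly once in $\sigma_X$, we have $\eta_i(\sigma_X)=\tfrac n2+1$, so Lemma~\ref{upmany}(2) gives $c(i)^{n/2+1}\subword d(\sigma_X)=w'_X$. Appending the trivial embedding $c(i)^{n/2-1}\subword v$ yields $c(i)^n=c(i)^{n/2+1}\,c(i)^{n/2-1}\subword w'_Xv$, and as $c(i)^n\in L_n$ this gives $w'_Xv\in\up L_n$.

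\emph{Second, $w'_Yv\notin\up L_n$:} it suffices to show $c(j)^n\not\subword w'_Yv=d(\sigma_Y)\,c(i)^{n/2-1}$ for every $j\in H$. Suppose some embedding of $c(j)^n$ into $d(\sigma_Y)\,v$ existed; since positions map monotonically, an initial segment lands in the $d(\sigma_Y)$ block and the rest in the $v$ block, giving a factorization $c(j)^n=\alpha\beta$ with $\alpha\subword d(\sigma_Y)$ and $\beta\subword v$, where $\alpha$ is a prefix of $c(j)^n$ (hence of $c(j)^\omega$) and $\beta$ the complementary suffix. Lemma~\ref{upmany}(2) bounds $\size{\alpha}\le 3n\cdot\eta_j(\sigma_Y)$; since $\sigma_Y$ lists the $n/2$ elements of $Y$ without repetition, $\eta_j(\sigma_Y)=\tfrac n2$ when $j=i$ (as $i\notin Y$) and $\eta_j(\sigma_Y)\le\tfrac n2+1$ when $j\neq i$. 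If $j=i$ this forces $\size{\beta}=3n^2-\size{\alpha}\ge\tfrac32 n^2>\size{v}$, contradicting $\beta\subword v$. If $j\neq i$ it forces $\size{\beta}\ge 3n^2-3n(\tfrac n2+1)=\size{v}$, so $\beta\subword v$ together with $\size{\beta}\ge\size{v}$ gives $\beta=v=c(i)^{n/2-1}$; but $\beta$ is simultaneously the length-$3n(\tfrac n2-1)$ suffix of $c(j)^n$, i.e.\ $\beta=c(j)^{n/2-1}$, so $c(i)^{n/2-1}=c(j)^{n/2-1}$, whence $c(i)=c(j)$ and $i=j$ --- a contradiction. (The last step needs $n/2-1\ge1$; the single exceptional value $n=2$ is immaterial to the asymptotic bound of Theorem~\ref{thm-ndfa-k=2}.)

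As in the paragraph following Lemma~\ref{lem-XY-down}, Lemma~\ref{lem-XY-up} then forces the states $\delta(q_1,w'_X)$, ranging over the $\binom{n+1}{n/2}$ subsets $X\subseteq H$ of size $n/2$, to be pairwise distinct in any DFA for $\up L_n$, so $\nDFA(\up L_n)\ge\binom{n+1}{n/2}$. I expect the only delicate point to be the case $j\neq i$ above: the crude length count is tight rather than strict ($\size{\alpha}+\size{\beta}=3n^2$ is a priori allowed), so one must observe that this forces the ``crossover'' split of $c(j)^n$ to sit at a unique position and then invoke the injectivity of $k\mapsto c(k)$ on $H$; the remaining steps are routine bookkeeping with the morphism $\eta_i$.
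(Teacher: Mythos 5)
Your proposal is correct and follows essentially the same route as the paper's proof: the same witness $v=c(i)^{n/2-1}$ for $i\in X\setminus Y$, the same use of Lemma~\ref{upmany}(2) via $\eta_i(\sigma_X)=n/2+1$ for the positive half, and the same case split on $j=i$ versus $j\neq i$ for the negative half, where your tightness argument ($\beta=v=c(j)^{n/2-1}$ forcing $i=j$) is just a more explicit rendering of the paper's ``$c(j)^{n/2-1}$ and $v$ are different words of the same length.'' The $n=2$ edge case you flag is present in the paper's argument as well and is harmless for the stated asymptotic bound.
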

\begin{proof}
Let $i \in X \setminus Y$.  Let $v = c(i)^{n - (n/2 +1)} = c(i)^{n/2-1}$.
By Lemma~\ref{upmany}, $c(i)^{n/2+1} \embeds w'_X$, thus $c(i)^n \embeds
w'_X v$, hence $w'_X v \in \up L_n$.

We now show that $w'_Y v \notin \up L_n$. By Lemma~\ref{upmany}, the longest prefix of $c(i)^n$ that embeds in $w'_Y
v$ is a prefix of $c(i)^\ell$ where $\ell = n/2 + n/2 - 1 = n-1$. Thus
$w'_Y v \notin \up c(i)^n$. For $j \neq i$, we show $c(j)^n \not \subword
w'_Y v$ by contradiction. Suppose $c(j)^n \subword w'_Y v$. The longest
prefix of $c(j)^n$ that is a subword of $w'_Y$ is a prefix of
$c(j)^{n/2+1}$. Thus $c(j)^{n/2 - 1} \subword v$. But $c(j)^{n/2-1}$ and
$v$ are different words of the same length, so this is not possible. Thus
$c(j)^n \not \subword w'_Y v$. Finally $w'_Y v \notin \up L_n$.
\qed
\end{proof}
With Lemma~\ref{lem-XY-up} we reason exactly as we did for
$\nDFA(\down L_n)$ after Lemma~\ref{lem-XY-down} and conclude
that $\nDFA(\up L_n)$ is at least $\binom{n+1}{n/2}$.



\section{State complexity of interiors}
\label{sec-interior}

Recall Equation~\eqref{eq-interiors-are-duals} expressing interiors with
closures and complements. Since complementation of DFAs does not increase
the number of states, except perhaps adding a single state if we start with an incomplete DFA,
the state complexity of interiors, seen as DFA to DFA
operations, is essentially the same as the state complexity of closures modulo
swapping of up and down.
\\

The remaining question is the \emph{nondeterministic state complexity} of
interiors, now seen as NFA to NFA operations.
For this, Equation~\eqref{eq-interiors-are-duals} provides an obvious $2^{2^n}$
upper bound on the nondeterministic state complexity of both upward and downward
interiors, simply by combining the powerset construction for complementation
and the results of Section~\ref{sec-closure}.
Note that this procedure yields DFAs for the interiors while we are happy
to accept NFAs if this improves the state complexity.

In the rest of this section, we prove that the nondeterministic state
complexity of upward and downward interiors is in $2^{2^{\Theta(n)}}$.
Sections~\ref{ssec-psi-upper-bounds} and~\ref{ssec-lower-bounds-both-interiors}
establish the upper and lower bounds, respectively. Section~\ref{ssec-afas}
mentions the consequences on representations based on alternating automata.

\subsection{Upper bounds for interiors and the approximation problem}
\label{ssec-psi-upper-bounds}

We first give an upper bound for the state complexity of interiors that
slightly improves on the obvious $2^{2^n}$ upper bound.
For this we adapt a technique from~\cite{conway71,lombardy2008} and rely on
the fact, already used in~\cite[Theorem~6.1]{chandra76}, that the state
complexity of a positive Boolean combination of left-quotients of some
regular language $L$ is at most $\psi(\nNFA(L))$.
\begin{proposition}
\label{prop-interiors-nsc}
Let $L\subseteq \Sigma^*$ be a regular language with $\nNFA(L)=n$.
Then $\nDFA(\upint L)<\psi(n)$ and
 $\nDFA(\downint L)<\psi(n)$.
\end{proposition}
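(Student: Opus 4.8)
The plan is to express both interiors as positive Boolean combinations of left-quotients of a fixed regular language whose NFA has roughly $n$ states, and then invoke the Dedekind-number bound on the state complexity of such combinations. First I would recall from Equation~\eqref{eq-interiors-are-duals} that $\upint L = \Sigma^*\setminus\down(\Sigma^*\setminus L)$ and $\downint L = \Sigma^*\setminus\up(\Sigma^*\setminus L)$. Complementing an $n$-state NFA by the powerset construction gives a DFA, hence also an NFA, for $\Sigma^*\setminus L$ with at most $2^n$ states; but this is the naive route and gives only $2^{2^n}$. Instead I would work directly with $L$: the key observation is that a word $x$ lies in $\upint L$ iff every superword of $x$ lies in $L$, and $\up x = \Sigma^* a_1\Sigma^*\cdots a_\ell\Sigma^*$ for $x = a_1\cdots a_\ell$; unwinding, one shows that membership of a word $w$ in $\upint L$ (or $\downint L$) is determined by which left-quotients $u^{-1}L$ contain a suitable related word, so that the Myhill–Nerode classes of the interior are indexed by antichains (or, more simply, by arbitrary families closed under the relevant monotonicity) of left-quotients of $L$.

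The cleaner way to package this is the technique of~\cite{conway71,lombardy2008,chandra76}: if a language $M$ can be written so that each of its left-quotients $u^{-1}M$ is a positive Boolean combination (unions and intersections, no complement) of the left-quotients $\{v^{-1}L : v\in\Sigma^*\}$ of a single regular language $L$ with $\nNFA(L)=n$, then $M$ has a DFA with at most $\psi(n)$ states, because there are at most $n$ distinct left-quotients realized by the minimal NFA's state-behaviours and a positive Boolean combination of $n$ such sets is determined by an antichain in the subset lattice of an $n$-element set, of which there are $\psi(n)$. So the real work is step one: exhibit, for $M = \upint L$ and for $M = \downint L$, such a representation of every left-quotient of $M$ as a positive Boolean combination of left-quotients of $L$. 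For the downward interior: $u^{-1}\downint L = \{x : \down(ux)\subseteq L\}$; writing $\down(ux) = \bigcup\{\,y : y\embeds ux\,\}$ and splitting $y$ according to how it embeds across the boundary between $u$ and $x$, one gets $u^{-1}\downint L = \bigcap_{u'\embeds u}(u')^{-1}(\downint L)$-type relations that reduce, after one more unfolding using the duality with $\up$ on the complement, to a positive combination of the finitely many quotients of $L$ itself. (Using $\downint L$'s characterisation as the largest downward-closed subset of $L$, one directly gets $\downint L = \bigcap_{a\in\Sigma}\bigl(L\cap a^{-1}(\downint L)\cdot\text{stuff}\bigr)$; I would instead run the $\up$-on-complement argument and then push the single outer complement inward using De Morgan against the complementation already baked into each $v^{-1}L$, so that no complement remains outside the quotients of $L$ — this is the point that needs care.) The argument for $\upint L$ is symmetric, swapping the roles of $\down$ and $\up$.

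The main obstacle is precisely making the "positive Boolean combination of left-quotients of $L$" claim literally true: the duality formula carries an outer complement, and complement is not a positive operation, so one cannot naively apply the $\psi$-bound to $\up(\Sigma^*\setminus L)$ written over quotients of $\Sigma^*\setminus L$ — that would reintroduce the $2^n$ blowup from complementing $L$. The fix is to observe that $\Sigma^*\setminus L$ has at most $2^n$ quotients but each is a positive combination of the $n$ quotients of $L$ together with complementation applied only at the leaves, and that closing under $\down$ or $\up$ then complementing once more lets all complements collapse onto the leaves by De Morgan, leaving a positive combination over $\{v^{-1}L\}$; then $\psi(n)$ (not $\psi(2^n)$) applies, and the strict inequality $<\psi(n)$ comes from the trivial quotients $\emptyset$ and $\Sigma^*$ being accounted for separately, exactly as in the cited $2^{n-2}+1$-style bookkeeping. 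I would present this collapse carefully, then state that the resulting DFA has fewer than $\psi(n)$ states and conclude $\nDFA(\upint L)<\psi(n)$ and $\nDFA(\downint L)<\psi(n)$.
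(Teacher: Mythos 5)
Your overall strategy is the same as the paper's: bound the number of states by the number of positive Boolean combinations (equivalently, antichains in the subset lattice of the NFA's $n$-element state set), which is the Dedekind number $\psi(n)$, and get strictness by discarding the sink class. The counting part of your plan is sound, and you are right that the whole point is to avoid complementing the NFA first, so that $\psi(n)$ rather than $\psi(2^n)$ applies.

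The gap is in the step you yourself flag as ``the point that needs care'': you never actually exhibit $\upint L$ or $\downint L$ in the required form, and the mechanism you propose for it does not quite make sense. There is no ``complementation already baked into each $v^{-1}L$'' for an outer complement to cancel against; a left-quotient of $L$ is just a union $\bigcup_{q\in S}L_q$ of the languages accepted from individual NFA states. What actually happens is that the duality unfolds at the level of the membership predicate: $x\in\downint L$ iff \emph{every} subword of $x$ is in $L$, and $x\in\upint L$ iff \emph{every} superword of $x$ is in $L$ -- a $\forall\exists$ (intersection-of-unions) condition over the $n$ states, and it is this two-level alternation that produces antichains of $2^Q$. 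To turn this into a DFA with at most $\psi(n)$ states you still need two concrete ingredients that your sketch omits: (a) a letter-by-letter description of the universally quantified set, namely $\down(x_1\cdots x_\ell)=\{x_1,\emptyword\}\cdots\{x_\ell,\emptyword\}$ and $\up(x_1\cdots x_\ell)=\Sigma^*x_1\Sigma^*\cdots x_\ell\Sigma^*$, so that the state after reading $x$ can be maintained as the family $\{\delta(I,z)\mid z\in\sigma(x)\}$ of reachable state-sets (this is the substitution $\sigma$ with $K_i=\{b_i,\emptyword\}$, resp.\ $K_i=\Sigma^*b_i\Sigma^*$, in the paper); and (b) the monotonicity argument showing that a state-set $S\subseteq S'$ imposes a stronger universal constraint than $S'$, so that the family can be reduced to its minimal elements -- an antichain -- and that this reduction is a congruence compatible with acceptance. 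Your proposed detour through relations of the form $u^{-1}\downint L=\bigcap_{u'\embeds u}(u')^{-1}(\downint L)$ expresses quotients of the interior in terms of other quotients of the interior, which does not by itself terminate in a positive combination of quotients of $L$; the substitution-automaton construction of (a) and (b) is what closes the argument.
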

\begin{proof}
We handle both interiors in a uniform way.

Let $K_0$ and $K_1,\ldots,K_p$ be arbitrary languages in $\Sigma^*$ (these need not be regular). With the $K_i$'s
we associate an alphabet $\Gamma=\{b_1,\ldots,b_p\}$ and a substitution
$\sigma$ given inductively by $\sigma(\emptyword) \eqdef K_0$ and
$\sigma(w \, b_i) \eqdef \sigma(w)\cdot K_i$.
With a language $L\subseteq \Sigma^*$, we associate the language
$W\subseteq\Gamma^*$ defined by
\begin{equation}
\label{eq-def-W}
              W\eqdef \{x\in\Gamma^*~|~\sigma(x)\subseteq L\}
\:.
\end{equation}
\begin{claim}
If $L$ is regular then $W$ is regular.
\end{claim}
To prove this first claim,
assume $A_1=(\Sigma,Q,\delta_1,I_1,F_1)$ is an $n$-state NFA recognizing $L$. Using
the powerset construction, one obtains a DFA
$A_2=(\Sigma,Q_2,\delta_2,i_2,F_2)$ recognizing $L$. We have as usual $Q_2 = 2^Q$,
with typical elements $S,S',\ldots$,
$\delta_2$ given by $\delta_2(S, a) = \bigcup_{q \in S} \delta_1(q,a)$,
$i_2 = I_1$, and $F_2 = \{S ~|~ S \cap F_1 \neq \emptyset\}$.

From $A_2$ we now derive a DFA $A_3 = (\Gamma, Q_3, \delta_3, i_3, F_3)$ given by
 $Q_3 = 2^{Q_2}$, with typical elements $U,U',\ldots$;
 $\delta_3(U, b_j) =  \{ \delta_2(S, z) ~|~ S \in U, z \in K_j \}$;
 $i_3 = \{\delta_2(i_2, z) ~|~ z \in K_0\}$;
and $F_3 = 2^{F_2} = \{U ~|~ U \subseteq F_2\}$.

The intention is that $A_3$ will recognize $W$, so let us check, using
induction on $w\in\Gamma^*$, that
$\delta_3(i_3, w) = \{\delta_2(i_2, z) ~|~ z \in \sigma(w) \}$:
For the base case, one has $\delta_3(i_3, \emptyword) =
i_3 = \{\delta_2(i_2, z) ~|~ z \in K_0\}$ by definition, and
$\sigma(\emptyword) = K_0$. For the inductive case, one has
\begin{align*}
    & \delta_3(i_3, w \, b_j) =  \delta_3(\delta_3(i_3, w), b_j) \\
=\: & \delta_3(\{\delta_2(i_2, z) ~|~ z \in \sigma(w) \}, b_j) & \text{(induction hypothesis)} \\
=\: & \bigl\{\delta_2(S, z') ~|~ S \in \{\delta_2(i_2, z) ~|~ z \in \sigma(w) \} , z' \in K_j \bigr\}\!\!\!\!\!\!\!\!\!\!\!\!\! & \text{(definition of $\delta_3$)} \\
=\: & \{ \delta_2(\delta_2(i_2, z), z') ~|~ z \in \sigma(w), z' \in \sigma(b_j) \} & \text{(rearrange, use $\sigma(b_j) = K_j$)} \\
=\: & \{ \delta_2(i_2, z'') ~|~ z'' \in \sigma(w \, b_j) \}\:. &
\end{align*}
Now, for all $w \in \Gamma^*$, one has
\begin{align*}
  w \in W &\iff \sigma(w) \subseteq L
  &\text{(definition of $W$)}
\\
  &\iff \forall z \in \sigma(w):\delta_2(i_2,z) \in F_2
  &\text{(since $A_2$ recognizes $L$)}
\\
 &\iff \{\delta_2(i_2, z) ~|~ z \in \sigma(w)\} \subseteq F_2
\\
  &\iff \delta_3(i_3, w) \in F_3 \:.
  &\text{(as just shown)}
\end{align*}
This proves that $A_3$ recognizes $W$. 
In particular, $W$ is
regular as claimed.

\begin{claim}
$\nDFA(W)<\psi(n)$.
\end{claim}
The DFA $A_3$ that recognizes $W$ has $\size{Q_3}=2^{2^n}$ states. We now
examine our construction more closely to detect equivalent states in $A_3$. Observe that
the powerset construction for $A_2$ in terms of $A_1$ is ``existential'',
that is, a state of $A_2$ is accepting if and only if at least one of its
constituent states from $A_1$ is accepting. In contrast, the powerset
construction for $A_3$ in terms of $A_2$ is ``universal'', that is, a state
of $A_3$ is accepting if and only if all of its constituent states from $A_2$
are accepting. Suppose $S, S' \in Q_2$ are two states of $A_2$ with $S
\subseteq S'$. Then if
some word is accepted by $A_2$ starting from $S$, it is also accepted
starting from $S'$. If a state of $A_3$ contains both $S$ and $S'$, then $S$
already imposes a stronger constraint than $S'$, and so $S'$ can be
eliminated. We make this precise below:

Define an equivalence relation $\equiv$ on $Q_3$ as follows:
\[
U \equiv U'
\:\equivdef\:
( \forall S \in U: \exists S' \in U': S' \subseteq S)
\land (\forall S' \in U': \exists S \in U: S \subseteq S')
\:.
\]
We now claim that, in $A_3$, $\equiv$-equivalent states accept the same
language. First $U\equiv V$ and $U \in F_3$ imply $V \in F_3$ since for
any $S' \in V$, there is $S \in U$ with $S\subseteq S'$, and since $S \in
F_2$, also $S' \in F_2$. Furthermore  $U\equiv V$ and $b_j\in\Gamma$ imply $\delta(U, b_j) \equiv \delta(V, b_j)$: each element of $\delta_3(U,b_j)$ is
some $\delta_2(S, z)$ with  $S \in U$ and $z \in K_j$. There exists $S' \in
V$ such that $S' \subseteq S$, and then $\delta_2(S', z)$ belongs to
$\delta_3(V, b_j)$ and is a subset of $\delta_2(S, z)$ because $\delta_2$
is monotone in its first argument. The reasoning in the reverse direction
is similar.

Thus we can quotient the DFA $A_3$ by $\equiv$ to get an equivalent DFA
recognizing $W$. Further, we can remove (the equivalence class of) the sink
state $\{\emptyset\}$, so that $\nDFA(W)<\size{Q_3/{\equiv}}$.

Let us now show that $\size{Q_3/{\equiv}}$ is exactly $\psi(\size{Q})$.
A state $U \in Q_3$ is called an \emph{antichain} if it does not contain
some $S,S'\in Q_2$ with $S \subsetneq S'$.  Every
 $U \in Q_3$ is $\equiv$-equivalent to the antichain $U_\text{min}$
obtained by retaining only the elements of $U$ that are  minimal by inclusion.
Further, two distinct antichains cannot be $\equiv$-equivalent.
Thus the number of equivalence classes in $Q_3/{\equiv}$
is exactly the number of
subsets of $2^Q$ which are antichains, and this is the Dedekind number
$\psi(n)$, see~\cite{kleitman69}.
This shows $\nDFA(W)<\psi(n)$, completing the proof of
our second claim.
\\

We may now instantiate the above construction for the upward and downward
interiors. Choose alphabets $\Sigma = \Gamma = \{b_1, \ldots, b_k\}$ and
let $K_0 = \Sigma^*$ and $K_i = \Sigma^*b_i\Sigma^*$. Then
Equation~\eqref{eq-def-W} yields $W=\upint(L)$ and we deduce
$\nDFA(\upint(L))<\psi(n)$. Letting now $K_0 = \{\emptyword\}$ and $K_i =
\{b_i, \emptyword\}$ yields $W=\downint(L)$ and again we deduce
$\nDFA(\downint(L))<\psi(n)$. This concludes the proof of
Proposition~\ref{prop-interiors-nsc}.
\qed
\end{proof}

\begin{remark}
In the usual setting -- see~\cite[Section~6]{lombardy2008} -- $W$ is
defined with $\sigma(\emptyword)=\{\emptyword\}$ and there is no need for
$K_0$. The idea is that $W$ is the best under-approximation of $L$ by sums
of products of $K_i$'s, and Conway showed that if $L$ is regular then $W$
is too~\cite{conway71}. We allowed $\sigma(\emptyword)=K_0$ to account
directly for upward interiors.
\end{remark}

\subsection{Lower bounds for interiors}
\label{ssec-lower-bounds-both-interiors}

\begin{proposition}[Downward interior]
\label{prop-nfa-downint}
There exists a family of languages $(L_n)_{n=3,4,\ldots}$ with $\nNFA(L_n)\leq n$ and
$\nNFA(\downint L_n) = 2^{2^{\left\lfloor \frac{n-3}{2}\right\rfloor}}$.
\end{proposition}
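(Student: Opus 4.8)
The plan is to exhibit a concrete family $(L_n)$ whose downward interior forces a doubly-exponential NFA, and to prove the lower bound via the extended fooling set technique (Lemma~\ref{extfool}). The natural guess, given the $2^n$ (AFA) lower bound for downward closure already implicit in Table~\ref{tab-summary} and the duality $\downint L = \Sigma^*\setminus\up(\Sigma^*\setminus L)$, is to take $L_n$ so that $\Sigma^*\setminus L_n$ is (close to) one of the closure witnesses $L'_m$ or a variant with $m\approx\lfloor\frac{n-3}{2}\rfloor$: then $\up(\Sigma^*\setminus L_n)$ has deterministic state complexity about $2^m$, and complementing once more gives an NFA whose \emph{nondeterministic} complexity we want to show is forced to be $2^{2^m}$. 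So the first step is to set $k=\lfloor\frac{n-3}{2}\rfloor$, pick an alphabet with two disjoint groups of $k$ letters (hence the $2k+3\le n$ states and the $2^{\Omega(n)}$ alphabet size in the table), and define $L_n$ as the complement of the upward closure of a carefully chosen language so that membership in $\downint L_n$ is equivalent to a \emph{universal} statement of the form ``for every subword $y$ of $x$, $y\notin$ (bad set)''. Concretely I expect $L_n$ to be recognized by a small NFA that guesses a witness subword and checks it lands in the bad set, so that $\nNFA(L_n)\le n$ is immediate from the construction.

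The second and main step is the lower bound $\nNFA(\downint L_n)\ge 2^{2^{k}}$. For this I would build a fooling set $S$ of that size, indexed not by subsets of a $k$-set but by \emph{antichains} / arbitrary families of subsets — equivalently, by the $2^{2^k}$ subsets $\mathcal A$ of $2^{[k]}$ — mirroring the Dedekind-number upper bound of Proposition~\ref{prop-interiors-nsc}, which strongly suggests the extremal objects are collections of subsets. For each family $\mathcal A\subseteq 2^{[k]}$ I would design a pair $(x_{\mathcal A}, y_{\mathcal A})$ so that $x_{\mathcal A} y_{\mathcal A}\in\downint L_n$, while for $\mathcal A\ne\mathcal B$ one of $x_{\mathcal A} y_{\mathcal B}$ or $x_{\mathcal B} y_{\mathcal A}$ fails to lie in $\downint L_n$ — the failure being detected by producing a single bad subword of the crossed word. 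The prefix $x_{\mathcal A}$ should "encode" the family $\mathcal A$ as a block of letters from the first letter-group (one block per member set of $\mathcal A$), and the suffix $y_{\mathcal A}$ should, using the second letter-group, act as a "test" that, when concatenated after a prefix encoding a \emph{different} family, exposes a forbidden subword. Separation will reduce to a purely combinatorial statement: if $\mathcal A\ne\mathcal B$ then (by symmetry, say $\mathcal A\not\subseteq\mathcal B$) some set $A\in\mathcal A\setminus\mathcal B$ lets us extract a subword of $x_{\mathcal A}y_{\mathcal B}$ that is in the bad set, whereas the matching word over $\mathcal B$'s prefix avoids it.

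The final step is the matching upper bound $\nNFA(\downint L_n)\le 2^{2^k}$, which should follow either from Proposition~\ref{prop-interiors-nsc} with a sharp accounting (note $\psi(n')$ is at most $2^{2^{n'}}$ and $L_n$ has an NFA of size $\le 2k+3$, so one must check $2^{2^k}$ is actually reached and not exceeded, which is exactly why the statement is an equality rather than an inequality), or more cleanly by exhibiting a DFA of exactly that many states directly from the duality formula and the deterministic closure bound of Proposition~\ref{prop-sc-downward-dfa}. The hard part will be the fooling-set construction: getting the prefixes $x_{\mathcal A}$ to genuinely encode arbitrary set-systems (not just single subsets) while keeping the "bad subword" extraction clean, and — crucially — making sure the crossed words $x_{\mathcal A}y_{\mathcal A}$ and $x_{\mathcal B}y_{\mathcal B}$ are \emph{both} in $\downint L_n$, which amounts to verifying that \emph{no} subword of them ever hits the bad set. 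I would expect to spend most of the effort checking this "positive" direction, since the bad set must be rich enough to separate $2^{2^k}$ families yet sparse enough to miss all subwords of the diagonal words; the subword-combinatorics lemmas in the style of Lemmas~\ref{manydown} and~\ref{upmany} (tracking longest embeddable prefixes of powers $c(i)^\omega$) are the tool I would reach for to control exactly which subwords appear.
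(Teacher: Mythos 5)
Your plan diverges from the paper's proof at the decisive point, and as stated it has a genuine gap. The paper takes an \emph{exponentially large} alphabet $\Sigma=\{0,1,\ldots,2^\ell-1\}$ with $\ell=\lfloor\frac{n-3}{2}\rfloor$ and sets $L_n=\Sigma^*\setminus\{a\,a\mid a\in\Sigma\}$. Then $\downint L_n$ is exactly the language $V_\Sigma$ of words with pairwise distinct letters, and the equality $\nNFA(\downint L_n)=2^{|\Sigma|}=2^{2^\ell}$ is already available from Lemma~\ref{lem-U-n-V}, via a fooling set indexed simply by \emph{subsets of the alphabet} --- no encoding of set families inside words is needed. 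The genuinely new ingredient is the small NFA for $L_n$: viewing each letter as an $\ell$-bit vector, an NFA with $2\ell+2$ states recognizes $\{a\,b\mid a\neq b\}$ by guessing a bit position where the two letters differ and remembering one bit, and one extra state handles all words of length $\neq 2$, giving $\nNFA(L_n)\leq 2\ell+3\leq n$. You never identify this trick, and without it the constraint $\nNFA(L_n)\leq n$ over a $2^{\Omega(n)}$-letter alphabet is the real obstacle, not the fooling set.

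Your proposed alphabet of ``two disjoint groups of $k$ letters'' has polynomial size, which contradicts your own parenthetical about the $2^{\Omega(n)}$ alphabet in Table~\ref{tab-summary}; and a $2^{2^k}$ fooling set over a polynomial-size alphabet, built by encoding arbitrary families $\mathcal{A}\subseteq 2^{\{1,\ldots,k\}}$ as blocks of letters, is not something the paper achieves --- the paper explicitly leaves open whether the doubly-exponential lower bound survives for fixed alphabets. So the centerpiece of your plan, the antichain-indexed fooling set with ``bad subword extraction'', is at best unproven and would in fact settle that open question. (If you instead intend the alphabet to have $2^k$ letters, each letter being a $k$-bit vector, then your families of subsets collapse to subsets of the alphabet and your construction becomes Lemma~\ref{lem-U-n-V} --- but then the missing piece is again the small NFA for $L_n$.) Finally, your fallback for the matching upper bound via Proposition~\ref{prop-interiors-nsc} yields only $\psi(2\ell+3)$, which is roughly $2^{2^{2\ell+3}}$ and far exceeds $2^{2^\ell}$; the paper obtains the exact value only because $\downint L_n$ literally equals $V_\Sigma$ and Lemma~\ref{lem-U-n-V} gives $\nNDFA(V_\Sigma)=2^{|\Sigma|}$ on the nose.
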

\begin{proof}
Fix $n\geq 3$ and let $\ell=\left\lfloor \frac{n-3}{2}\right\rfloor$. We
let $\Sigma = \{0,1,2,\ldots,2^\ell-1\}$, so that $\size{\Sigma}=2^\ell$.
Let
\[
L_n \eqdef \Sigma^* \setminus \{ a\,a ~|~ a \in \Sigma \}
= \{a\,b ~|~ a,b \in \Sigma, a \neq b\} \cup \{ w \in\Sigma^* ~|~  \size{w} \neq 2 \}
\:.
\]
That is, $L_n$ contains all words over $\Sigma$ consisting of two
different letters and all words whose length is not $2$.
\\

We first prove that $\nNFA(L_n)\leq 2\ell+3\leq n$: Two letters in
$\Sigma$, viewed as $\ell$-bit sequences, are distinct if and only if they
differ in at least one bit. Figure~\ref{fig-A3} displays an NFA recognizing $\{a\,b
\in\Sigma^2 ~|~ a \neq b\}$ with $2\ell+2$ states: the idea is that the NFA
reads $a$, guesses the position of a bit where $a$ and $b$ differ, records
the value of $a$'s corresponding bit and checks $b$'s bit at that position.
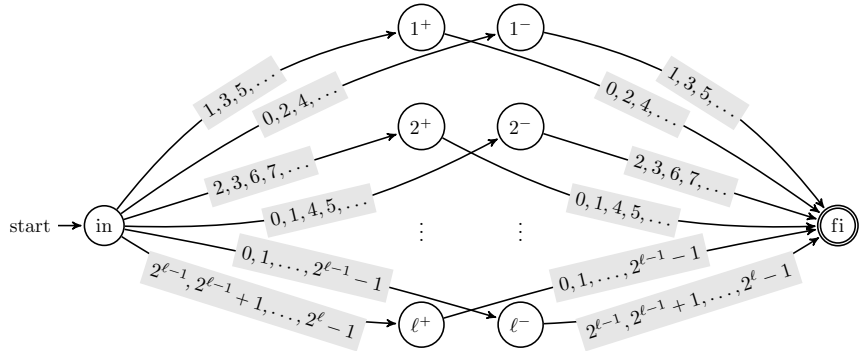
\begin{figure}[htb]
\centering
\scalebox{0.75}{
  \begin{tikzpicture}[->,>=stealth',shorten >=1pt,node distance=5em,thick,auto,bend angle=30]
\node [shape=rectangle] (dot+) {$\vdots$};
\node [right of=dot+,shape=rectangle] (dot-) {$\vdots$};
\node [above of=dot+,state] (q1+) {$2^+$};
\node [above of=dot-,state] (q1-) {$2^-$};
\node [above of=q1+,state] (q0+) {$1^+$};
\node [above of=q1-,state] (q0-) {$1^-$};
\node [below of=dot+,state] (ql+) {$\ell^+$};
\node [below of=dot-,state] (ql-) {$\ell^-$};
\node [left of=dot+,node distance=16em,state,initial] (qin) {in};
\node [right of=dot-,node distance=16em,state,accepting] (qf) {f\/i};

\path[every node/.style={auto=false,sloped,fill=black!10}]
 (qin) edge [bend left=20]  node {$1,3,5,\ldots$} (q0+)
 (qin) edge [bend left=10]  node {$0,2,4,\ldots$} (q0-)
 (q0+) edge [bend left=10]  node {$0,2,4,\ldots$} (qf)
 (q0-) edge [bend left=20]  node {$1,3,5,\ldots$} (qf)
 (qin) edge [bend left=0]   node {$2,3,6,7,\ldots$} (q1+)
 (qin) edge [bend right=15] node {$0,1,4,5,\ldots$} (q1-)
 (q1+) edge [bend right=15] node {$0,1,4,5,\ldots$} (qf)
 (q1-) edge [bend left=0]   node {$2,3,6,7,\ldots$} (qf)
 (qin) edge [bend left=3]   node {$0,1,\ldots,2^{\ell-1}-1$} (ql-)
 (qin) edge [bend right=15] node {$2^{\ell-1},2^{\ell-1}+1,\ldots,2^\ell-1$} (ql+)
 (ql-) edge [bend right=15] node {$2^{\ell-1},2^{\ell-1}+1,\ldots,2^\ell-1$} (qf)
 (ql+) edge [bend left=3]   node {$0,1,\ldots,2^{\ell-1}-1$} (qf)
;
  \end{tikzpicture}
}
\caption{NFA recognizing $\{a\,b~|~a,b\in\Sigma_{2^\ell}, a\neq b\}$ with $2\ell+2$ states.}
\label{fig-A3}
\end{figure}
We then modify this NFA so that it also accepts all words whose length
is not $2$. This can be done by adding a single new state and appropriate transitions, and making all
original states accepting. The resulting NFA has $2\ell + 3$ states.
\\

It remains to prove that $\nNFA(\downint L_n)= 2^{2^\ell}$, but
$\downint L_n$ consists of all words in $\Sigma^*$ of pairwise
distinct letters, the language called $V_\Sigma$ in
Lemma~\ref{lem-U-n-V} where we showed $\nNFA(V_\Sigma)= 2^{\size{\Sigma}}
= 2^{2^\ell}$.
\qed
\end{proof}

\begin{proposition}[Upward interior]
\label{prop-nfa-upint}
There exists a family of languages $(L_n)_{n=7,8,\ldots}$ with $\nNFA(L_n)\leq n$ and
$\nNFA(\upint L_n)\geq
2^{2^{\left\lfloor \frac{n-4}{3}\right\rfloor}}+1$.
\end{proposition}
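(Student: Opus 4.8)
The plan is to imitate the proof of Proposition~\ref{prop-nfa-downint}, using the duality $\upint L = \Sigma^*\setminus\down(\Sigma^*\setminus L)$ of Equation~\eqref{eq-interiors-are-duals} to reduce the problem to a statement about downward closures. Fix $\ell \eqdef \left\lfloor\frac{n-4}{3}\right\rfloor$ and take $\Sigma = \{0,1\}^\ell$, the alphabet of all $\ell$-bit vectors, so that $\size{\Sigma} = 2^\ell$. Let $D \eqdef \bigcup_{a\in\Sigma} a\cdot(\Sigma\setminus a)^*$ be the language of nonempty words whose first letter does not occur again; by the same computation as for $\down D_{n-1}$ in the proof of Proposition~\ref{prop-sc-downward-dfa} (now with $\size{\Sigma}$ letters) we have $\down D = \Sigma^*\setminus U'_\Sigma$. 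Consequently, any language $L_n$ whose complement $\Sigma^*\setminus L_n$ has the same downward closure as $D$ satisfies $\upint L_n = \Sigma^*\setminus\down(\Sigma^*\setminus L_n) = \Sigma^*\setminus\down D = U'_\Sigma$, and then $\nNFA(\upint L_n) = \nNFA(U'_\Sigma) = 2^{\size{\Sigma}}+1 = 2^{2^\ell}+1$ by Lemma~\ref{lem-U-n-V} (using $\size{\Sigma}\geq 2$). So the whole task reduces to finding such an $L_n$ that is, in addition, recognised by an NFA with at most $n = 3\ell+4$ states.

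One cannot simply take $L_n = \Sigma^*\setminus D$: the pairs $\{(a, a)\}_{a\in\Sigma}$ form a fooling set for $\Sigma^*\setminus D$ (here $a\,a\in\Sigma^*\setminus D$ while $a\,b, b\,a\notin\Sigma^*\setminus D$ for $a\neq b$), so by Lemma~\ref{extfool} every NFA for it has at least $2^\ell$ states. Instead I would replace $D$ by a subword-cofinal sublanguage $D'\subseteq D$ consisting of words in a bit-patterned normal form (for instance, forcing the letters following the first one to appear in a fixed, bit-structured order), chosen so that (i) $\down D' = \down D$, and (ii) the complement $L_n \eqdef \Sigma^*\setminus D'$ (together with a constant number of states handling words of exceptional length) admits an NFA with $3\ell + O(1)$ states. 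The reason such a normal form helps is exactly the phenomenon exploited in Figure~\ref{fig-A3}: an NFA cannot afford to test ``the current letter equals the remembered first letter'', which would cost $2^\ell$ states, but it can afford to test, over $i\in\{1,\dots,\ell\}$, the existential disjunction ``the current letter and the first letter differ at bit $i$'' with $2\ell$ states; the extra $\ell$ states compared with the downward-interior construction pay for checking membership in the normal form, which is what turns the coefficient $2$ of Proposition~\ref{prop-nfa-downint} into a $3$.

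Granting the normal form, the verification of (i) is a subword-order argument in the spirit of Lemmas~\ref{manydown} and~\ref{upmany}: the inclusion $\down D'\subseteq\down D$ is free since $D'\subseteq D$, and for the converse one takes an arbitrary $x$ with $x[2..]$ missing a letter $d$ and exhibits a normal-form word of $D'$ above it, by listing the letters of $x[2..]$ in the prescribed order and grouping them according to a bit at which each differs from $d$. This gives $\upint L_n = U'_\Sigma$, and the proposition follows from Lemma~\ref{lem-U-n-V} as above. I expect the real work --- and the only genuinely delicate point --- to be item (ii): designing the normal form so that it is simultaneously subword-cofinal in $D$ and has an $O(\ell)$-state NFA complement, i.e.\ so that every ``equality-with-a-remembered-letter'' test needed to recognise $\Sigma^*\setminus D'$ can be compiled into an existential disjunction over the $\ell$ bit positions. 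Everything else is routine bookkeeping.
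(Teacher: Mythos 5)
Your reduction via Equation~\eqref{eq-interiors-are-duals} is sound: if you could exhibit $L_n=\Sigma^*\setminus D'$ with $D'\subseteq D$, $\down D'=\down D$ and $\nNFA(L_n)\leq n$, then indeed $\upint L_n=\Sigma^*\setminus\down D=U'_\Sigma$ and Lemma~\ref{lem-U-n-V} finishes the job. You also correctly diagnose why $D'=D$ fails. But the proof has a genuine gap exactly where you flag it: the ``bit-patterned normal form'' $D'$ is never constructed, and property (ii) is not a routine matter. Try the obvious candidate $D'=\{a\,e_a^m \mid a\in\Sigma,\ m\geq 0\}$ where $e_a$ enumerates $\Sigma\setminus\{a\}$ in a fixed order: $\down D'=\down D$ holds, but recognizing $\Sigma^*\setminus D'$ now requires detecting a deviation from the periodic pattern, i.e.\ knowing the current position modulo $\size{\Sigma}-1=2^\ell-1$ and comparing the current letter against a letter determined by that position \emph{and} by the remembered first letter --- this costs $2^{\Omega(\ell)}$ states again, just displaced. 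Nothing in your sketch explains how to make every needed test a disjunction over the $\ell$ bit positions while keeping $D'$ subword-cofinal in $D$, and it is not clear such a $D'$ exists over the pure alphabet $\{0,1\}^\ell$.

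The paper resolves precisely this difficulty by \emph{enlarging the alphabet}: it adjoins $\ell$ certificate letters $\Upsilon=\{1,\dots,\ell\}$ and takes the ``bad'' words to be $x\,y_1k_1\cdots y_pk_p\in\Gamma\cdot(\Gamma\cdot\Upsilon)^*$ in which each $k_j$ explicitly names a bit where $y_j$ differs from the first letter $x$. The complement $L_n=L'_n\cup(\Sigma^*\setminus L''_n)$ then has a $(3\ell+4)$-state NFA because refuting a certificate is a local, existential bit test (guess $k$, remember one bit of $x$, wait for a factor $y\,k$ with $x=_k y$). The price of the auxiliary letters is that $\upint L_n$ is no longer exactly $U'_\Gamma$; the paper only establishes $\Gamma^*\cap\upint L_n=U'_\Gamma$ and invokes $\nNFA(L\cap\Gamma^*)\leq\nNFA(L)$, a step your cleaner identity $\upint L_n=U'_\Sigma$ would avoid --- but only if the missing construction over $\{0,1\}^\ell$ can actually be carried out, which you have not shown. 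As it stands, the heart of the proposition is deferred rather than proved.
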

\begin{proof}
Fix $n\geq 7$ and let $\ell={\left\lfloor \frac{n-4}{3}\right\rfloor}$.
We use two subalphabets:
$\Gamma\eqdef\{0,1,\ldots,2^\ell-1\}$ and
$\Upsilon\eqdef\{1,\ldots,\ell\}$, letting $\Sigma\eqdef\Gamma\cup\Upsilon$. The symbols in $\Gamma$, denoted
$x,y,\ldots$ are disjoint from the symbols in $\Upsilon$, denoted
$k,k',\ldots$ (for example, we can imagine that they have
different colors) and one has $\size{\Sigma}=2^\ell+\ell$.

For $x,y\in\Gamma$ and $k\in\Upsilon$, we write $x=_k y$ when $x$ and $y$,
viewed as $\ell$-bit sequences,
have the same $k$th bit. We consider the following languages:
\begin{align*}
L'_n &\eqdef\{x \, w \, y \, k \, w'\in \Gamma\cdot\Sigma^*\cdot\Gamma\cdot\Upsilon\cdot\Sigma^*~|~
  x=_{k} y\}
\:,
\\
L''_n &\eqdef \Gamma\cdot(\Gamma \cdot \Upsilon)^*
\:,
\\
L_n &\eqdef L'_n \cup (\Sigma^*\setminus L''_n)
\:.
\end{align*}
In other words, $L'_n$ contains all words such that the initial
letter $x \in \Gamma$ has one common bit with a later $y\in\Gamma$ and this
bit is indicated by the $k\in\Upsilon$ that immediately follows the
occurrence of $y$.
\begin{claim}
$\nNFA(L'_n)\leq 3\ell+2$ and
$\nNFA(L_n)\leq n$.
\end{claim}
Figure~\ref{fig-A4} displays the schematics of an NFA for $L'_n$. In order
to recognize inputs of the form $x \, w \, y \, k \, w'\in
\Gamma\cdot\Sigma^*\cdot\Gamma\cdot\Upsilon\cdot\Sigma^*$ with $x=_{k} y$,
the NFA reads the first letter $x$,
nondeterministically guesses $k$, and switches to a state $r_k^+$ or $r_k^-$
depending on what the $k$th bit of $x$ is. From there it waits
nondeterministically for the appearance of a factor $y \,k$ with $x=_k y$
before accepting. This uses $3\ell +2$ states.
\begin{figure}[htb]
\centering
\scalebox{0.75}{
  \begin{tikzpicture}[->,>=stealth',shorten >=1pt,node distance=5em,thick,auto,bend angle=30]

\node [state] (ti) {$t_i$};
\node [above of=ti,node distance=10.5em,state] (t1) {$t_1$};
\node [below of=ti,node distance=10.5em,state] (tl) {$t_\ell$};
\node [right of=ti,node distance=6.5em,state,accepting] (fi) {f\/i};
\node [left of=fi,node distance=30em,state,initial] (in) {in};

\node [left of=t1,node distance=12em,yshift=2.4em,inner sep=0.5mm,state] (r1+) {$r_1^+$};
\node [left of=t1,node distance=12em,yshift=-2.4em,inner sep=0.5mm,state] (r1-) {$r_1^-$};
\node [left of=ti,node distance=12em,yshift=3.9em,inner sep=0.5mm,state] (ri+) {$r_i^+$};
\node [left of=ti,node distance=12em,yshift=-3.9em,inner sep=0.5mm,state] (ri-) {$r_i^-$};
\node [left of=tl,node distance=12em,yshift=2.4em,inner sep=0.5mm,state] (rl+) {$r_\ell^-$};
\node [left of=tl,node distance=12em,yshift=-2.4em,inner sep=0.5mm,state] (rl-) {$r_\ell^+$};

\node [fit=(r1-)(t1)(r1+),draw,dashed] (block1) {};
\node [fit=(ti)(ri-)(ri+),draw,densely dashed] (blocki) {};
\node [fit=(rl-)(rl+)(tl),draw,dashed] (blockl) {};

\node at ($ (block1.south)!.5!(blocki.north) $) {\large $\cdots$};
\node at ($ (blocki.south)!.5!(blockl.north) $) {\large $\cdots$};

\path[every node/.style={auto=false,sloped,fill=black!10}]
 (ri-) edge [loop above] node (e1) {$z \in \Gamma \cup \Upsilon$} (ri-)
 (ri+) edge [loop below] node (e2) {$z \in \Gamma \cup \Upsilon$} (ri+)
 (fi) edge [loop right] node [rotate=90] {$z \in \Gamma \cup \Upsilon$} (fi)
 (in) edge [bend left=20]  node {$1,3,5,\ldots$} (r1+)
 (in) edge [bend left=10]  node {$0,2,4,\ldots$} (r1-)
 (in) edge [bend left=5]   node {$x\in\Gamma~:~x[i]=1$} (ri+)
 (in) edge [bend right=5]  node {$x\in\Gamma~:~x[i]=0$} (ri-)
 (in) edge [bend right=10] node {$0,1,\ldots,2^{\ell-1}-1$} (rl+)
 (in) edge [bend right=20] node {$2^{\ell-1},2^{\ell-1}+1,\ldots,2^\ell-1$} (rl-)
 (ri+) edge [bend left=10]  node {$y\in\Gamma~:~y[i]=1$} (ti)
 (ri-) edge [bend right=10] node {$y\in\Gamma~:~y[i]=0$} (ti)
 (ti) edge node {$i \in \Upsilon$} (fi)
 (t1) edge [bend left=5] node {$1 \in \Upsilon$} (fi)
 (tl) edge [bend right=8] node {$\ell \in \Upsilon$} (fi)
;

\path[every edge/.style={draw,dotted},every loop/.style={min distance=1.5em,looseness=5.5}]
 (r1+) edge [loop below] (r1+)
 (r1+) edge [bend left=10] (t1)
 (rl+) edge [loop below] (rl+)
 (rl+) edge [bend left=10] (tl)
 (r1-) edge [loop above] (r1-)
 (r1-) edge [bend right=10] (t1)
 (rl-) edge [loop above] (rl-)
 (rl-) edge [bend right=10] (tl)
;
  \end{tikzpicture}
}
\caption{NFA recognizing $L'_n$ with $3\ell+2$ states.}
\label{fig-A4}
\end{figure}
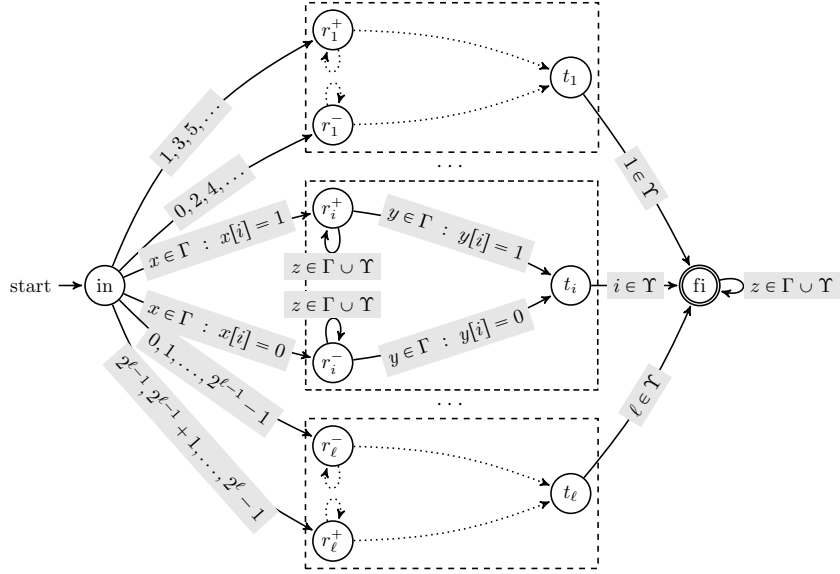
Adding states for $\Sigma^*\setminus L''_n$, one obtains $\nNFA(L_n)\leq
3\ell + 4\leq n$.
\\

We now consider the upward interior of $L_n$. Let
$U_\Gamma,U'_\Gamma\subseteq\Gamma^*$ be as in Lemma~\ref{lem-U-n-V}: a
word $w$ is in $U_\Gamma$ if it uses each letter from $\Gamma$ at least
once, and $U'_\Gamma \eqdef \Gamma \cdot U_\Gamma$.
\begin{claim}
$\Gamma^*\cap\upint L_n=U'_\Gamma$.
\end{claim}
  We first show $\Gamma^* \cap \upint L_n\subseteq U'_\Gamma$ by showing the
  contrapositive. Let $w \in \Gamma^* \setminus U'_\Gamma$. If $w = \emptyword$,
  then clearly $w \notin \upint L_n$. Otherwise, $w=z\,z_1\cdots z_p$, where
  $z, z_i \in \Gamma$.  Since $z_1\cdots z_p$ is not in $U_\Gamma$, there is
  some $x\in\Gamma$ that differs from all the $z_i$'s. Pick
  $k_1,\ldots,k_p$ witnessing this, that is, such that $x\neq_{k_i}z_i$ for
  all $i$. If $x=z$ we let $w'\eqdef z\, z_1 \,k_1 \cdots z_p \,k_p$
so that $w'\in L''_n$ and $w'\not\in L'_n$, that is, $w'\not\in L_n$. If $x\neq z$ we let $w'\eqdef x\,z\,k\, z_1
  \,k_1 \cdots z_p \,k_p\not\in L_n$ for some $k$ witnessing $x\neq z$, so
  that $w'\not\in L_n$. In both cases $w\subword w'\not\in L_n$ and we deduce
  $w\not\in\upint L_n$.

  We now show $U'_\Gamma \subseteq  \Gamma^*\cap\upint L_n$. Let $w = z
  \,z_1 \cdots z_p\in U'_\Gamma$. We show that $w \in \upint L_n$ by showing
  that $w' \in L_n$ for every $w'$ such that $w \subword w'$. If $w' \notin
  L''_n$, then $w' \in L_n$. So assume $w'=x\, y_1\, k_1 \cdots y_n\, k_n\in
  L''_n$. There is some $i$ such that $x=z_i$ (since $w\in U'_\Gamma$) and some $j$
  such that $z_i=y_j$ (since $w\subword w'$). We then have $x=_{k_j} y_j$
  (this does not depend on the actual value of $k_j$). Hence $w'\in L'_n
  \subseteq L_n$. Thus $w \in \upint L_n$.
\\

We are now ready to conclude the proof of Proposition~\ref{prop-nfa-upint}.
Recall that $\nNFA(L\cap\Gamma^*)\leq\nNFA(L)$ holds for any regular $L$
and any alphabet $\Gamma$. In particular the above claim entails
$\nNFA(U'_\Gamma)\leq \nNFA(\upint L_n)$. Combining with
$\nNFA(U'_\Gamma)=2^{2^\ell}+1$ from Lemma~\ref{lem-U-n-V} yields the
required $\nNFA(\upint L_n)\geq 2^{2^{\left\lfloor
    \frac{n-4}{3}\right\rfloor}}+1$.
\qed
\end{proof}



The doubly-exponential lower bounds exhibited in
Propositions~\ref{prop-nfa-downint} and~\ref{prop-nfa-upint} rely on alphabets of
exponential size. It is an open question whether, in the case of a fixed
alphabet, the nondeterministic state complexity of downward or upward
interiors is still doubly-exponential.

\subsection{On alternating automata for closures}
\label{ssec-afas}

The state-complexity analysis of interiors can be used to show lower bounds
on the computation of closures for regular languages represented via alternating automata (AFAs). The question was recently raised
in~\cite{holub2014} where it is suggested that the construction of a
piecewise-testable separator could be done more efficiently by using AFAs
for representing regular languages. It is indeed natural to ask whether an
AFA recognizing $\down L$ or $\up L$ can be built efficiently from an AFA
recognizing $L$, perhaps in the same spirit as the constructions for
closures on NFAs.

In the rest of this section we briefly justify the claims on AFAs made in
Table~\ref{tab-summary} in the introduction of this article. We assume
basic knowledge of AFAs (otherwise see~\cite[Section~6]{chandra76}) and
write $\nAFA(L)$ to denote the minimal number of states of an AFA
recognizing $L$. \\

For the upper bounds, recall that an AFA $A$ can be transformed into an
equivalent NFA $A'$ with the powerset construction. If $A$ has $n$ states,
$A'$ has $2^n$ states. We deduce that if $\nAFA(L)=n$, then $\nAFA(\up
L)\leq \nNFA(\up L)\leq\nNFA(L)\leq 2^n$ and $\nAFA(\down L)\leq\nNFA(\down
L)\leq\nNFA(L)\leq 2^n$. \\

For the lower bounds, we can reuse the witness languages from
Section~\ref{ssec-lower-bounds-both-interiors}. Recall the
properties of $L_n\subseteq\Sigma^*$ from
Proposition~\ref{prop-nfa-downint}. We showed that $\nNFA(L_n)\leq n$,
entailing $\nAFA(L_n)\leq n$.
Hence $\nAFA(\Sigma^*\setminus L_n)\leq n$ since one can complement an AFA
without any increase in the number of states. Let $A$ be an $\ell$-state AFA recognizing $\up
(\Sigma^*\setminus L_n)$. By complementing $A$, we get an $\ell$-state AFA
recognizing $\Sigma^*\setminus\up (\Sigma^*\setminus L_n)$, that is,  $\downint
L_n$. Transforming this into an NFA, we get a $2^\ell$-state NFA that recognizes
$\downint L_n$. Using Proposition~\ref{prop-nfa-downint} we deduce
$\ell\geq2^{\left\lfloor \frac{n-3}{2}\right\rfloor}$. Thus the languages
$(\Sigma^*\setminus L_n)_{n=3,4,\ldots}$ witness the lower bound for
$\nAFA(\up L)$ claimed in Table~\ref{tab-summary}.

For $\nAFA(\down L)$ we use the same reasoning, with up and down
interchanged, and based on the witnesses $(L_n)_{n=7,8,\ldots}$ used in
Proposition~\ref{prop-nfa-upint}.



\section{On unambiguous automata}
\label{sec-ufa}

Recall that an unambiguous automaton (a UFA) is an NFA $A$ in which every
accepted word is accepted by exactly one run. When handling regular
languages it is sometimes interesting to work with UFAs since, like NFAs,
they can be exponentially more succinct than DFAs and, like DFAs, they
admit polynomial-time algorithms for testing inclusion or equality,
see~\cite{colcombet2015} and references therein. With
this in mind, it was natural to state in Section~\ref{sec-closure} that
upward or downward closures are in general not more succinct when given
in the form of UFAs. We now prove these specific claims. \\

Lower bounds on the size of
UFAs can be shown via the following lemma:
\begin{lemma}[Fooling sets for unambiguous automata, after Schmidt]
\label{lem-fooling-UFA}
Given a regular language $L$ and a set of $m$ pairs of words
$S=\{(x_i,y_i)\}_{1\leq i\leq m}$, let $M_{L,S}$ be the $m\times m$ matrix
given by $M[i,j] = 1$ if $x_i y_j\in L$, and $M[i,j]=0$ otherwise. Let
$r=\rank(M_{L,S})$. Then any UFA for $L$ has at least $r$ \emph{witness}
states, where a witness state is any state that accepts at least one of the
$y_i$'s.
\end{lemma}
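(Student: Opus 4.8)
The plan is to realize $M_{L,S}$ as a product of two $0/1$-matrices whose inner dimension is governed by the witness states of an arbitrary UFA for $L$, and then invoke subadditivity of rank under matrix multiplication. So I would fix a UFA $A=(\Sigma,Q,\delta,I,F)$ recognizing $L$. For a word $x$ put $R(x)\eqdef\{q\in Q~|~\exists s\in I: s\step{x}q\}$, the set of states reachable from an initial state by reading $x$, and for a word $y$ put $C(y)\eqdef\{q\in Q~|~\exists f\in F: q\step{y}f\}$, the set of states from which reading $y$ leads to acceptance. Then $q\in R(x)\cap C(y)$ holds exactly when $A$ has an accepting run on $xy$ that is in state $q$ right after the prefix $x$.

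The key observation — and the only place where unambiguity is used — is that $\size{R(x)\cap C(y)}=1$ when $xy\in L$ and $\size{R(x)\cap C(y)}=0$ otherwise. Indeed, if $xy\notin L$ there is no accepting run and the intersection is empty; if $xy\in L$ there is at least one accepting run, exhibiting some $q\in R(x)\cap C(y)$, and there cannot be two distinct such states $q\neq q'$, since $s\step{x}q\step{y}f$ and $s'\step{x}q'\step{y}f'$ would be two accepting runs of $xy$ that differ at position $\size{x}$, contradicting unambiguity. Hence $M_{L,S}[i,j]=\size{R(x_i)\cap C(y_j)}$ for all $i,j$.

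Now introduce the $m\times\size{Q}$ matrix $P$ with $P[i,q]=1$ iff $q\in R(x_i)$, and the $\size{Q}\times m$ matrix $N$ with $N[q,j]=1$ iff $q\in C(y_j)$. Then $(PN)[i,j]=\sum_{q\in Q}P[i,q]\,N[q,j]=\size{R(x_i)\cap C(y_j)}=M_{L,S}[i,j]$, so $M_{L,S}=PN$ (as integer matrices, hence over whichever field the rank is taken). Therefore $r=\rank(M_{L,S})\leq\rank(N)$, and $\rank(N)$ is at most the number of nonzero rows of $N$. A row $N[q,\cdot]$ is nonzero precisely when $q\in C(y_j)$ for some $j$, i.e. when $q$ accepts some $y_i$, i.e. when $q$ is a witness state. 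Thus $A$ has at least $r$ witness states, as claimed.

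I do not anticipate a serious obstacle; the delicate step is the unambiguity argument pinning $\size{R(x)\cap C(y)}$ to exactly $1$ on accepted words (this is exactly what separates the UFA bound from the weaker Boolean-rank/NFA fooling-set bound of Lemma~\ref{extfool}), together with noting that the factorization $M_{L,S}=PN$ has nonnegative integer entries so the rank inequality is valid in any field.
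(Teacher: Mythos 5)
Your proof is correct, and it is precisely the argument the paper delegates to Leung's Theorem~2 rather than spelling out: factor $M_{L,S}=PN$ through reachability/co-reachability sets, use unambiguity to pin $\size{R(x_i)\cap C(y_j)}$ to $0$ or $1$, and observe (as the paper's one-line remark also does) that non-witness states contribute null rows of $N$, so $\rank(M_{L,S})\leq\rank(N)\leq$ the number of witness states. Nothing is missing; your handling of the empty-prefix case and of the field over which the rank is taken is sound.
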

The above lemma is actually a refinement of Theorem~2 from~\cite{leung2005}
where the lower bound is given for $\nUFA(L)$: the proof
by Leung easily adapts to Lemma~\ref{lem-fooling-UFA} since
non-witness states contribute a null row in the matrix $M'$ one derives
from $M$ in~\cite{leung2005}.

We shall also use the following result by Leung:
\begin{lemma}[\cite{leung98}]
\label{lem-rank-subset-mat}
Let $X$ be an $n$-element set and consider $M_X$, the $2^n\times 2^n$
matrix with rows and columns indexed by subsets of $X$, given by $M[Y,Z]=1$
if $Y\cap Z\neq\emptyset$, $M[Y,Z]=0$ otherwise. Then $\rank(M_X)=2^n-1$.
\end{lemma}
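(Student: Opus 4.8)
The plan is to prove Lemma~\ref{lem-rank-subset-mat} by exhibiting an explicit diagonalization of $M_X$ over the rationals (or over $\mathbb{Q}$, or even $\mathbb{R}$), showing that it has exactly one zero eigenvalue. Observe first that $M_X = J - E$, where $J$ is the all-ones $2^n\times 2^n$ matrix and $E[Y,Z]=1$ exactly when $Y\cap Z=\emptyset$, i.e. when $Z\subseteq X\setminus Y$. The matrix $E$ is, up to reindexing the columns by complementation $Z\mapsto X\setminus Z$, the ``containment'' matrix $T[Y,Z']=1$ iff $Y\subseteq Z'$. This $T$ is (a permuted copy of) the $n$-fold Kronecker power of the $2\times2$ matrix $\bigl(\begin{smallmatrix}1&1\\0&1\end{smallmatrix}\bigr)$, hence upper-triangular with $1$'s on the diagonal, so $\det E = \pm 1$ and $E$ is invertible.

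From here the cleanest route is: $M_X = E(E^{-1}J - I)$, so $\rank(M_X) = \rank(E^{-1}J - I)$, and since $J = \mathbf{1}\mathbf{1}^{\!\top}$ has rank $1$, the matrix $E^{-1}J - I = (E^{-1}\mathbf{1})\mathbf{1}^{\!\top} - I$ is a rank-one perturbation of $-I$. Such a matrix $uv^{\!\top} - I$ is singular with a one-dimensional kernel precisely when $v^{\!\top}u = 1$; its rank is then $2^n-1$, and its rank is the full $2^n$ otherwise. So the whole lemma reduces to the single scalar computation $\mathbf{1}^{\!\top} E^{-1}\mathbf{1} = 1$, i.e. to showing that the sum of all entries of $E^{-1}$ equals $1$. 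Equivalently, letting $m$ solve $Em = \mathbf{1}$ (the vector of ``Möbius-type'' coefficients), one must check $\sum_Z m_Z = 1$.

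Computing $E^{-1}$ explicitly is easy because $E$ is Kronecker-structured: the inverse of $\bigl(\begin{smallmatrix}1&1\\0&1\end{smallmatrix}\bigr)$ is $\bigl(\begin{smallmatrix}1&-1\\0&1\end{smallmatrix}\bigr)$, so $E^{-1}$ has entries $E^{-1}[Y,Z] = (-1)^{|Y\triangle Z|}$ when one is contained in the appropriate complement and $0$ otherwise — concretely, after undoing the complementation, $E^{-1}[Y,Z]=(-1)^{|Z\setminus Y|}$ if $Y\subseteq Z$ and $0$ else, which is just Möbius inversion on the subset lattice. Then $\sum_{Y,Z} E^{-1}[Y,Z] = \sum_{Z}\sum_{Y\subseteq Z}(-1)^{|Z\setminus Y|} = \sum_Z [\,Z=\emptyset\,] = 1$ by the binomial identity $\sum_{Y\subseteq Z}(-1)^{|Z\setminus Y|} = (1-1)^{|Z|}$. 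This gives $\mathbf{1}^{\!\top}E^{-1}\mathbf{1}=1$, hence $\rank(M_X)=2^n-1$, and a lower bound of the same form is immediate since $M_X$ has a zero row/column structure ruling out full rank (e.g.\ row $\emptyset$ is entirely zero, so $\rank(M_X)\le 2^n-1$ needs no argument beyond the zero row, while the diagonalization gives the matching lower bound).

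The main obstacle, such as it is, is purely bookkeeping: keeping the complementation reindexing straight so that ``$Y\cap Z=\emptyset$'' is correctly identified with a triangular (Kronecker) matrix, and not conflating the two natural orderings of $2^X$. An alternative, if one prefers to avoid the rank-one-update lemma, is to directly produce $2^n-1$ independent rows: for each nonempty $Z\subseteq X$ the combination $\sum_{Y\subseteq Z}(-1)^{|Z|-|Y|}\,(\text{row }Y\text{ of }M_X)$ can be shown to equal the indicator vector of $\{W : W\cap Z\ne\emptyset\}$ minus a correction, and these are visibly independent; but the eigenvalue computation above is shorter and is the version I would write up, citing~\cite{leung98} for the original statement.
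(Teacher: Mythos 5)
Your argument is correct, and it is worth noting that the paper does not actually prove this lemma at all: it is imported verbatim from Leung's paper \cite{leung98}, so there is no in-text proof to compare against. Your self-contained linear-algebra route is sound: writing $M_X=J-E$ with $E[Y,Z]=[\,Y\cap Z=\emptyset\,]$, observing that $E$ is the subset-containment matrix with columns permuted by complementation (hence a permuted Kronecker power of $\bigl(\begin{smallmatrix}1&1\\0&1\end{smallmatrix}\bigr)$ with determinant $\pm1$), and reducing $\rank(M_X)=\rank\bigl((E^{-1}\mathbf{1})\mathbf{1}^{\!\top}-I\bigr)$ to the scalar identity $\mathbf{1}^{\!\top}E^{-1}\mathbf{1}=1$ is a clean application of the rank-one-update fact, and the binomial computation $\sum_{Y\subseteq Z}(-1)^{|Z\setminus Y|}=[\,Z=\emptyset\,]$ closes it. (I checked the degenerate directions: the kernel of $uv^{\!\top}-I$ when $v^{\!\top}u=1$ is exactly $\mathrm{span}(u)$, and the all-zero row indexed by $\emptyset$ gives the matching upper bound for free.) One small bookkeeping slip, of the exact kind you warned yourself about: the displayed formula ``$E^{-1}[Y,Z]=(-1)^{|Z\setminus Y|}$ if $Y\subseteq Z$'' is the Möbius inverse $T^{-1}$ of the containment matrix, not $E^{-1}$ itself; since $E=TP$ for the complementation permutation $P$, one has $E^{-1}=PT^{-1}$, whose entries are $(-1)^{|Y\cap Z|}$ supported on pairs with $Y\cup Z=X$. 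This does not affect your conclusion, because $\mathbf{1}^{\!\top}PT^{-1}\mathbf{1}=\mathbf{1}^{\!\top}T^{-1}\mathbf{1}$ — the sum of all entries is invariant under the row permutation — but the write-up should either compute with $T^{-1}$ throughout and note this invariance explicitly, or state the correct entrywise formula for $E^{-1}$. With that fixed, your proof is a legitimate replacement for the external citation, and arguably more general: it shows the rank is $2^n-1$ over any field, since both $\det E=\pm1$ and $\mathbf{1}^{\!\top}E^{-1}\mathbf{1}=1$ hold over $\mathbb{Z}$.
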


As a first application, let us consider the language $\Sigma_n^*\setminus
U_n$ from Remark~\ref{rem-down-NFA} (see also Figure~\ref{fig-A2b}). Recall
that $\Sigma_n^*\setminus U_n$ contains all words where at least one letter
from $\Sigma_n$ does not occur.

\begin{proposition}
\label{prop-UFA-1}
For any $n>0$, $\nUFA(\Sigma^*_n\setminus U_n)=2^n-1$.
\end{proposition}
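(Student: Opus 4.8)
The plan is to sharpen the NFA lower bound of Remark~\ref{rem-down-NFA} to a UFA lower bound using the rank-based fooling set technique of Lemma~\ref{lem-fooling-UFA}, together with the explicit rank computation of Lemma~\ref{lem-rank-subset-mat}. First I would exhibit an upper bound: the minimal DFA for $\Sigma_n^*\setminus U_n$ has exactly $2^n-1$ states (the subsets of $\Sigma_n$ reached by recording the set of letters seen so far, with the full set $\Sigma_n$ collapsed since once every letter has appeared the word is permanently rejected — that sink is non-accepting and need not be counted, leaving $2^n-1$ live states). Since $\nUFA(L)\leq\nDFA(L)$, this gives $\nUFA(\Sigma_n^*\setminus U_n)\leq 2^n-1$.

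For the matching lower bound, I would pick the fooling set indexed by subsets of $\Sigma_n$. For each $Y\subseteq\Sigma_n$ let $x_Y$ be a word containing exactly the letters of $Y$ (say each once, in some fixed order) and let $y_Y$ be a word containing exactly the letters of $\Sigma_n\setminus Y$. Then $x_Y\,y_Z$ uses the letter set $Y\cup(\Sigma_n\setminus Z)$, which equals $\Sigma_n$ — i.e.\ $x_Y y_Z\notin\Sigma_n^*\setminus U_n$ — precisely when $Z\subseteq Y$, equivalently when $\Sigma_n\setminus Y$ and $Z$ are disjoint; otherwise some letter is missing and $x_Y y_Z\in\Sigma_n^*\setminus U_n$. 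So the fooling-set matrix $M$ satisfies $M[Y,Z]=1$ iff $(\Sigma_n\setminus Y)\cap Z\neq\emptyset$. Re-indexing the rows by the complement $Y'=\Sigma_n\setminus Y$, this is exactly the matrix $M_{\Sigma_n}$ of Lemma~\ref{lem-rank-subset-mat}, so it has rank $2^n-1$. By Lemma~\ref{lem-fooling-UFA}, any UFA for $\Sigma_n^*\setminus U_n$ has at least $2^n-1$ witness states, hence at least $2^n-1$ states. Combined with the upper bound this gives $\nUFA(\Sigma_n^*\setminus U_n)=2^n-1$.

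The main obstacle is purely bookkeeping: getting the complementation in the matrix entry right so that the fooling-set matrix lines up with the intersection matrix $M_X$ of Lemma~\ref{lem-rank-subset-mat} rather than its complement (the all-ones-minus-$M_X$ matrix, whose rank one would otherwise have to argue about separately). One must also check the degenerate cases $Y=\emptyset$ or $Y=\Sigma_n$ are handled correctly by the chosen witness words ($x_\emptyset=\emptyword$, $y_{\Sigma_n}=\emptyword$), and note that reindexing by complement is a permutation of rows and hence rank-preserving. Everything else — the DFA construction and the disjointness computation — is routine.
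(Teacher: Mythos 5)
Your proof is correct and follows essentially the same route as the paper: the upper bound from the $2^n-1$-state DFA, and the lower bound via Lemma~\ref{lem-fooling-UFA} applied to a fooling set indexed by subsets of $\Sigma_n$ whose matrix is (up to your row re-indexing by complementation, which is rank-preserving) exactly the intersection matrix $M_{\Sigma_n}$ of Lemma~\ref{lem-rank-subset-mat}. The paper simply avoids the re-indexing by taking the pairs $(x_{\neg\Gamma},x_{\neg\Gamma})$, so that the matrix entry is $1$ iff $\Gamma_i\cap\Gamma_j\neq\emptyset$ directly; otherwise the arguments coincide.
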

\begin{proof}
The upper bound is clear since already $\nDFA(\Sigma^*_n\setminus U_n)=
2^n-1$.

For the lower bound, consider $S =
\{(x_{\neg\Gamma},x_{\neg\Gamma})\}_{\Gamma\subseteq\Sigma_n}$
(recall that the words $x_\Gamma$ and $x_{\neg\Gamma}$ with $\Gamma
\subseteq \Sigma$ were introduced in the proof of Lemma~\ref{lem-U-n-V}).
The associated matrix
has $M_{\Sigma_n^*\setminus U_n,S}[x_{\neg\Gamma_i},x_{\neg\Gamma_j}]=1$ if
$x_{\neg\Gamma_i}x_{\neg\Gamma_j}\not\in U_n$, that is, if $\Gamma_i\cap\Gamma_j \neq
\emptyset$. Note that this is exactly the $M_X$ matrix from
Lemma~\ref{lem-rank-subset-mat}, instantiated with $X=\Sigma_n$. So
$\rank(M_{\Sigma_n^*\setminus U_n,S})=2^n-1$ and, by
Lemma~\ref{lem-fooling-UFA}, we can conclude that $\nUFA(\Sigma^*_n\setminus U_n) \geq 2^n-1$.
\qed
\end{proof}

The language $\down D_n$ from Section~\ref{sec-closure} is a small variation:
recall that $\down D_n$ contains all words $x\in\Sigma_n^*$
whose first suffix $x[2..]$ does not use all letters.
\begin{proposition}
\label{prop-UFA-2}
For any $n>0$,  $\nUFA(\down D_n)=2^n$.
\end{proposition}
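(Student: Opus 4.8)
The plan is to obtain the upper bound for free and to devote all the work to a matching lower bound via the rank method. For the upper bound, note that $\down D_n$ is the language $\down L'_{n+1}$ of Proposition~\ref{prop-sc-downward-dfa} (take $L'_{n+1}=D_n$), which is recognized there by a DFA with $2^n$ states; hence $\nUFA(\down D_n)\leq\nDFA(\down D_n)=2^n$, and it remains to prove $\nUFA(\down D_n)\geq 2^n$. For this I would exhibit a fooling set $S$ of $2^n$ pairs whose matrix $M_{\down D_n,S}$ has rank $2^n$, so that Lemma~\ref{lem-fooling-UFA} (bounding the number of witness states, hence of all states) gives the claim. Throughout I will use the description recalled just before the statement: $x\in\down D_n$ iff the suffix $x[2..]$ misses some letter of $\Sigma_n$ (which is vacuously true when $\size{x}\leq 1$).

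Reusing the words $x_\Gamma,x_{\neg\Gamma}$ from the proof of Lemma~\ref{lem-U-n-V} ($x_\Gamma$ with exactly one occurrence of each letter of $\Gamma$, and $x_{\neg\Gamma}$ with exactly one occurrence of each letter of $\Sigma_n\setminus\Gamma$), and fixing a letter $a_1\in\Sigma_n$, I would take
\[
S=\bigl\{(\emptyword,\,x_{\Sigma_n})\bigr\}\ \cup\ \bigl\{(a_1\,x_{\neg\Gamma},\,x_{\neg\Gamma}) : \emptyset\neq\Gamma\subseteq\Sigma_n\bigr\}\:,
\]
a set of $2^n$ pairs. Prepending $a_1$ on the left is the point of the construction: the relevant suffix of $a_1\,x_{\neg\Gamma}\,x_{\neg\Delta}$ is exactly $x_{\neg\Gamma}\,x_{\neg\Delta}$, whose set of letters is $\Sigma_n\setminus(\Gamma\cap\Delta)$, so this word lies in $\down D_n$ iff $\Gamma\cap\Delta\neq\emptyset$. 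A short case analysis then determines the matrix $M\eqdef M_{\down D_n,S}$: order the index set so the pair $(\emptyword,x_{\Sigma_n})$ comes first. The row it indexes is all $1$'s, because every $x_{\neg\Delta}$ lies in $\down D_n$ (dropping its first letter already loses a letter, or the word is empty); the column it indexes is $(1,0,\dots,0)^{\top}$, because for $\Gamma\neq\emptyset$ the word $a_1\,x_{\neg\Gamma}\,x_{\Sigma_n}$ has a suffix using \emph{all} of $\Sigma_n$ and so is never in $\down D_n$. Thus $M=\left(\begin{smallmatrix}1&\mathbf 1^{\top}\\ \mathbf 0&N\end{smallmatrix}\right)$, where $N$ is the $(2^n-1)\times(2^n-1)$ matrix with $N[\Gamma,\Delta]=1$ iff $\Gamma\cap\Delta\neq\emptyset$, indexed by the nonempty subsets of $\Sigma_n$.

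Since the first column of $M$ is $(1,0,\dots,0)^{\top}$, subtracting it from every other column turns $M$ into $\left(\begin{smallmatrix}1&\mathbf 0^{\top}\\ \mathbf 0&N\end{smallmatrix}\right)$, so $\rank(M)=1+\rank(N)$; and $N$ is precisely the matrix $M_{\Sigma_n}$ of Lemma~\ref{lem-rank-subset-mat} with its identically-zero $\emptyset$-row and $\emptyset$-column deleted, hence $\rank(N)=\rank(M_{\Sigma_n})=2^n-1$ and $\rank(M)=2^n$. Lemma~\ref{lem-fooling-UFA} then gives $\nUFA(\down D_n)\geq 2^n$, completing the proof. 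The one step that needs real care — and where the naive first guess fails — is the choice of the extra pair: replacing $(\emptyword,x_{\Sigma_n})$ by the seemingly simpler $(\emptyword,\emptyword)$ would make the new column all $1$'s instead of $(1,0,\dots,0)^{\top}$, and then the row indexed by $\Gamma=\Sigma_n$ becomes linearly dependent and the rank collapses back to $2^n-1$. Using the all-letters word $x_{\Sigma_n}$ is exactly what forces the new column to vanish off the diagonal, which is what rescues full rank.
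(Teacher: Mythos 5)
Your proof is correct and follows essentially the same route as the paper: the upper bound from the $2^n$-state DFA, and the lower bound via Lemma~\ref{lem-fooling-UFA} applied to the fooling set $\{(a_1x_{\neg\Gamma},x_{\neg\Gamma})\}_\Gamma$ augmented by one extra pair with left component $\emptyword$, reducing the rank computation to Lemma~\ref{lem-rank-subset-mat}. The only (harmless) difference is bookkeeping: the paper keeps all $2^n$ subsets plus the extra pair $(\emptyword,x_{\neg\emptyset})$, accepting a duplicated column in a $(2^n{+}1)\times(2^n{+}1)$ matrix, whereas you drop the $\Gamma=\emptyset$ pair to get a clean $2^n\times 2^n$ matrix of full rank.
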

\begin{proof}
The upper bound is clear since already $\nDFA(\down D_n)=2^n$.

For the lower bound, consider
$S=\{(a_1x_{\neg\Gamma},x_{\neg\Gamma})\}_{\Gamma\subseteq\Sigma_n} \cup
\{(\emptyword,x_{\neg\emptyset})\}$ where $a_1$ is the first letter of
$\Sigma_n$. The associated matrix $M_{\down D_n,S}$ has
\begin{align*}
 M_{\down D_n,S}[a_1x_{\neg\Gamma_i},x_{\neg\Gamma_j}]&=1 \text{  if and only if  }
\Gamma_i\cap\Gamma_j\neq\emptyset
\:,\\
M_{\down D_n,S}[\emptyword,x_{\neg\emptyset}]&=1 \:,\quad
M_{\down D_n,S}[a_1x_{\neg\Gamma_i},x_{\neg\emptyset}]=0 \:,
\\
\shortintertext{that is,}
M_{\down D_n,S} &= \left(
    \begin{array}{cc}
    \mbox{\LARGE $M_{\Sigma_n}$} & \begin{matrix} 0 \\[-0.4em] \vdots \\[-0.2em] 0 \end{matrix} \\
    \begin{matrix} 1\! & \!\!\cdots\!\! & \!1\! \end{matrix}\!\! & 1
    \end{array}
    \right)\:.
\end{align*}
We note that the column representing $x_{\neg\emptyset}$ (that is, the last column)
occurs twice in $M_{\down D_n,S}$, as the word $x_{\neg\emptyset}$ occurs twice
as the second component in $S$. One has $\rank(M_{\down D_n,S}) =
\rank(M_{\Sigma_n})+1=2^n$ since the rightmost column forbids combining
the last row with any of the earlier rows. We deduce $\nUFA(\down D_n)\geq 2^n$
with Lemma~\ref{lem-fooling-UFA}.
\qed
\end{proof}

We finally consider $\up E_{n}$ from Section~\ref{sec-closure}. Recall that $\up E_n$ contains all words over $\Sigma_n$ in which
at least one letter reappears.
\begin{proposition}
\label{prop-UFA-3}
For any $n>0$, $\nUFA(\up E_n)=2^n+1$.
\end{proposition}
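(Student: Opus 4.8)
The plan is to prove the two inequalities separately. The upper bound $\nUFA(\up E_n)\le 2^n+1$ is immediate: every DFA is unambiguous, so $\nUFA(\up E_n)\le\nDFA(\up E_n)$, and $\nDFA(\up E_n)=2^n+1$ was already established in Proposition~\ref{prop-sc-upward-dfa} (its witness is $E_{m-2}$, so take $m=n+2$). All the work is in the lower bound, which I would obtain from the rank-based fooling-set criterion of Lemma~\ref{lem-fooling-UFA}.

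As in the proofs of Propositions~\ref{prop-UFA-1} and~\ref{prop-UFA-2}, let $x_\Gamma$ (for $\Gamma\subseteq\Sigma_n$) denote a word in which every letter of $\Gamma$ occurs exactly once and no other letter occurs, so that $x_\emptyset=\emptyword$ and $x_\Gamma\,x_\Delta\in\up E_n$ iff $\Gamma\cap\Delta\neq\emptyset$. I would take the fooling set
\[
S \;=\; \{(x_\Gamma, x_\Gamma)\mid \emptyset\neq\Gamma\subseteq\Sigma_n\}\;\cup\;\{(\emptyword, a_1a_1),\,(a_1a_1, \emptyword)\}\,,
\]
which has $2^n+1$ pairs. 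Intuitively the $2^n-1$ pairs $(x_\Gamma,x_\Gamma)$ expose the $2^n-1$ nonempty ``seen-letters'' states of the minimal DFA; the pair $(\emptyword,a_1a_1)$ exposes the empty ``seen-letters'' state, whose residual is $\up E_n$ and for which $a_1a_1$ is a distinguishing suffix that $x_\emptyset=\emptyword$ could not supply inside the matrix $M_{\Sigma_n}$; and $(a_1a_1,\emptyword)$ exposes the accepting sink state, whose residual $\Sigma^*$ contains $\emptyword$ unlike $\up E_n$.

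Next I would write out $M=M_{\up E_n,S}$ in block form with the index order (nonempty $\Gamma$'s, then the pair $(\emptyword,a_1a_1)$, then the pair $(a_1a_1,\emptyword)$). Since appending $\emptyword$ keeps every $x_\Gamma$ and $\emptyword$ outside $\up E_n$ but keeps $a_1a_1$ inside, while prepending $a_1a_1$ always creates a repeated letter, the matrix is
\[
M \;=\; \begin{pmatrix} M'_{\Sigma_n} & \mathbf 1 & \mathbf 0 \\ \mathbf 0^{\mathsf T} & 1 & 0 \\ \mathbf 1^{\mathsf T} & 1 & 1 \end{pmatrix},
\]
where $M'_{\Sigma_n}$ is the submatrix of the $M_{\Sigma_n}$ of Lemma~\ref{lem-rank-subset-mat} obtained by deleting its (all-zero) row and column indexed by $\emptyset$, and $\mathbf 1,\mathbf 0$ are the constant column vectors of size $2^n-1$. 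By Lemma~\ref{lem-rank-subset-mat}, $\rank(M_{\Sigma_n})=2^n-1$, and since the deleted row and column were zero, $M'_{\Sigma_n}$ is an invertible $(2^n-1)\times(2^n-1)$ matrix. A short column-independence argument then gives $\rank(M)=2^n+1$: the first $2^n-1$ columns already have rank $2^n-1$ (their restriction to the first $2^n-1$ rows is $M'_{\Sigma_n}$); the next column has a $1$ in the $(\emptyword,a_1a_1)$-row where all earlier columns have $0$, so it is independent of them; and the last column $(\mathbf 0,0,1)^{\mathsf T}$ cannot be a combination of the others, because matching its $0$ in the $(\emptyword,a_1a_1)$-row forces the coefficient of the $(\emptyword,a_1a_1)$-column to be $0$, after which invertibility of $M'_{\Sigma_n}$ forces the remaining coefficients to vanish, contradicting the $1$ in the last coordinate. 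Hence $\rank(M)=2^n+1$, and Lemma~\ref{lem-fooling-UFA} yields $\nUFA(\up E_n)\ge 2^n+1$.

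The only delicate point is choosing the two auxiliary pairs correctly: $M_{\Sigma_n}$ alone is one short of full rank (its $\emptyset$-column vanishes), and $\up E_n$ has exactly one more Myhill--Nerode class (the accepting sink) than there are nonempty subsets of $\Sigma_n$, so precisely two extra pairs are needed, and they must be chosen so that their rows and columns ``see'' $a_1a_1$ and $\emptyword$ as above. Once the fooling set is fixed, the rank computation is the routine argument sketched here, anchored on Lemma~\ref{lem-rank-subset-mat}; I expect this bookkeeping, rather than any genuine difficulty, to be the main thing to get right.
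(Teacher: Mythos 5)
Your proof is correct and rests on the same two pillars as the paper's: the rank-based criterion of Lemma~\ref{lem-fooling-UFA} and the rank computation of Lemma~\ref{lem-rank-subset-mat}, applied to a fooling set built from the words $x_\Gamma$ together with $a_1a_1$ and $\emptyword$. The difference lies in how the final ``$+1$'' is extracted. The paper keeps all $2^n$ pairs $(x_\Gamma,x_\Gamma)$ (including $\Gamma=\emptyset$) plus the single extra pair $(a_1a_1,\emptyword)$; its matrix has rank only $2^n$, so it must invoke the \emph{witness-state} refinement of Lemma~\ref{lem-fooling-UFA}: since no $y_i$ in its fooling set belongs to $\up E_n$, no initial state is a witness state, and one extra state is gained that way. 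You instead add the pair $(\emptyword,a_1a_1)$, whose row $(\mathbf 0^{\mathsf T},1,0)$ is exactly what is needed to push the rank up to $2^n+1$, so the plain conclusion ``at least $\rank(M)$ states'' suffices and the discussion of witness states versus the initial state is not needed. Your block-matrix entries and the column-independence argument (anchored on the invertibility of $M'_{\Sigma_n}$, which does follow from Lemma~\ref{lem-rank-subset-mat} because the deleted $\emptyset$-row and $\emptyset$-column are zero) all check out, so both routes are valid; yours is marginally more self-contained at the cost of a slightly larger rank computation, while the paper's illustrates the use of the witness-state refinement that it also needs elsewhere.
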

\begin{proof}
The upper bound is clear since already $\nDFA(\up E_n)=2^n+1$.

For the lower bound, consider
$S=\{(x_\Gamma,x_\Gamma)\}_{\Gamma\subseteq\Sigma_n}\cup\{(a_1a_1,\emptyword)\}$.
The associated matrix $M_{\up E_n,S}$ has
\begin{align*}
M_{\up E_n,S}[x_{\Gamma_i},x_{\Gamma_j}]&=1 \text{ if and only if }
\Gamma_i\cap\Gamma_j\neq\emptyset\:,
\\
M_{\up E_n,S}[a_1a_1,x_{\Gamma_j}]&=1 \:,\quad
M_{\up E_n,S}[x_{\Gamma_i},\emptyword]=0 \:,
\\
\shortintertext{that is,}
M_{\up E_n,S} &= \left(
    \begin{array}{cc}
    \mbox{\LARGE $M_{\Sigma_n}$} & \begin{matrix} 0 \\[-0.4em] \vdots \\[-0.2em] 0 \end{matrix} \\
    \begin{matrix} 1\! & \!\!\cdots\!\! & \!1\! \end{matrix}\!\! & 1
    \end{array}
    \right)\:.
\end{align*}
Again one has $\rank(M_{\up E_n,S})=2^n$ so that, by
Lemma~\ref{lem-fooling-UFA}, any UFA for $\up E_n$ has at least $2^n$
witness states. Note however that the pairs $(x_i,y_i)$ in $S$ are such
that no $y_i$ belongs to $\up E_n$. Hence in any automaton accepting $\up
E_n$ the initial state is not a witness state. We conclude that $\nUFA(\up
E_n)\geq 2^n+1$.
\qed
\end{proof}



\section{Complexity of decision problems on closures}
\label{sec-decision-pbs}

In automata-based procedures for logic and verification, the state
complexity of automata constructions is not always the best measure of
computational complexity. In this section we gather some elementary
results on the complexity of subword-related decision problems for
automata: for finite automata $A$, $B$ we want to know whether
the accepted language $L(A)$ is downward
or upward closed, respectively, and  whether $L(A)$ and $L(B)$ have the same
downward or upward closures. These questions are in the
spirit of the work done
in~\cite{brzozowski2011b,kao2009,rampersad2012} for various notions of
closures.
Some of the results we give
are already known but are scattered in the literature and
sometimes even reappear as open questions (see, for example,~\cite{fu2011}).

\subsection{Deciding closedness}

Deciding whether $L(A)$ is upward-closed
or downward-closed is, unsurprisingly, $\PSPACE$-complete for NFAs, and
$\NL$-complete for DFAs. For upward-closedness this is already shown
in~\cite{heam2002}, and quadratic-time algorithms that decide
upward-closedness of $L(A)$ for a DFA $A$ already appear
in~\cite{arfi87,pin97}.

\begin{proposition}
Deciding whether $L(A)$ is upward-closed or downward-closed is
$\PSPACE$-complete when $A$ is an NFA, even in the
2-letter alphabet case.
\end{proposition}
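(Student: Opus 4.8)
The plan is to prove membership in $\PSPACE$ for both problems at once, and then to give a single polynomial-time reduction establishing $\PSPACE$-hardness for both, already over $\{a,b\}$.

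For the upper bound I would reuse the cheap closure constructions recalled in Section~\ref{sec-closure}: from the input NFA $A$ one builds, without adding any state, an NFA $A^\up$ recognising $\up L(A)$ (add a self-loop on every state for every letter) and an NFA $A^\down$ recognising $\down L(A)$ (add a silent transition doubling every original transition). Since $L(A)\subseteq\up L(A)$ and $L(A)\subseteq\down L(A)$ always hold, $L(A)$ is upward-closed iff $L(A^\up)\subseteq L(A)$, and downward-closed iff $L(A^\down)\subseteq L(A)$. Inclusion between NFAs is in $\PSPACE$ by the standard on-the-fly algorithm: to witness $L(C)\not\subseteq L(D)$, guess a word letter by letter while maintaining one guessed state of $C$ together with the whole subset of states of $D$ reachable so far, and accept once the $C$-state is final while the $D$-subset contains no final state; this uses polynomial space, and $\PSPACE$ is closed under complement. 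Hence both problems are in $\PSPACE$, for any alphabet.

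For hardness over $\Sigma=\{a,b\}$ I would reduce from the universality problem for NFAs over a binary alphabet, which is $\PSPACE$-complete. Given an NFA $B$ over $\{a,b\}$, build in logarithmic space an NFA $A$ over the same alphabet with
\[
L(A)\eqdef a^+\,b\cdot L(B)\;\cup\;a^*\;\cup\;b\,\Sigma^*
\:.
\]
Such an $A$ has $\size{B}+O(1)$ states: a two-state gadget reads $a^+$ and, on the next $b$, jumps to the initial states of a copy of $B$; in parallel there is a one-state component for $a^*$ and a two-state component for $b\,\Sigma^*$; the three are combined by declaring all their initial states initial and all their accepting states (including those of $B$) accepting. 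If $L(B)=\Sigma^*$, then every word over $\{a,b\}$ is all $a$'s, or starts with $b$, or has the form $a^i\,b\,w'$ with $i\geq1$ and $w'\in\Sigma^*=L(B)$, so $L(A)=\Sigma^*$, which is both upward- and downward-closed. If instead there is $w_0\in\Sigma^*\setminus L(B)$, then $a\,b\,w_0\notin L(A)$ — it is not in $a^*$, not in $b\,\Sigma^*$, and not in $a^+\,b\cdot L(B)$ since its only factorisation of the form $a^i\,b\,(\cdots)$ with $i\geq1$ leaves the remainder $w_0\notin L(B)$ — whereas $b\,w_0$ and $b\,a\,b\,w_0$ both lie in $b\,\Sigma^*\subseteq L(A)$. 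As $b\,w_0\subword a\,b\,w_0\subword b\,a\,b\,w_0$, the middle word sits outside $L(A)$ while both a subword and a superword of it sit inside, so $L(A)$ is neither upward- nor downward-closed. Thus $L(A)$ is upward-closed iff it is downward-closed iff $L(B)=\Sigma^*$, which together with the upper bound yields $\PSPACE$-completeness of both problems already for $\size{\Sigma}=2$.

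The delicate part is engineering the reduction so that it stays within two letters: there is no fresh separator symbol available, and one cannot afford to complement $B$ (the obvious construction listing the ``bad'' words outside $L(A)$ blows up exponentially). The role of the separator is played here by the prefix $a^+b$, while every ``generic'' word ($a^*$, words starting with $b$, and any spurious parse) is deliberately thrown into $L(A)$, so that non-universality of $B$ surfaces precisely as the sandwich $b\,w_0\subword a\,b\,w_0\subword b\,a\,b\,w_0$ that breaks both closedness properties at once. The one step I expect to require a careful line is verifying $a\,b\,w_0\notin L(A)$, i.e. that no alternative factorisation of $a\,b\,w_0$ puts it into $a^+\,b\cdot L(B)$; the remaining verifications (and the $w_0=\emptyword$ corner case) are routine.
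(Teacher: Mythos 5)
Your proposal is correct. The upper bound is the same as the paper's (build $A^\up$, $A^\down$ by the free closure constructions and test language inclusion/equivalence in $\PSPACE$). For hardness the overall strategy also coincides — arrange that the constructed language is upward-closed iff downward-closed iff universal, so that a single reduction settles both problems at once — but the concrete reduction is genuinely different. The paper reduces from reachability in length-preserving semi-Thue systems and uses the fact that its witness language contains \emph{all words of length not divisible by $n$}: any word of ``bad'' length then has both a subword and a superword of ``good'' length inside the language, so either closure property forces universality. You instead reduce from NFA universality over $\{a,b\}$ (itself $\PSPACE$-complete) and build the sandwich $b\,w_0\subword a\,b\,w_0\subword b\,a\,b\,w_0$ with the middle word outside $L(A)$ and the outer two inside, which kills both closure properties simultaneously when $L(B)\neq\Sigma^*$. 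Your key verification — that $a\,b\,w_0\notin a^+b\cdot L(B)$ because the factorisation $a^i b\,(\cdots)$ with $i\geq 1$ is forced to take $i=1$ (the second letter of $a\,b\,w_0$ is $b$) — is sound, and the construction of $A$ with $\size{B}+O(1)$ states and multiple initial states is logspace-computable. What your route buys is modularity: you delegate the Turing-machine/semi-Thue encoding to the known hardness of NFA universality instead of redoing it, at the cost of having to engineer the two-letter separator gadget $a^+b$ by hand; the paper's route is self-contained and gets the ``closed iff universal'' equivalence almost for free from the length argument.
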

\begin{proof}
Membership in $\PSPACE$ is clear since it is enough to decide whether
$A$ and $A^\up$ or $A^\down$ accept the same language.

$\PSPACE$-hardness can be shown by adapting the
proof for hardness of universality. Let $R$ be a length-preserving
semi-Thue system and $x$, $x'$ two strings of same length. It is
$\PSPACE$-hard to say whether there is a derivation $x\stepR{*} x'$, even for a fixed
$R$ over a 2-letter alphabet $\Sigma$. We reduce (the negation of) this
question to our problem.

Fix $x$ and $x'$ of length $n>1$: a word $x_1\,x_2\cdots x_m$ of
length $n\times m$ encodes a derivation if $x_1=x$, $x_m=x'$, and
$x_i\stepR{} x_{i+1}$ for all $i=1,\ldots,m-1$. The language
$L_{R,x,x'}$ of words that do \emph{not} encode a derivation from $x$
to $x'$ is regular and recognized by an NFA with $O(n)$ states. Now,
there is a derivation $x\stepR{*} x'$ if and only if $L_{R,x,x'}\neq\Sigma^*$. We
conclude by observing that $L_{R,x,x'}=\Sigma^*$ if and only if $L_{R,x,x'}$ is
upward-closed or, equivalently, downward-closed; this is because
$L_{R,x,x'}$ contains all words of length not divisible by $n>1$.
\qed
\end{proof}
\begin{proposition}
Deciding whether $L(A)$ is upward-closed or downward-closed is
$\NL$-complete when $A$ is a DFA, even in the
2-letter alphabet case.
\end{proposition}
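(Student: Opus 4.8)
The plan is to prove membership in $\NL$ and $\NL$-hardness separately, with the hardness already holding over the binary alphabet $\Sigma=\{a,b\}$.

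For membership, first complete $A$ by adding, if necessary, a single absorbing non-accepting sink state (a logspace operation not changing $L(A)$), obtaining a complete DFA $A=(\Sigma,Q,\delta,q_0,F)$; write $A_q$ for $A$ with initial state $q$. Since the superword relation is generated by single-letter insertions, $L(A)$ is upward-closed iff $uv\in L(A)\Rightarrow ucv\in L(A)$ for all $u,v\in\Sigma^*$ and $c\in\Sigma$, and grouping the $u$'s by the state $q=\delta(q_0,u)$ they reach, this is equivalent to: $L(A_q)\subseteq L(A_{\delta(q,c)})$ for every state $q$ reachable from $q_0$ and every $c\in\Sigma$. Dually, $L(A)$ is downward-closed iff $L(A_{\delta(q,c)})\subseteq L(A_q)$ for all such $q$ and $c$. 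The negation of either condition is an existential reachability statement checkable in nondeterministic logspace: guess, by a nondeterministic walk from $q_0$, a reachable state $q$; guess $c\in\Sigma$ and set $q'=\delta(q,c)$; then repeatedly guess a letter and update a pair of states starting from $(q,q')$, accepting if this pair ever lands in $F\times(Q\setminus F)$ (for ``not upward-closed'') or in $(Q\setminus F)\times F$ (for ``not downward-closed''). Hence the complement of the problem lies in $\NL$, and by the Immerman--Szelepcs\'enyi theorem ($\NL=$\,co-$\NL$) so does the problem itself.

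For hardness, I would reduce from the complement of $s$-$t$ reachability in directed acyclic graphs of out-degree at most $2$: this is $\NL$-complete, since reachability in such DAGs is $\NL$-complete (general graph reachability reduces to it by the standard layering and vertex-splitting constructions) and $\NL=$\,co-$\NL$, and we may assume $s\neq t$. Given such a DAG $G=(V,E)$ with source $s$ and target $t$, label the at most two out-edges of every vertex with distinct letters of $\Sigma=\{a,b\}$ and build the complete DFA $A$ with state set $V\cup\{\top\}$, initial state $s$, accepting set $\{t\}$, transitions following the labelled edges of $G$, and all remaining transitions (in particular both transitions out of $\top$) leading to the absorbing state $\top$; this is plainly logspace-computable. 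Then $L(A)$ is exactly the set of words labelling a walk from $s$ to $t$ in $G$, and since $G$ is acyclic every such walk is a simple path, so $L(A)$ is \emph{finite} (every walk has length less than $\size{V}$); moreover $\emptyword\notin L(A)$ because $s\neq t$. If $t$ is unreachable from $s$ then $L(A)=\emptyset$, which is both upward- and downward-closed. If $t$ is reachable then $L(A)$ is nonempty, hence not upward-closed (a nonempty upward-closed language is infinite, whereas $L(A)$ is finite) and not downward-closed (a nonempty downward-closed language contains $\emptyword$, which $L(A)$ does not). Therefore $L(A)$ is upward-closed $\iff$ $L(A)$ is downward-closed $\iff$ $t$ is not reachable from $s$, which yields the reduction for both problems simultaneously.

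No step looks genuinely hard: membership follows the familiar ``local check plus $\NL=$\,co-$\NL$'' pattern, and the only points needing care are completing $A$ first, verifying the local characterizations of up- and down-closedness, and recalling that reachability remains $\NL$-complete on bounded-out-degree DAGs so that two letters suffice.
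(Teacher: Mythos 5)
Your proof is correct, and the membership-in-$\NL$ half is essentially the paper's argument: both characterize (non-)closedness by a local condition ``some reachable $p$, letter $a$, and word $v$ separate $p$ from $\delta(p,a)$,'' verify it by a nondeterministic walk (you on a pair of states, the paper by bounding $\size{u}<n$ and $\size{v}<n^2$), and invoke $\NL=\coNL$. The hardness half reaches the same destination by a slightly different road. The paper reduces from DFA emptiness by truncating to $L(A)\cap\Sigma^{<n}$, so that the language is finite and hence upward-closed iff empty, and then obtains the downward case by complementing the DFA (as announced at the start of its proof). You instead reduce from (co-)reachability in out-degree-$2$ DAGs, where acyclicity makes the language finite for free, and you handle both closures with a single reduction by additionally arranging $\emptyword\notin L(A)$ (a nonempty downward-closed language must contain $\emptyword$). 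The underlying idea --- a finite nonempty language over a nonempty alphabet is never upward-closed --- is identical; your version buys a uniform treatment of both closure notions at the cost of having to recall that reachability stays $\NL$-complete on bounded-out-degree DAGs, while the paper's version is marginally more self-contained since it only needs plain DFA vacuity plus the complementation trick.
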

\begin{proof}
We only prove the result for upward-closure since
 $L$ is down\-ward-closed if and only if $\Sigma^*\setminus L$ is
upward-closed, and since one easily builds a DFA for the complement of
$L(A)$.

For membership in $\NL$,
we first observe that $L$ is upward-closed if and only if, for all
$u,v\in\Sigma^*$, $u\,v\in L$ implies $u\,a\,v\in L$ for all $a\in\Sigma$.
Therefore, $L(A)$ is not upward-closed -- for $A=(\Sigma,Q,\delta,q_\init,F)$ -- if and only if there are states $p,q\in
Q$, a letter $a$, and words $u,v$ such that $\delta(q_\init,u)=p$,
$\delta(p,a)=q$, $\delta(p,v)\in F$ and $\delta(q,v)\notin F$. If
such words exist, one can, in particular, find witnesses with  $\size{u}<n$ and
$\size{v}<n^2$ where $n=\size{Q}$ is the number of states of $A$.
Hence checking that $L(A)$ is not upward-closed can be performed in nondeterministic logarithmic space by
guessing $u$, $a$, and $v$ within the above length bounds, finding $p$
and $q$ by running $u\,a$ from $q_\init$, then running $v$ from
both $p$ and $q$. Since $\coNL=\NL$, we conclude that  
upward-closedness too is in $\NL$. 

For $\NL$-hardness, one may reduce from vacuity of DFAs, a well-known
$\NL$-hard problem that is essentially equivalent to $\GAP$, the  Graph
Accessibility Problem. Note that for any DFA, and in fact any NFA, $A$
with $n$ states the
following equivalences hold:
\begin{equation*}
L(A)\cap\Sigma^{< n}\text{ is upward-closed }
\iff
L(A)\cap \Sigma^{< n}=\emptyset
\iff
L(A)=\emptyset \:.
\end{equation*}
This provides the required reduction
since, given a DFA $A$, one easily
builds a DFA for $L(A)\cap\Sigma^{< n}$ in logspace.
\qed
\end{proof}

\subsection{Deciding equivalence modulo closure}
\label{ssec-equiv-mod-closure}

The question whether $\down L(A)=\down L(B)$ or, similarly, whether
$\up L(A)=\up L(B)$, is relevant in some settings where closures are
used to build regular over-approximations of more complex languages.

Bachmeier \textit{et al}.\  recently showed that the above two
questions are $\coNP$-complete when $A$ and $B$ are
NFAs~\cite[Section~5]{bachmeier2015}, hence ``easier'' than deciding
whether $L(A)=L(B)$. Here we give an improved version of their
result.

\begin{proposition}[after~\cite{bachmeier2015}]
1.\  Deciding whether $\down L(A)\subseteq \down L(B)$ or whether $\up
  L(A)\subseteq \up L(B)$ is $\coNP$-complete when $A$ and $B$ are
  NFAs.

2.\ Deciding $\down L(A)=\down L(B)$ or
  $\up L(A)=\up L(B)$ is $\coNP$-hard even when $A$ and $B$
are DFAs over a two-letter alphabet.

3.\ These problems are $\NL$-complete when restricting to NFAs over a
1-letter alphabet.
\end{proposition}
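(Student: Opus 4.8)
The plan is to handle the three parts in turn, reusing the machinery of Section~\ref{sec-closure}.

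\textbf{Part 1.} I would first record the elementary reductions $\down L(A)\subseteq\down L(B)\iff L(A)\subseteq\down L(B)$ and $\up L(A)\subseteq\up L(B)\iff L(A)\subseteq\up L(B)$, which follow from idempotence and monotonicity of the closures. Membership in $\coNP$ then amounts to showing that a shortest word $w$ witnessing $L(A)\not\subseteq\down L(B)$ has polynomial length. To see this, run $A$ on $w=w_1\cdots w_m$ to get an accepting run $p_0\step{w_1}p_1\cdots\step{w_m}p_m$, and in parallel run the powerset simulation of $B^\down$, obtaining sets $T_0,\ldots,T_m\subseteq Q_B$ with $w\notin\down L(B)$ iff $T_m\cap F_B=\emptyset$. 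The crucial point is that, since the $\emptyword$-transitions added to form $B^\down$ make every reachable powerset state closed under reachability in the graph of $B$, the sequence is \emph{decreasing}: $T_0\supseteq T_1\supseteq\cdots\supseteq T_m$. Hence the positions $0,\ldots,m$ split into at most $\size{Q_B}+1$ maximal blocks on which $T$ is constant, and on each such block the states $p_i$ must be pairwise distinct --- otherwise one could splice a loop of $A$ out of the run, which does not change $T_m$ (the value of $T$ at the splice point is the same at both ends of the removed loop) and yields a shorter witness, contradicting minimality. Thus $m<(\size{Q_B}+1)\size{Q_A}$, so a nondeterministic polynomial-time machine can guess $w$ and verify $w\in L(A)$ and $w\notin\down L(B)$ (the latter by running the NFA $B^\down$). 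The $\up$ case is symmetric, with $B^\up$ in place of $B^\down$ and an \emph{increasing} powerset sequence. For $\coNP$-hardness I would reduce from $3$-SAT: given $\varphi$ over $x_1,\ldots,x_k$ with clauses $C_1,\ldots,C_m$, work over $\Gamma=\{\langle i,0\rangle,\langle i,1\rangle\}_{1\le i\le k}$, take $A$ the $(k{+}1)$-state DFA accepting the assignment words $w_b=\langle 1,b_1\rangle\cdots\langle k,b_k\rangle$, and for each clause $C_j$ build a word $y_j$ of length $O(k)$ --- running through positions $1,\ldots,k$, allowing both letters $\langle i,0\rangle,\langle i,1\rangle$ where $C_j$ is unconstrained and only the falsifying letter at the three constrained positions --- so that $w_b\subword y_j$ holds exactly when $b$ falsifies $C_j$. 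With $B$ a tree-shaped DFA for $\{y_1,\ldots,y_m\}$ (of size $O(mk)$), one gets $\down L(A)\subseteq\down L(B)\iff\varphi$ is unsatisfiable; replacing each $y_j$ by the three-letter ``falsifying pattern'' word $u_j$ and using $u_j\subword w_b$ handles the $\up$ variant.

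\textbf{Part 2.} The automata in the reduction above are already DFAs, so only the alphabet has to be cut down to $\{a,b\}$; this is where I expect the main difficulty. Composing with a plain morphism $\Gamma^*\to\{a,b\}^*$ does not work: simple block codes are not faithful for the subword order (a binary subword of an encoded word need not be the encoding of a subword), so one must either design a padded, non-morphic encoding whose binary subwords mirror the $\Gamma$-subwords uniformly on both automata, or switch to a source problem that is already NP-hard over a binary alphabet. The fallback I would develop in detail is a direct reduction from such a problem --- shortest common supersequence or longest common subsequence of many binary strings, or the $\coNP$-complete universality problem for unary NFAs --- engineering two polynomial-size binary DFAs $A,B$ whose downward closures coincide iff the source instance is negative. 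Making the subword order carry the combinatorics of the chosen binary problem is the crux of part~2.

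\textbf{Part 3.} Over a one-letter alphabet, $\down L(A)$ is $\{a\}^*$ when $L(A)$ is infinite, $\{a^i:i\le N_A\}$ with $N_A<\size{Q_A}$ when $L(A)$ is finite and nonempty, and $\emptyset$ when $L(A)=\emptyset$ --- and likewise for $B$. Hence $\down L(A)\subseteq\down L(B)$ fails precisely when either $L(A)$ is infinite while $L(B)$ is finite, or some $a^i$ with $i<\size{Q_A}$ lies in $L(A)$ while $L(B)$ has no word of length $\ge i$. Each ingredient --- infinity of a unary NFA language, and existence of an accepting path of a prescribed length below the number of states --- lies in $\NL$ (guess the relevant short path, or a cycle that is both reachable from the initial states and co-reachable to a final one), and $\NL=\coNL$ is closed under the fixed Boolean combination involved; the $\up$ case is identical, and equality is two inclusion tests. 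For $\NL$-hardness I would reduce from graph reachability: a graph with source $s$ and target $t$ becomes a unary NFA $A$ (vertices as states, edges as $a$-transitions, $s$ initial, $t$ final) with $L(A)\neq\emptyset$ iff $t$ is reachable from $s$; pairing $A$ with a fixed two-state $B$ having $L(B)=\emptyset$ turns each of the four problems into deciding whether $L(A)=\emptyset$.
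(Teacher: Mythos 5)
Your parts 1 and 3 are essentially correct. For part 1 your unified product argument — splicing loops of $A$'s run out of blocks on which the powerset state of $B^\down$ (resp.\ $B^\up$) is constant — gives a witness bound of roughly $\size{Q_A}\cdot\size{Q_B}$, which suffices for $\coNP$ membership; the paper gets sharper bounds ($<n_B$ for downward closure, by deleting a letter and using that $\down L(A)$ is itself downward-closed, and $<n_A$ for upward closure by pumping in $A$), but both routes work. Your 3-SAT reduction with the blocks $\langle i,0\rangle\langle i,1\rangle$ versus a single falsifying letter is a valid hardness proof for part 1 over an alphabet of size $2k$, and part 3 matches the paper's argument (compare longest/shortest accepted lengths, reduce from emptiness for hardness).

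The genuine gap is part 2, which you flag but do not close. Listing candidate source problems (shortest common supersequence, LCS of many binary strings, unary NFA universality) is not a reduction, and the obstacle you correctly identify — making the subword order between two polynomial-size \emph{binary} DFAs carry an $\NP$-hard combinatorial question — is precisely where all the work lies. The paper resolves it with a concrete construction: reduce from validity of a DNF formula $\phi=C_1\lor\cdots\lor C_m$ over $k$ variables, encode clause index and valuation as the binary word $1^\ell 0\,x_1\cdots x_k$, take $A_\phi$ to be the $m(k{+}2)$-state DFA for $L_\phi=\{1^\ell 0\,x_1\cdots x_k \mid 0\leq\ell<m,\; x_1\cdots x_k\models C_{\ell+1}\}$ and $B_\phi$ a DFA for $L_\phi\cup 1^m0(0{+}1)^k$. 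The single $0$ acts as a separator: any word of $L_\phi$ that embeds into $1^m0\,x_1\cdots x_k$ must be of the form $1^\ell 0\,x_1\cdots x_k$ for the \emph{same} valuation $x$ (a $0$ mapped further right would leave too little room for the $k$ valuation bits), so $\up L(A_\phi)=\up L(B_\phi)$ iff every valuation satisfies some clause, i.e., iff $\phi$ is valid; the downward variant adds a transition $c_m\step{1}c_1$ and compares against $1^*0(0{+}1)^k$. Some such alphabet-collapsing device is indispensable, and without it your part 2 remains unproved.
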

\begin{proof}
1.\  Let $B=(\Sigma,Q,\delta,I,F)$ and $n_B=\size{Q}$. Assume that $\down
  L(A)\not\subseteq\down L(B)$ and pick a shortest witness
  $x=x_1\cdots x_\ell\in\Sigma^*$ with $x\in \down L(A)$ and
  $x\not\in\down L(B)$. We claim that $\size{x}<n_B$: indeed in the
  deterministic powerset automaton obtained from $B^\down$, the unique run
  $S_0\step{x_1}S_1\step{x_2}\cdots \step{x_\ell}S_\ell$ of $x$ is
  such that $Q=S_0\supseteq S_1\supseteq S_2\cdots \supseteq S_\ell\neq\emptyset$
  (recall the proof of Lemma~\ref{lem-dc-alphabet-size}).
  If $S_{i-1}=S_i$ for some $i$, a shorter witness is obtained by
  omitting the $i$th letter in $x$: this does not affect membership in
  $\down L(A)$ since this language is downward-closed. One concludes
  that the $S_i$ have strictly diminishing sizes, hence $\ell<n_B$.
  This leads to an $\NP$ algorithm deciding $\down L(A)\not\subseteq\down
  L(B)$: guess $x$ in $\Sigma^{<n_B}$ and check in polynomial time
  that it is accepted by $A^\down$ and not by $B^\down$.

  For upward closure the reasoning is even simpler: a shortest witness
  $x$ with $x\in\up L(A)$ and $x\not\in\up L(B)$ has length
  $\size{x}<n_A$: if $x$ is longer, a pumping lemma allows
  one to find a subword $x'\in\up L(A)$, and $x'\not\in\up L(B)$ since
  $x\not\in\up L(B)$.
  \\

2.\ $\coNP$-hardness is shown by reduction from validity of
DNF-formulae. Consider an arbitrary DNF formula $\phi=C_1\lor C_2\lor
\cdots \lor C_m$ consisting of $m$ conjunctions of literals where $k$ Boolean
variables $v_1,\ldots,v_k$ may appear, for example, $\phi = (v_1\land \neg v_2\land
v_4)\lor (v_2\land\cdots)\lor\cdots$. The language of all the
valuations, seen as words in $\{0,1\}^k$, under which $\phi$ holds true
is recognized by an NFA that has size $O(\size{\phi}^2)$.
We slightly modify this language so that we can use a DFA
instead of an NFA.  Let
$L_\phi=\{1^\ell 0 \, x_1 \cdots x_k\in\{0,1\}^*~|~ 0\leq\ell<m\land x_1\cdots x_k\models
C_{\ell+1}\}$. We build a DFA $A_\phi$, having $m(k+2)$ states, that recognizes $L_\phi$: see
Figure~\ref{fig-red-taut} where, for the sake of readability, the
picture uses wavy edges where $A_\phi$ recognizes a $1^\ell 0$ prefix,
and standard edges where it recognizes the encoding of a valuation $x_1\cdots x_k$ proper.
\begin{figure}[htbp]
\centering
\scalebox{0.75}{
  \begin{tikzpicture}[->,>=stealth',shorten >=1pt,node distance=6em,thick,auto,bend angle=30]
\tikzstyle{every state}=[minimum size=1.5em]
\node [state,initial] (c1) {$c_1$};
\node [state,below=3em of c1] (c2) {$c_2$};
\node [state,below=5em of c2] (cm) {$c_m$};
\node at ($(c2)!0.5!(cm)$) [minimum size=0em,inner sep=0.4em] (cdots) {$\rvdots$};

\node [state,right=3em of c1] (c11) {};
\node [state,right=3em of c11] (c12) {};
\node [state,right=3em of c12] (c13) {};
\node [state,right=3em of c13] (c14) {};
\node [state,right=3em of c14] (c15) {};
\node [state,right=3em of c15,accepting] (c16) {};

\node [state,right=3em of c2] (c21) {};
\node [state,right=3em of c21] (c22) {};
\node [state,right=3em of c22] (c23) {};
\node [state,draw=white,right=3em of c23] (c24) {};
\node [state,dashed,right=3em of c24] (c25) {};
\node [state,right=3em of c25,accepting] (c26) {};

\node [state,right=3em of cm] (cm1) {};
\node [state,draw=white,right=3em of cm1] (cm2) {};
\node [state,draw=white,right=3em of cm2] (cm3) {};
\node [state,draw=white,right=3em of cm3] (cm4) {};
\node [state,draw=white,right=3em of cm4] (cm5) {};
\node [state,right=3em of cm5,accepting] (cm6) {};

\path (c11) edge node {$1$} node [swap,outer sep=1.2em] {$\bm{v_1}$} (c12);
\path (c12) edge node {$0$} node [swap,outer sep=1.2em] {$\bm{\land\:\neg v_2}$} (c13);
\path (c13) edge [bend left=10] node {$0$} (c14);
\path (c13) edge [bend right=10] node [swap] {$1$} (c14);
\path (c14) edge node {$1$} node [swap,outer sep=1.2em] {$\bm{\land\: v_4}$} (c15);
\path (c15) edge [bend left=10] node {$0$} (c16);
\path (c15) edge [bend right=10] node [swap] {$1$} (c16);

\path (c21) edge [bend left=10] node {$0$} (c22);
\path (c21) edge [bend right=10] node [swap] {$1$} (c22);
\path (c22) edge node {$1$}  node [swap,outer sep=1.2em] {$\bm{v_2}$} (c23);
\path (c25) edge node {$1$}  node [swap,outer sep=1.2em] {$\bm{\land\:v_5}$} (c26);

\path (c23) edge [dashed] (c24);
\path (cm1) edge [dashed] (cm2);
\path (cm5) edge [dashed] (cm6);

{\tikzstyle{every edge}=[draw,decorate,decoration={snake,amplitude=.3mm,segment length=1.5mm,post length=2mm}]
\path (c1) edge [bend left=0] node {$1$} (c2);
\path (c2) edge [bend left=0,pos=0.4] node {$1$} (cdots);
\path (cdots) edge [bend left=0,pos=0.54] node {$1$} (cm);
\path (c1) edge node {$0$} (c11);
\path (c2) edge node {$0$} (c21);
\path (cm) edge node {$0$} (cm1);
}

  \end{tikzpicture}
}
\caption{DFA $A_\phi$ for $\phi = (v_1\land \neg v_2\land v_4)\lor (v_2\land\cdots
  \land v_5)\lor\cdots \lor C_m$ with $k=5$ variables.}
\label{fig-red-taut}
\end{figure}

Now let $B_\phi$ be a
DFA for $L_\phi\cup 1^m 0(0+1)^k$, where all valuations
are allowed after the $1^m0$ prefix, and observe that $\up
L(A_\phi)=\up L(B_\phi)$ if and only if $1^m 0(0+1)^k\subseteq\up
L(A_\phi)$. However, $1^m 0\,x_1\cdots x_k\in \up L(A_\phi)$ requires
that $1^\ell 0\,x_1\cdots x_k\in  L(A_\phi)$ for some $\ell\leq m$.
Finally, $\up L(A_\phi)=\up L(B_\phi)$ if and only if
 all valuations make $\phi$ true, that is, if
$\phi$ is valid. Since $A_\phi$ and $B_\phi$ are built in logspace
from $\phi$, this completes the reduction for equality of upward
closures.

For downward closures, we modify $A_\phi$ by adding a transition
$c_m\step{1}c_1$ so that the resulting $A'_\phi$ accepts all words $1^\ell 0\,
x_1 \cdots x_k$ such that $x_1\cdots x_k$ makes $C_{\ell'+1}$
true for $\ell'=\ell\mod m$. For $B$ we now take a DFA for $1^*0(0+1)^k$ and see that $\down
L(A'_\phi)=\down L(B)$ if and only if all valuations make $\phi$ true.
\\

3.\  In the 1-letter case, comparing upward or downward closures
amounts to comparing the length of the shortest or longest word, respectively,
accepted by the automata. This is easily done in nondeterministic
logspace. And since $\up L(A)=\down L(A)=\emptyset$ if and only if
$L(A)=\emptyset$, $\NL$-hardness is shown by reduction from emptiness
of NFAs, that is, a question ``is there a path from an initial state to
an accepting state'' which is
just another version of $\GAP$, the Graph Accessibility Problem.
\qed
\end{proof}

A special case of language comparison is testing for universality. The question whether
$\up L(A)=\Sigma^*$ is trivial since it amounts to asking whether
$\emptyword$ is accepted by $A$. For downward closures one has the
following:
\begin{proposition}[after~\cite{rampersad2012}]
Deciding whether $\down L(A)=\Sigma^*$ when $A$ is an NFA over $\Sigma$
is $\NL$-complete.
\end{proposition}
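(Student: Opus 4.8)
The plan is to reduce the problem to a reachability-style condition on $A$ that can be checked in $\NL$ and from which $\NL$-hardness is easy to read off. Write $A=(\Sigma,Q,\delta,I,F)$ with $n=\size{Q}$, and assume $\Sigma\neq\emptyset$ (the case $\Sigma=\emptyset$ being trivial). The key step is to establish the following characterization: \emph{$\down L(A)=\Sigma^*$ if and only if there is a state $q$ that is reachable from some initial state, from which some final state is reachable, and such that for every letter $a\in\Sigma$ there is a transition $p\step{a}p'$ with $q$ reaching $p$ and $p'$ reaching $q$.} Informally, $q$ must lie on a successful path and on a closed walk whose label uses every letter of $\Sigma$ (concatenating the walks $q\to p\step{a}p'\to q$ over all $a$ produces one such walk).

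For the ``$\Leftarrow$'' direction I would take the single closed walk $\gamma$ at $q$ obtained by concatenating, over all $a\in\Sigma$, the closed walks $q\to p\step{a}p'\to q$; its label $u$ uses every letter of $\Sigma$. Prefixing a path from $I$ to $q$ with label $x_0$ and appending a path from $q$ to $F$ with label $x_1$, the word $x_0\,u^m\,x_1$ lies in $L(A)$ for every $m$, and $u^m$ has every length-$m$ word as a subword (take the $i$th letter from the $i$th copy of $u$); hence $\down L(A)=\Sigma^*$. For the ``$\Rightarrow$'' direction I would apply $\down L(A)=\Sigma^*$ to the word $w=(a_1\cdots a_k)^{n+1}$, get an accepting path of $A$ whose label contains $w$ as a subword, fix an embedding, and record the state of this path just before each of the $n+1$ consecutive blocks $a_1\cdots a_k$; since there are $n+1$ such states and only $n$ states, two coincide, and the portion of the path between them is a closed walk that still contains a full block $a_1\cdots a_k$, hence uses every letter, and which is reachable from $I$ and co-reaches $F$ because it sits inside the accepting path. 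This pigeonhole argument (and making the ``contains a full block'' bookkeeping precise) is the main obstacle and essentially the only non-routine step.

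Given the characterization, membership in $\NL$ is immediate. A nondeterministic logspace machine guesses $q\in Q$, checks by the usual guess-a-path procedure that $q$ is reachable from an initial state and that some final state is reachable from $q$, then loops over the letters $a\in\Sigma$ and, for each, guesses a transition $p\step{a}p'$ and checks that $q$ reaches $p$ and $p'$ reaches $q$; it accepts iff all checks succeed. Across the loop only $q$ and a letter counter are stored, so the total space is logarithmic, and the machine accepts exactly when the characterization holds. Hence deciding $\down L(A)=\Sigma^*$ is in $\NL$ (and so is its complement, since $\NL=\coNL$).

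For $\NL$-hardness I would reduce from $\GAP$. Given a directed graph $G$ with source $s$ and target $t$, build an NFA $A$ over the one-letter alphabet $\{a\}$ whose states are the vertices of $G$, with an $a$-transition for each edge of $G$ plus a self-loop $t\step{a}t$, initial state $s$, and $F=\{t\}$. Over a one-letter alphabet the characterization says precisely that $\down L(A)=\{a\}^*$ iff some state lying on a non-trivial cycle is reachable from $s$ and co-reaches $t$: taking $q=t$ (which carries a self-loop), this holds as soon as $t$ is reachable from $s$, and conversely any witnessing $q$ forces $t$ reachable from $s$ (since $q$ is reachable from $s$ and reaches $t$). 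The construction is plainly logspace-computable and $\GAP$ is $\NL$-hard, which completes the proof.
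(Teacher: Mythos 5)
Your proof is correct and follows essentially the same route as the paper: both rest on the characterization that $\down L(A)=\Sigma^*$ iff some state $q$ lying on a successful path admits, for each $a\in\Sigma$, a closed walk of the form $q\step{*}\step{a}\step{*}q$, and both obtain $\NL$-hardness by a logspace reduction from graph reachability (the paper phrases it as NFA emptiness, with letter-loops added on accepting states, which is your construction in disguise). The only differences are that you supply a proof of the characterization (your pigeonhole argument on $(a_1\cdots a_k)^{n+1}$ is sound) where the paper cites Rampersad \textit{et al}., and you give an explicit guess-and-check $\NL$ procedure where the paper observes that the characterization is an $\FO+\TC$ sentence and invokes $\FO+\TC\subseteq\NL$.
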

\begin{proof}
Rampersad \textit{et al}.\  show that the problem can be solved in
linear time~\cite[Section~4.4]{rampersad2012}. Actually the
characterization they use, namely
``$\down L(A)=\Sigma^*$ if and only if $A=(\Sigma,Q,\delta,I,F)$ has a
state $q\in Q$ with $I\step{*}q\step{*}F$ and such that for any
$a\in\Sigma$ there is a path of the form $q\step{*}\step{a}\step{*}q$
from $q$ to itself'',
is a $\FO+\TC$ sentence on $A$ seen as a directed labeled graph, hence can be
checked in $\NL$~\cite{immerman87b}.
$\NL$-hardness can be shown by reduction from
emptiness of NFAs, for example, by adding loops $p\step{a}p$ on any accepting
state $p\in F$ and for every $a\in\Sigma$.
\qed
\end{proof}



\section{Concluding remarks}
\label{sec-concl}

We considered the state complexity of ``closure languages'' obtained by
starting from an arbitrary regular language and closing it with all its
subwords or all its superwords. These closure operations are
essential when reasoning with subwords~\cite{KS-fosubw}. We completed
the known results on closures by providing exact state complexities in the
case of unbounded alphabets, and by demonstrating an exponential lower
bound on downward closures even in the case of a two-letter alphabet.

We also considered the dual notion of computing interiors.
The nondeterministic state complexity of interiors is a new problem that we
introduced in this article and for which we show
doubly-exponential upper and lower bounds. From this we can deduce
an exponential state complexity for the upward and downward closures of
languages represented via alternating automata.
\\

These results contribute to a more general research agenda: what are the ``best''
data structures and algorithms for reasoning with subwords and
superwords? The algorithmics of subwords and superwords has mainly been
developed in string matching and
combinatorics~\cite{baezayates91,elzinga2008}. When considering subwords
and superwords for sets of strings rather than individual strings -- in
matching and combinatorics~\cite{tronicek2005} but also in other fields
like model-checking and constraint solving~\cite{hooimeijer2011,KS-fosubw} --, there
are many different ways of representing downward-closed and upward-closed
sets. Automata-based representation are not always the preferred option;
see, for example, the SREs used for downward-closed languages
in~\cite{abdulla-forward-lcs}. The existing trade-offs between all the
available options are not yet well understood and certainly deserve more
scrutiny.
In this direction, let us mention~\cite[Theorem~2.1(3)]{birget93} showing
that if $\nDFA(L)=n$ then $\min(L)\eqdef\{x\in L ~|~ \forall y\in L :
y\subword x\implies y=x\}=L\setminus(L\shuffle \Sigma)$ may have
$\nNFA(\min(L))=(n-2)2^{n-3}+2$, to be contrasted with $\nNFA(\up L)\leq n$. This suggests that it is more efficient to
represent $\up L$ directly than by its minimal elements.

\subsubsection*{Acknowledgments.} We thank S.\ Schmitz and the anonymous
reviewers for their many comments and suggestions that helped improve the final
version of this article.



\section*{References}
\bibliographystyle{plain}
\bibliography{subwords}

\end{document}